\documentclass[a4paper,11pt,USenglish,runningheads]{llncs}
\usepackage{fullpage}
\usepackage{microtype}

\bibliographystyle{plain}

\usepackage{tikz}
\usetikzlibrary{positioning,shapes,chains,backgrounds,arrows,chains,fit,snakes}

\usepackage[linesnumbered,vlined,ruled]{algorithm2e}

\usepackage[T1]{fontenc}
\usepackage{amsmath}
\usepackage{amsfonts}
\usepackage{amssymb}

\usepackage{comment}

\newcommand{\Oh}{\mathcal{O}}
\DeclareMathOperator*{\argmin}{arg\,min}

\newcommand{\pos}{\mathrm{pos}}
\newcommand{\suff}{\mathrm{suf}}

\newcommand{\stdperm}{\Psi}
\newcommand{\sa}{\mathrm{SA}}
\newcommand{\LCP}{\mathrm{LCP}}
\newcommand{\RMQ}{\mathrm{RMQ}}
\newcommand{\lcp}{\mathrm{lcp}}

\newcommand{\bwt}{\mathrm{BWT}}
\newcommand{\ibwt}{\mathrm{IBWT}}

\newcommand{\pair}[1]{\langle{#1}\rangle}

\newcommand{\multiset}[1]{\{\!\!\{{#1}\}\!\!\}}

\newcommand{\Ell}{\mathcal{L}}
\newcommand{\hp}{\mathcal{P}}
\newcommand{\aut}{\mathcal{A}}

\begin{document}

%-------------------------------------------------------------------------
% Titlepage for LNCS style
%-------------------------------------------------------------------------

\title{String Inference from Longest-Common-Prefix Array}
\titlerunning{String Inference from Longest-Common-Prefix Array}

\author{Juha K\"{a}rkk\"{a}inen\inst{1}
\and Marcin Pi\c{a}tkowski\inst{2}
\and Simon J. Puglisi\inst{1}}	

\institute{Helsinki Institute of Information Technology (HIIT) and\\
Department of Computer Science, University of Helsinki, Finland\\[0.1cm]
\and Faculty of Mathematics and Computer Science,\\
Nicolaus Copernicus University, Toru{\'n}, Poland\\[0.1cm]
    \email{\{juha.karkkainen,simon.puglisi\}@cs.helsinki.fi\\ marcin.piatkowski@mat.umk.pl}
}

\authorrunning{J. K\"{a}rkk\"{a}inen, M. Pi\c{a}tkowski, S. J. Puglisi}

\maketitle

\begin{abstract}

  The suffix array, perhaps the most important data structure in
  modern string processing, is often augmented with the longest common
  prefix (LCP) array which stores the lengths of the LCPs for
  lexicographically adjacent suffixes of a string. Together the two
  arrays are roughly equivalent to the suffix tree with the LCP array
  representing the tree shape.

  In order to better understand the combinatorics of LCP arrays, we
  consider the problem of inferring a string from an LCP array, i.e.,
  determining whether a given array of integers is a valid LCP array,
  and if it is, reconstructing some string or all strings with that
  LCP array. There are recent studies of inferring a string from a
  suffix tree shape but using significantly more information (in the
  form of suffix links) than is available in the LCP array.

  We provide two main results.  (1) We describe two algorithms for
  inferring strings from an LCP array when we allow a generalized
  form of LCP array defined for a multiset of cyclic strings: a linear
  time algorithm for binary alphabet and a general algorithm 
  with polynomial time complexity for a constant alphabet size. 
  % Both algorithms output a compact representation of all
  % corresponding string sets.
  % 
  % We describe a linear time algorithm for
  % inferring a string from an LCP array, when we allow a generalized
  % form of LCP array defined for a multiset of cyclic strings, and
  % restrict the alphabet to be binary. Furthermore, the algorithm
  % outputs a linear size representation of all corresponding string
  % sets.  
  (2) We prove that determining whether a given integer array
  is a valid LCP array is NP-complete when we require more restricted
  forms of LCP array defined for a single cyclic or non-cyclic string
  or a multiset of non-cyclic strings. The result holds whether or not
  the alphabet is restricted to be binary. In combination, the two
  results show that the generalized form of LCP array for a multiset
  of cyclic strings is fundamentally different from the other more
  restricted forms.

  % The suffix array is often augmented with the longest common prefix
  % (LCP) array which is, in essence, a representation of the suffix tree
  % shape. We consider the problem of inferring a string from an LCP array,
  % i.e., determining whether a given array of integers is a valid LCP
  % array, and if it is, reconstructing some text or all texts with that
  % LCP array. We provide two results. (1) We describe a linear time
  % algorithm for inferring a string from an LCP array that contains a single
  % zero (indicating a binary alphabet). For a valid LCP array the
  % algorithm outputs a Burrows-Wheeler transform (BWT), the inversion of
  % which produces a collection of cyclic strings whose generalized LCP array
  % is identical to the input. Furthermore, the algorithm outputs a linear size
  % representation of all such BWTs. (2) We prove that determining
  % whether one of the valid BWTs produced by the above algorithm inverts
  % to a single (cyclic) string rather than a set of strings is NP-hard.
  % This shows that reverse engineering an LCP array is hard if we
  % insist that the result is a single string, but easy over a binary
  % alphabet if the result can be a collection of strings. The latter case
  % for a larger alphabet remains an open problem.
\end{abstract}

\keywords{LCP array, string inference, BWT, suffix array, suffix tree, NP-hardness}

%\newpage

%\setcounter{page}{1}

\section{Introduction}

For a string $X$ of $n$ symbols, the suffix array (SA)~\cite{mm1993}
contains pointers to the suffixes of $X$, sorted in lexicographical
order.  The suffix array 
% has a wealth of applications in string processing, and 
is often augmented with a second array --- the longest
common prefix (LCP) array --- storing the length of the longest common
prefix between lexicographically adjacent suffixes; i.e., $\LCP[i]$ is
the length of the LCP of suffixes $X[\sa[i]..n)$ and $X[\sa[i-1]..n)$.
The two arrays are closely connected to the suffix tree~\cite{W73} ---
the compacted trie of all the string's suffixes: the entries of SA
correspond to the leaves of the suffix tree, and the LCP array entries
tell the string depths of the lowest common ancestors of adjacent
leaves, defining the shape of the tree (see Fig.~\ref{fig:terminated} in
the appendix). For decades these data structures have been central to
string processing; see~\cite{ACFGM16} for a history and an overview,
and \cite{ako2004,a1985,g1997,s2003,ohl2013} for further details on
myriad applications.

Given both the suffix and the LCP array, the corresponding string is
unique up to renaming of the characters and is easy reconstruct: zeros
in the LCP array tell where the first character changes in the
lexicographical list of the suffixes, and the suffix array tells how
to permute those first characters to obtain the string.
% provides the permutation from the lexicographical order to the string order.
Given just the suffix array, we can easily reconstruct a corresponding
string where all characters are different, and it is not difficult to
characterize strings with a given suffix
array~\cite{BIST03,SS08,KTV13}. In essence, the suffix array
determines a set of positions in the LCP array that must be
zero. Specifically, for any $i$ let $j$ and $k$ be integers such that
$\sa[j]=\sa[i-1]+1$ and $\sa[k]=\sa[i]+1$. Then, if $k<j$, we must
have $\LCP[i]=0$.  For any other position, we can freely and
independently decide whether the value is zero or not, and as
described above, the zero positions together with the suffix array
determine the string.
% (up to renaming of characters). 
% Thus we have a full characterization of all the strings having a given
% permutation as their suffix array.

\medskip
In this paper, we consider the problem of similarly reconstructing
strings from an LCP array without the suffix array. As mentioned
above, the LCP array determines the shape of the suffix tree, i.e.,
the suffix tree without edge or leaf labels. Notice that the LCP array
specifies the label lengths for internal edges but not for leaf edges,
which would allow trivial inference of the suffix array.  String
inference from the suffix tree shape has recently been considered by
three different sets of authors~\cite{IIBT14,CR14,SV15}. However, all
of them assume that the suffix tree is augmented with significant additional
information, namely {\em suffix links} (see Fig.~\ref{fig:terminated}), which
makes the task much easier. Indeed, our new algorithms
%the algorithm we describe in Section~\ref{sec:SILA} 
essentially reconstruct suffix links from the
LCP array.  According to Cazaux and Rivals~\cite{CR14}, the case
without suffix links was considered but not solved
in~\cite{MastersThesis}.  We are also aware that others have considered
it but without success~\cite{Ami2014}.

To fully define the problem, we have to specify what kind of strings
we are trying to infer. Often suffix trees and suffix arrays are
defined for \emph{terminated strings} that are assumed to end with a
special symbol \$ that is different from and lexicographically smaller
than any other symbol. The alternative is an \emph{open-ended string}
where no assumption is made on the last symbol. For suffix and LCP
arrays the only change from omitting the terminator symbol is dropping
the first element (which is always zero in the LCP array), but the
suffix tree can change considerably because some suffixes can be
prefixes of other suffixes and thus are not represented by a leaf
(see Fig.~\ref{fig:open-ended}). Inferring open-ended strings from a suffix
tree (with suffix links) is studied by Starikovskaya and
Vildh{\o}j~\cite{SV15}, who show that any string can be appended by
additional characters without changing the suffix tree shape (thus the
term open-ended). However, such an extension can change the suffix and
LCP arrays a great deal (see Fig.~\ref{fig:extended}), i.e., with the arrays
a string is never truly open-ended but has at least an implicit
terminator.

% As mentioned, for LCP arrays there is no essential difference between
% terminated and open-ended strings, and in fact, algorithms must treat
% the end of an open-ended string as if there was an implicit
% terminator. Since 

% The algorithm we present in Section~\ref{sec:SILA} works only for a
% strictly binary alphabet without even an implicit terminator
% symbol. 
To get rid of even an implicit terminator, we consider
%Therefore we primarily deal with 
a third type of strings,
\emph{cyclic strings}, where we use rotations in place of suffixes
(see Figs.~\ref{fig:cyclic-with-terminal}--\ref{fig:non-primitive}).
For a terminated string, replacing suffixes with rotations causes no
changes to the suffix/rotation array or the LCP array. Thus any
integer array that is a valid LCP array for a terminated string 
% (and thus without the first element a valid LCP array for an
% open-ended string)
is always a valid LCP array for a cyclic string too, but the
opposite is not true. For example, the LCP array for the cyclic string
$aababa$ is $(2,1,3,0,2)$, which is not a valid LCP array for any
non-cyclic string. In this sense, the cyclic string case is 
strictly more general.
%a generalization of the non-cyclic case. 
An even more striking example
%of the difference between cyclic and non-cyclic case 
is a
non-primitive string, such as $abab$, that has two or more identical
rotations. For reasons explained below, instead of rotations we use
\emph{cyclic suffixes} which are infinite repetitions of
rotations. Thus the LCP array for the cyclic string $abab$ is
$(\omega,0,\omega)$, where $\omega$ denotes the positions of two
adjacent identical cyclic suffixes.

Finally, we may have a joint suffix array for a collection of strings,
where we have all suffixes of all strings in lexicographical order,
and the corresponding LCP array. In the terminated version, each
string is terminated with a distinct terminator symbol.  If we have an
LCP array for a collection of open-ended strings, adding the
terminator symbols simply prepends one zero for each terminator.  The
LCP array for a collection of terminated strings is identical to the
LCP array of the concatenation of the strings. Thus the generalization
from single strings to string sets does not add to the set of valid
LCP arrays for terminated strings, but it does for cyclic strings. For
example the LCP array for a string set $\{aa,b\}$ is $(\omega,0)$,
which is not a valid LCP array for any single string.
For multiple cyclic strings, it is important to use cyclic suffixes
instead of rotations because the result can be different (e.g., the
set $\{ab,aba\}$).

Now we are ready to formally define the problem of String Inference
from LCP Array (SILA). In the decision version, we are given an
array of integers (and possibly $\omega$'s) and asked if the array is
a valid LCP array of some string. If the answer is yes, the
reporting version may also output some such string, and possibly
a characterization of all such strings. Different variants are
identified by a prefix: S for a string set; T, O, or C for
terminated, open-ended or cyclic; and B for a binary alphabet (where
terminators are not counted).
% \begin{itemize}
% \item S: a solution can be a collection of strings instead of a single
%   string.
% \item T, O, or C: a solution must be a terminated, open-ended or
%   cyclic string (set).
% \item B: The alphabet of a solution is restricted to be binary. For
%   terminated strings, the terminators are not counted for the
%   alphabet.
% \end{itemize}
For example, BCSSILA stands for Binary Cyclic String Set Inference
from LCP Array.
As discussed above,
and summarized in the following result (with a proof in the appendix),
the non-cyclic variants are essentially equivalent, 
but the cyclic variants are more general. 
% The following lemma summarizes some of the
% discussion and shows
% that there are polynomial time reductions from BTSILA to all other
% non-cyclic forms of the problem.
%
\begin{proposition}
\label{prop:reductions}
There are polynomial time reductions from BTSILA to BOSILA, BTSSILA,
BOSSILA, TSILA, OSILA, TSSILA, and OSSILA.
\end{proposition}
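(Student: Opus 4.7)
The plan is to exhibit, for each of the seven target problems, a polynomial-time many-one reduction from BTSILA, organized around three elementary manipulations of the LCP array. First, I would use the fact (already noted in the paper) that the LCP array of $X\$$ is exactly $[0,L_X]$, where $L_X$ is the open-ended LCP array of $X$; stripping the first entry therefore yields the reduction BTSILA $\to$ BOSILA, with correctness in both directions immediate. Second, I would exploit the observation that in a sorted suffix list each zero of the LCP array flags a change of first character, so for any terminated single-string realization the number of zeros in $L$ equals the alphabet size. A TSILA instance with at most three zeros therefore forces any realizer to use at most three characters, i.e.\ to be binary terminated; hence BTSILA $\to$ TSILA is the identity gated by this zero-count check (map to a fixed no-instance otherwise). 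An analogous check handles BOSILA $\to$ OSILA, and composing with the first step handles BTSILA $\to$ OSILA.

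The third and most delicate step is forcing a singleton realization in the set variants BTSSILA, BOSSILA, TSSILA, OSSILA. The set problems are strictly more expressive than their single-string counterparts: for instance $[0,0,0,0]$ is the LCP array of $\{a,b\}$ (four distinct first characters: two terminators plus $a$ and $b$) but of no single binary terminated string. I would use the structural observation that a set of $k$ terminated strings places its $k$ distinct terminators as the first $k$ suffixes of the joint suffix array, so the first $k+1$ entries of its LCP array are all zero; in particular, any array with $L[2]>0$ admits only $k=1$ realizations. The plan is to splice a short binary gadget into $L$ that (a) forces $L'[2]>0$ in the transformed array, (b) keeps the total length polynomial in $|L|$, and (c) is rigid enough that every BTSSILA realization of $L'$ peels off to a single-string BTSILA realization of $L$, and conversely every BTSILA realization of $L$ extends to a BTSSILA realization of $L'$. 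Once BTSILA $\to$ BTSSILA is in hand, composition with the reductions of the first two steps yields BTSILA $\to$ BOSSILA, TSSILA, and OSSILA.

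The main obstacle will be the gadget construction in the third step. It must simultaneously stay within the binary alphabet, impose the leading-zero pattern that rules out $k \geq 2$, and preserve the biconditional \emph{valid BTSILA for $L$ iff valid BTSSILA for $L'$} without introducing spurious single-string realizations of $L'$ that do not come from realizations of $L$. The first two steps are essentially routine zero-bookkeeping on the array; the third is a genuine combinatorial construction, and I expect the bulk of the technical work to sit there.
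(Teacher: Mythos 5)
Your first step (strip the leading zero to get BOSILA) is exactly the paper's reduction, and your second step is the paper's in spirit, but the constant is wrong in a way that breaks soundness. By your own observation the number of zeros in the LCP array of a terminated string equals the number of distinct non-terminator symbols, so a binary terminated string has \emph{at most two} zeros (the leading one plus at most one more separating the $a$-suffixes from the $b$-suffixes). Gating on ``at most three zeros'' admits arrays such as $[0,0,0]$, which is a no-instance of BTSILA but a yes-instance of TSILA (realized by $abc\$$); the identity map gated by that check is therefore not a valid reduction. The paper's gate is ``a leading zero and at most one other zero.''

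The more serious gap is the third step. You correctly identify that the set variants need an argument forcing $k=1$, but the gadget you propose --- splicing entries into $L$ to force $L'[2]>0$ while preserving realizability in both directions --- is never constructed, and you yourself defer ``the bulk of the technical work'' to it. Designing an LCP-array gadget with a clean biconditional is precisely the kind of construction the paper's NP-hardness sections show to be delicate, so this cannot be waved through. Moreover, no gadget is needed. The paper uses two observations already made in its introduction: (i) the LCP array of a collection of terminated strings is identical to the LCP array of the concatenation of those strings, so TSILA and TSSILA have exactly the same yes-instances and the reduction is the identity; and (ii) each of the $k$ strings in a terminated collection contributes a terminator-only suffix at the front of the joint suffix array, so the LCP array begins with $k$ zeros --- hence the same ``leading zero and at most one other zero'' check that enforces binarity simultaneously forces $k=1$ for BTSSILA. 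All seven reductions then reduce to zero bookkeeping plus stripping leading zeros, with no combinatorial construction at all. As it stands your proposal proves the BOSILA case, has an off-by-one error that invalidates the TSILA/OSILA cases as written, and leaves the four set cases as an unexecuted plan.
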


\paragraph*{Our Contribution.}

Our first result is a linear time algorithm for BCSSILA.  For a valid
LCP array the algorithm outputs a string, which is the
Burrows-Wheeler transform (BWT) of the solution string set.
This relies on a generalization of the BWT for multisets of cyclic 
strings developed in~\cite{Man07,kkp2016}.
There can be more than one multiset of strings with the same BWT but the
class of such string collections is simple and well characterized
in~\cite{kkp2016}.
% Furthermore, the algorithm outputs a
% linear size representation of all such BWTs.
%Furthermore, t
The algorithm also outputs a set of substring swaps such that
applying any combination of the swaps on the BWT produces another BWT
of a solution, and any BWT of a solution can be produced by such a
combination of swaps. Thus we have a complete characterization of all
solutions. The number of swaps can be linear and thus the
number of distinct solutions can be exponential.
We also present an algorithm for CSSILA, i.e., 
% for alphabets of any size, which
% another algorithm based on similar principles but 
without a restriction on the alphabet size, that
has 
a polynomial time complexity for any constant
alphabet size.

Our second result is a proof, by a reduction from 3SAT, that (the
decision version of) BCSILA, and thus CSILA, is NP
complete. Therefore, even though the BCSSILA algorithm produces a
characterization of all solutions, it is NP hard to determine whether
one of the solutions is a single string.  Furthermore, we modify the
reduction to prove that BTSILA is NP complete too. By
Proposition~\ref{prop:reductions}, this shows that all variants of
SILA mentioned above except (B)CSSILA are NP complete.  Since CSSILA
is in P for constant alphabet sizes, this leaves the complexity of
CSSILA for larger alphabets as an open problem.

\paragraph*{Related Work.}

String inference from partial information is a classic problem in
string processing, dating back some 40 years to the work of Simon~\cite{S1975}, where 
reconstructing a string from a set of its subsequences is considered.
Since then, string inference from a variety of data structures has received a
considerable amount attention, with authors considering border
arrays~\cite{FGLRSSY02,DLL09,DLL05}, parameterized border
arrays~\cite{IIBT11}, the Lyndon factorization~\cite{NOIIBT14}, suffix
arrays~\cite{BIST03,KTV13}, KMP failure tables~\cite{DLL09,GJJ14},
prefix tables~\cite{CCR09}, cover arrays~\cite{CIPT10}, and directed
acyclic word graphs~\cite{BIST03}.  The motivation for studying most string
inference problems is to gain a deeper understanding of the
combinatorics of the data structures involved, in order to design more efficient
algorithms for their construction and use.
A
(somewhat tangentially) related result to ours is due to He et
al.~\cite{HLY11}, who prove that it is NP hard to infer a string from
the longest-previous-factor (LPF) array.
It is well known that LPF is a permutation of LCP~\cite{CI08} but
otherwise it is a quite different data structure. For example, it is
in no way concerned with lexicographical ordering.
Like
our NP-hardness proof, He et al.'s reduction is from 3-SAT, but the
details of each reduction appear to be very different. Moreover, their
construction requires an unbounded alphabet while our construction
works for a binary alphabet and thus for any alphabet.

To the best of our knowledge, all of the previous string inference
problems aim at obtaining a single non-cyclic string from some data
structure, and we are the first to consider the generalizations to
cyclic strings and to string sets, and as our results show, this makes
a crucial difference. As explained in the next section, the
generalizations arise naturally from the generalized BWT introduced
in~\cite{Man07}, which also played a central role in another recent result
on the combinatorics of LCP arrays~\cite{kkp2016}.

% both generalizations were needed to achieve a polynomial time
% generalized case of inferring a set of strings. The fact that BCSSILA
% is easy but BCSILA is hard shows that this distinction between a
% single string and a set of string makes a crucial difference.

\section{Basic notions}
\label{sec-notation}

Let $v$ be a string of length $n$ and let $\widehat{v}$ be obtained from $v$
by sorting its characters.   
The
\emph{standard permutation}~\cite{GR1993,Hig12} of $v$
is the
mapping $\stdperm_v:[0..n)\rightarrow [0..n)$ such that for every
$i\in [0..n)$ it holds $\widehat{v}[i]=v[\stdperm_v(i)]$ and for any
$\widehat{v}[i]=\widehat{v}[j]$ the relation $i<j$ implies
$\stdperm_v(i)<\stdperm_v(j)$.
In other words, $\stdperm_v$ corresponds to the stable sorting of the
characters. 
Let $C=\{c_i\}_{i=1}^s$ be the disjoint cycle decomposition of $\stdperm_v$.
We define the inverse Burrows--Wheeler transform 
$\ibwt$ as the mapping from $v$ into a multiset of cyclic strings
$W=\multiset{w_i}_{i=1}^s$ such that for any $i\in[1..s]$ and
$j\in[0..|c_i|)$, $w_i[j]=v[\stdperm_v(c_i[j])]$. 

\begin{example}
\label{ex:ibwt}
  For $v=bbaabaaa$, we have $\ibwt(v)=\multiset{aab,aab,ab}$ as
  illustrated in the following table (showing $\widehat{v}$ and
  $\stdperm_v$) and figure (showing the cycles of $\stdperm_v$ as a
  graph). The character subscripts are provided to make it easier to
  ensure stability.

\begin{center}
    \begin{tabular}[b]{c|cccccccc}
      $i$ & 0&1&2&3&4&5&6&7 \\
      \hline
      $v[i]$ & $b_1$ & $b_2$ & $a_1$ & $a_2$ & $b_3$ & $a_3$ & $a_4$ & $a_5$ \\
      $\widehat{v}[i]$ & $a_1$ & $a_2$ & $a_3$ & $a_4$ & $a_5$ & $b_1$
      & $b_2$ & $b_3$ \\ 
      $\stdperm_v[i]$ & 2&3&5&6&7&0&1&4 \\
      \hline
    \end{tabular}
\hspace{1cm}
\begin{tikzpicture}
  \tikzstyle{every node}=[circle,draw,inner sep=1pt];
  \node (c11) at (0,0) {0};
  \node (c12) at (1,0) {2};
  \node (c13) at (.5,-.7) {5};
  \node (c21) at (2,0) {1};
  \node (c22) at (3,0) {3};
  \node (c23) at (2.5,-.7) {6};
  \node (c31) at (4,-.3) {4};
  \node (c32) at (5,-.3) {7};
  \tikzstyle{every node}=[auto];
  \draw[->,>=latex] (c11) -- node {$a_1$} (c12);
  \draw[->,>=latex] (c12) -- node {$a_3$} (c13);
  \draw[->,>=latex] (c13) -- node {$b_1$} (c11);
  \draw[->,>=latex] (c21) -- node {$a_2$} (c22);
  \draw[->,>=latex] (c22) -- node {$a_4$} (c23);
  \draw[->,>=latex] (c23) -- node {$b_2$} (c21);
  \draw[->,>=latex] (c31) edge [bend left] node {$a_5$} (c32);
  \draw[->,>=latex] (c32) edge [bend left] node {$b_3$} (c31);
\end{tikzpicture}
\end{center}
\end{example}

The elements of $W$ are primitive cyclic strings. \emph{Cyclic} means
that all rotations of a string are considered equal. For example,
$aab$, $aba$ and $baa$ are all equal. A string is \emph{primitive} if
it is not a concatenation of multiple copies of the same string. For
example, $aab$ is primitive but $aabaab$ is not. For any alphabet
$\Sigma$, the mapping IBWT is a bijection between the set $\Sigma^*$
of all (non-cyclic) strings and the multisets of primitive cyclic
strings over $\Sigma$~\cite{Man07}.

The set of
positions of $W$ is defined as the set of integer pairs
$\pos(W):=\big\{\pair{i,p}\;:\; i\in [1..s],\; p\in [0..|w_i|) \big\}.$
For a position $\pair{i,p}\in\pos(W)$ we define a \emph{cyclic suffix}
$W_{\pair{i,p}}$ as the infinite string that starts at $\pair{i,p}$,
  i.e., $W_{\pair{i,p}}=w_i[p]w_i[p+1 \bmod |w_i|]w_i[p+2 \bmod
  |w_i|],\dots$. 
The multiset of all cyclic suffixes of $W$ is defined as 
\(
\suff(W):=\multiset{ W_{\pair{i,p}}\;:\;\pair{i,p}\in\pos(W)}
\).
We say that a string $x$ occurs at position $\pair{i,p}$
in $W$ if $x$ is a prefix of the suffix $W_{\pair{i,p}}$.

The \emph{(cyclic) suffix array} of a multiset of strings $W$ is defined
as an array $\sa_W[j]=\pair{i_j,p_j}$, where
$\pair{i_j,p_j}\in\pos(W)$ for all $j\in [0..n)$ and
$W_{\pair{i_{j-1},p_{j-1}}} \le W_{\pair{i_j,p_j}}$ for all $j\in [1..n)$.
The \emph{Burrows-Wheeler transform} (BWT) is a mapping from $W$
into the string $v$
defined as $v[j] = w_i[p-1 \bmod |w_i|]$, where $\pair{i,p}=\sa_W[j]$,
i.e., $v[j]$ is the character preceding the beginning of the suffix
$W_{\sa_{W}[j]}$. The BWT is the inverse of IBWT~\cite{Man07,kkp2016}.

The \emph{longest-common-prefix array} $\LCP_W[1..n)$ is defined as
$\LCP_W[j]=\lcp\big(W_{\sa_{W}[j-1]},W_{\sa_{W}[j]}\big)$
for $0<j<n$, where $\lcp(x,y)$ is the length of the longest common
prefix between the strings $x$ and $y$. 

\begin{example}
For $W=\multiset{ab,\;aab,\;aab}$ we have
\begin{align*}
\suff(W)&=\multiset{(aab)^\omega, (aab)^\omega, (aba)^\omega,
 (aba)^\omega, (ab)^\omega, (baa)^\omega, (baa)^\omega, (ba)^\omega}\\
\sa_W&=\big[\pair{2,0},\; \pair{3,0},\; \pair{2,1},\; \pair{3,1},\;
\pair{1,0},\; \pair{2,2},\; \pair{3,2},\; \pair{1,1}\big]\\
\LCP_W&=\big[ \omega, 1, \omega, 3, 0, \omega, 2  \big].
\end{align*}
\end{example}

The suffixes represented by the suffix array entries can also be
expressed as follows.

\begin{lemma}
\label{lm:suffix}
  For $i\in[0..n)$, $W_{\sa_W[i]} =
  \widehat{v}[i]\cdot\widehat{v}[\Psi_v(i)]\cdot\widehat{v}[\Psi_v^2(i)]\cdot\widehat{v}[\Psi_v^3(i)]\,
  \dots$.  
\end{lemma}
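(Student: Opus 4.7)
The plan is to let the cycle structure of $\Psi_v$ do all the work. Since BWT is the inverse of IBWT (cited in the paragraph just before the lemma), $W$ is necessarily the explicit multiset built from the disjoint cycle decomposition $C=\{c_i\}_{i=1}^s$ of $\Psi_v$, namely $w_i[j]=\widehat{v}[c_i[j]]$ for $i\in[1..s]$ and $j\in[0..|c_i|)$. Because $\Psi_v$ shifts one step forward inside each cycle, $c_i[(j+1)\bmod|c_i|]=\Psi_v(c_i[j])$, so the $k$-th symbol of any cyclic suffix $W_{\pair{i,p}}$ is
\[
w_i[(p+k)\bmod|w_i|]\;=\;\widehat{v}\bigl[\Psi_v^k(c_i[p])\bigr].
\]

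Next I would introduce the bijection $r:\pos(W)\to[0..n)$ defined by $r(\pair{i,p})=c_i[p]$ and argue that the choice $\sa_W[i]:=r^{-1}(i)$ is a valid suffix array. Granting this, if $\sa_W[i]=\pair{i^*,p^*}$ then $c_{i^*}[p^*]=i$, so the displayed identity gives $W_{\sa_W[i]}=\widehat{v}[i]\cdot\widehat{v}[\Psi_v(i)]\cdot\widehat{v}[\Psi_v^2(i)]\cdots$, which is exactly the lemma.

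To prove sortedness I would induct on a character index $k$. Fix $i_1<i_2$ in $[0..n)$; the invariant I want to maintain is that the first $k$ characters of $W_{r^{-1}(i_1)}$ and $W_{r^{-1}(i_2)}$ agree and $\Psi_v^k(i_1)<\Psi_v^k(i_2)$. The base case $k=0$ is the hypothesis. For the inductive step, the $k$-th characters are $\widehat{v}[\Psi_v^k(i_1)]$ and $\widehat{v}[\Psi_v^k(i_2)]$; sortedness of $\widehat{v}$ gives the first no greater than the second, and in the strict case the two suffixes are already correctly ordered at position $k$. In the equal case, the stability clause of $\Psi_v$, applied to $\Psi_v^k(i_1)<\Psi_v^k(i_2)$, yields $\Psi_v^{k+1}(i_1)<\Psi_v^{k+1}(i_2)$, propagating the invariant. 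Hence either the two suffixes diverge in the correct order at some finite position, or they coincide for every $k$ and are literally equal; either way $W_{r^{-1}(i_1)}\le W_{r^{-1}(i_2)}$.

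The hard part is exactly this sortedness induction, and specifically the handling of ties: two cyclic suffixes may share an arbitrarily long or even infinite common prefix (as happens when $W$ contains repeated copies of the same primitive cyclic string, e.g.\ $W=\multiset{aab,aab}$). The stability clause of $\Psi_v$ is precisely what keeps the tiebreaking globally consistent; without it, the order induced by $r^{-1}$ could contradict ``$\le$'' and $\sa_W$ as defined would not be a valid suffix array. Once this is settled, the characters of $W_{\sa_W[i]}$ are read off by iterating $\Psi_v$ exactly as claimed.
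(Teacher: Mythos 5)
Your proof is correct and complete. For reference, the paper gives \emph{no} proof of this lemma at all: it is stated as a known property of the generalized BWT, implicitly justified by the cited fact that BWT is the inverse of IBWT \cite{Man07,kkp2016}, so there is no in-paper argument to compare against. Your argument is the standard one and fills that gap properly: the identity $w_i[(p+k)\bmod|w_i|]=\widehat{v}[\Psi_v^k(c_i[p])]$ follows directly from the paper's definition $w_i[j]=v[\Psi_v(c_i[j])]=\widehat{v}[c_i[j]]$ together with the cycle convention $c_i[(j+1)\bmod|c_i|]=\Psi_v(c_i[j])$, and the sortedness of the enumeration $i\mapsto W_{r^{-1}(i)}$ is exactly where the stability clause of the standard permutation is needed; your induction uses it correctly, and your handling of infinite ties (repeated primitive strings in $W$) is the right resolution of the only delicate point. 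One remark worth making explicit: since $\sa_W$ is not unique in the presence of equal cyclic suffixes, the lemma asserts an identity of strings, not of positions; your proof exhibits one valid suffix array satisfying it, and because tied entries carry identical suffixes, the string $W_{\sa_W[i]}$ is the same for every valid $\sa_W$, so this suffices. As a pleasant byproduct (which you do not need but could note), your construction also yields $\bwt(\ibwt(v))=v$, since $\bwt(W)[j]=\widehat{v}[\Psi_v^{-1}(j)]=v[j]$.
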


% \section{Basic Properties of Intervals}
% \label{sec:intervals}

%\paragraph*{Intervals.}
\subsection{Intervals.}
\label{sec:intervals}
Many algorithms on suffix arrays and LCP arrays are based on iterating
over a specific types of array intervals. Next, we define
these intervals and establish their key properties. For proofs and
further details, we refer to~\cite{ako2004,ohl2013}.

Let $v\in \{a,b\}^n$ and $W=IBWT(v)$.
Let $\sa=\sa_W$ be the suffix array and $\LCP=\LCP_W$ the LCP array of~$W$.  
Note that from now on, we will assume a binary alphabet.

\begin{definition}[$x$-interval]
An interval $[i..j)$, $0\le i \le j \le n$, is called the
$x$-interval ($x\in\Sigma^*$) if and only if
%\begin{enumerate}
%\item 
(1) $x$ is not a prefix of $W_{\sa[i-1]}$ (or $i=0$),
%\item 
(2) $x$ is a prefix of $W_{\sa[k]}$ for all $k\in[i..j)$, and
%\item 
(3) $x$ is not a prefix of $W_{\sa[j]}$ (or $j=n$).
%\end{enumerate}
\end{definition}
In other words, in the suffix array the $x$-interval $\sa[i..j)$ 
consists of all suffixes of $W$ with $x$ as a prefix. Thus the
size $j-i$ of the interval is the number of occurrences of $x$ in $W$,
which we will denote by $n_x$.

\begin{definition}[$\ell$-interval]
An interval $[i..j)$, $0\le i < j\le n$, is called an $\ell$-interval
($\ell \in %\mathbb{N}_\omega = 
\mathbb{N} \cup \{\omega\}$) if and only if
%\begin{enumerate}
%\item 
(1) $LCP[i] < \ell$ (or $i=0$),
%\item 
(2) $\min\LCP[i+1..j)=\ell$ (where $\min\LCP[j..j)=\omega$), and
%(If $i+1=j$, we define $\min\LCP[i+1..j)=\omega$.)
%\item 
(3) $LCP[j] < \ell$ (or $j=n$).
%\end{enumerate}
\end{definition}

%The two types of intervals are closely related.

\begin{lemma}
  Every nonempty $x$-interval is an $\ell$-interval for some (unique)
  $\ell \ge |x|$. Every $\ell$-interval is an $x$-interval for
  some string $x$ of length $\ell$.
\end{lemma}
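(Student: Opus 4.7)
The plan is to verify each direction by exploiting the tight correspondence between the semantic property ``all suffixes start with $x$'' and the syntactic LCP condition ``consecutive LCP values are all at least $|x|$.''

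For the first direction, I would take a nonempty $x$-interval $[i..j)$ and define $\ell := \min \LCP[i+1..j)$, using the stated convention that this minimum over an empty range is $\omega$. Since $x$ is a prefix of every $W_{\sa[k]}$ for $k\in[i..j)$, any two consecutive such suffixes share at least $|x|$ characters, so $\LCP[k]\ge|x|$ for every $k\in[i+1..j)$, and therefore $\ell\ge|x|$. For the boundary clauses, if $i>0$ then $x$ is a prefix of $W_{\sa[i]}$ but not of $W_{\sa[i-1]}$, so $\LCP[i]=\lcp(W_{\sa[i-1]},W_{\sa[i]})<|x|\le\ell$; the argument for $\LCP[j]<\ell$ when $j<n$ is symmetric. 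This shows that $[i..j)$ meets all three defining conditions of an $\ell$-interval, and uniqueness of $\ell$ is immediate from condition~(2) in the $\ell$-interval definition.

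For the second direction, given an $\ell$-interval $[i..j)$, I would first use the standard telescoping fact that for any $i\le k<k'<j$,
\[
\lcp\bigl(W_{\sa[k]},W_{\sa[k']}\bigr)\ \ge\ \min_{k<m\le k'}\LCP[m]\ \ge\ \ell,
\]
so all suffixes in the interval share a common prefix of length at least $\ell$. Take $x$ to be the length-$\ell$ prefix of $W_{\sa[i]}$ (when $\ell=\omega$, this means $x=W_{\sa[i]}$ itself, interpreted as the whole cyclic suffix, which all other suffixes in the interval necessarily equal). Condition~(2) of the $x$-interval definition is exactly the common-prefix property just established. For condition~(1), when $i>0$ we have $\LCP[i]<\ell=|x|$, so $W_{\sa[i-1]}$ and $W_{\sa[i]}$ cannot agree on the first $|x|$ characters, forcing $x$ not to be a prefix of $W_{\sa[i-1]}$; condition~(3) follows from $\LCP[j]<\ell$ when $j<n$ by the same reasoning.

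The main obstacle I anticipate is the careful handling of the $\omega$ case: agreeing on what a ``string $x$ of length $\ell$'' means when $\ell=\omega$ (it has to be a full cyclic suffix rather than a finite word), and checking that singleton $\ell$-intervals (where $\min\LCP$ is $\omega$ by convention and all inequalities involving $\omega$ become trivial) still fit the framework. Apart from this, the only routine bookkeeping is the symmetric treatment of the boundary clauses $i=0$ and $j=n$, which are dispatched identically in both directions.
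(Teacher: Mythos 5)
Your proof is correct and is exactly the standard argument; the paper itself gives no proof of this lemma, deferring to the cited references (Abouelhoda et al.\ and Ohlebusch), where the same two-directional verification (consecutive LCPs bound the common prefix, and the telescoping $\lcp\bigl(W_{\sa[k]},W_{\sa[k']}\bigr)\ge\min_{k<m\le k'}\LCP[m]$ for the converse) is used. Your handling of the boundary clauses and of the $\ell=\omega$ case is appropriately careful and consistent with the paper's conventions.
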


\begin{corollary}
  If an $x$-interval $[i..j)$ is an $\ell$-interval for $\ell > |x|$,
  there exists a (unique) string $y$ of length $\ell-|x|$ such that
  $[i..j)$ is the $xy$-interval.
\end{corollary}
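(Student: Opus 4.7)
The plan is to apply the preceding lemma directly. Since $[i..j)$ is an $\ell$-interval, part two of the lemma supplies some string $z$ of length $\ell$ such that $[i..j)$ is the $z$-interval. The first step is to observe that $x$ is a prefix of $z$. This is because both $x$ and $z$ are, by condition (2) in the definition of an $x$-interval, prefixes of $W_{\sa[i]}$, and one of them ($x$) has the smaller length $|x|\le\ell=|z|$; hence $x$ is a prefix of $z$, and we may write $z=xy$ for the string $y:=z[|x|..\ell)$ of length $\ell-|x|$. This establishes existence.

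For uniqueness, I would argue that $z$ itself is uniquely determined by the interval $[i..j)$ (even though the lemma as stated only guarantees uniqueness of $\ell$): any string $z'$ of length $\ell$ that makes $[i..j)$ a $z'$-interval must, by condition (2), be a prefix of $W_{\sa[i]}$; but there is only one prefix of $W_{\sa[i]}$ of length exactly $\ell$, so $z'=z$. Consequently $y=z[|x|..\ell)$ is also uniquely determined.

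There is no real obstacle: the corollary is essentially a bookkeeping statement, and the main point is only to notice that the two intervals $x$-interval and $z$-interval coincide as sets of indices, which forces $x$ and $z$ to be compatible prefixes of the common suffix $W_{\sa[i]}$. The hypothesis $\ell>|x|$ merely ensures that $y$ has positive length, making the factorization $z=xy$ nontrivial.
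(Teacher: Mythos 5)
Your proof is correct and is the natural derivation the paper intends: the paper itself omits the argument (deferring to \cite{ako2004,ohl2013}), and your route --- invoke the second part of the preceding lemma to get a length-$\ell$ string $z$ with the same interval, note that $x$ and $z$ are both prefixes of the common suffix $W_{\sa[i]}$ so that $z=xy$, and get uniqueness because an infinite string has exactly one prefix of each length --- is exactly the standard one. The only implicit point worth keeping in mind is that an $\ell$-interval is nonempty by definition, which is what guarantees $W_{\sa[i]}$ exists and makes condition (2) non-vacuous; you use this correctly.
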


Thus the $\ell$-intervals represent the set of all distinct
$x$-intervals. This and the fact that the total number of
$\ell$-intervals is $\Oh(n)$ are the basis of many efficient
algorithms for suffix arrays, see e.g.,~\cite{ako2004,ohl2013}.

\section{Algorithm for BCSSILA}
\label{sec:SILA}

We are now ready to describe the algorithm for string inference from
an LCP array. Given an LCP array $\LCP[1..n)$, our goal is to
construct a string $v\in\{a,b\}^n$ such that $\LCP=\LCP_{\ibwt(v)}$.
At first, we assume that such a string $v$ exists, and consider later
what happens if the input is not a valid LCP array. 
%(for a binary alphabet). 

Let $\RMQ_\LCP[i..j)$ denote the \emph{range minimum
  query} over the LCP array that returns the position of the minimum
element in $\LCP[i..j)$, i.e.,
$\RMQ_\LCP[i..j)=\argmin_{k\in[i..j)}\LCP[k]$. The LCP array is
preprocessed in linear time so that any RMQ can be answered in
constant time (see for instance \cite{ohl2013}). Then any $x$-interval can be split into two
subintervals as shown in the following result.

\begin{lemma}
\label{lm:rmq}
  Let $[i..j)$ be an $x$-interval and an $\ell$-interval for
  $\ell<\omega$, and let $k=\RMQ_\LCP[i+1..j)$. Then, for some string
  $y$ of length $\ell-|x|$, $[i..k)$ is the $xya$-interval and
  $[k..j)$ is the $xyb$-interval.
\end{lemma}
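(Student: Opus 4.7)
The plan is to use the corollary that immediately precedes the lemma: since $[i..j)$ is an $x$-interval and an $\ell$-interval for $\ell < \omega$, there is a unique string $y$ of length $\ell - |x|$ such that $[i..j)$ is also the $xy$-interval. So the task reduces to identifying $k$ with the boundary at which the character at position $\ell$ (i.e., the character immediately after the shared prefix $xy$) switches in the sorted list of suffixes, and then verifying that the two resulting subintervals satisfy all three conditions in the definition of an $x$-interval for $xya$ and $xyb$.

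First I would analyze the character at position $\ell$ in the cyclic suffixes indexed by $[i..j)$. Every such suffix has $xy$ as a prefix, so its $\ell$-th character is $a$ or $b$. By lexicographic sortedness, there is a unique index $k^\star\in[i..j]$ such that $W_{\sa[m]}[\ell]=a$ for $m\in[i..k^\star)$ and $W_{\sa[m]}[\ell]=b$ for $m\in[k^\star..j)$. Both halves must be nonempty, for otherwise all LCP values inside would be at least $\ell+1$, contradicting $\min\LCP[i+1..j)=\ell$.

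Next I would show that $k^\star$ is precisely $\RMQ_\LCP[i+1..j)$. For $m\in[i+1..j)$ with $m\ne k^\star$, the two neighboring suffixes $W_{\sa[m-1]}$ and $W_{\sa[m]}$ agree on $xy$ and on the $\ell$-th character, so $\LCP[m]\ge\ell+1$. For $m=k^\star$, they agree on $xy$ but differ at the $\ell$-th character, giving $\LCP[k^\star]=\ell$. Thus $k^\star$ is the unique minimizer in $[i+1..j)$, and so $k=k^\star$ regardless of the tie-breaking rule used by $\RMQ$.

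Finally I would verify that $[i..k)$ is the $xya$-interval (and symmetrically for $xyb$) by checking the three defining conditions: (i) either $i=0$, or $xy$ is not a prefix of $W_{\sa[i-1]}$ (from $[i..j)$ being the $xy$-interval), hence $xya$ is not either; (ii) every $W_{\sa[m]}$ for $m\in[i..k)$ has $xy$ as a prefix and $a$ at position $\ell$, i.e., starts with $xya$; (iii) $W_{\sa[k]}$ starts with $xyb$, so $xya$ is not a prefix of it. The symmetric argument handles $[k..j)$. I do not foresee a real obstacle: the only subtle point is ensuring that both halves of the split are nonempty, which is where the hypothesis $\ell<\omega$ (ruling out the degenerate case where all suffixes in the interval are pairwise identical beyond $xy$) is silently used.
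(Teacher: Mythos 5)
Your proof is correct. The paper gives no explicit proof of this lemma (it treats it as a standard consequence of the interval properties recalled in Section~\ref{sec:intervals}, deferring to the cited references), and your argument --- reducing to the $xy$-interval via the preceding corollary, splitting by the character at position $\ell$, showing both halves are nonempty and that the split point is the unique minimizer of $\LCP[i+1..j)$, then checking the three defining conditions of the $xya$- and $xyb$-intervals --- is exactly the standard argument one would supply, including the correct use of the hypothesis $\ell<\omega$ to guarantee a genuine two-sided split.
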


This approach makes it easy to 
recursively enumerate all $\ell$-intervals. We will also
keep track of $ax$- and $bx$-intervals together with any $x$-interval,
even if we do not know $x$ precisely. From the intervals we can
determine the numbers of occurrences, $n_{ax}$ and $n_{bx}$, which are
useful in the inference of $v$:

\begin{lemma}
  Let $[i..j)$ be the $x$-interval. Then $v[i..j)$ contains exactly
  $n_{ax}$ $a$'s and $n_{bx}$ $b$'s.
\end{lemma}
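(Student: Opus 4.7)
The plan is to exploit the fundamental relationship between the BWT and the suffix array: for each position $k$ in the $x$-interval, the value $v[k]$ records the character that cyclically precedes an occurrence of $x$ in $W$. I would begin by unpacking the definitions. By the definition of the $x$-interval, for every $k\in[i..j)$ the cyclic suffix $W_{\sa_W[k]}$ has $x$ as a prefix, so if $\sa_W[k]=\pair{i_k,p_k}$ then $x$ occurs at position $\pair{i_k,p_k}$ of $W$. By the definition of the BWT, $v[k]=w_{i_k}[(p_k-1)\bmod |w_{i_k}|]$, that is, the character immediately preceding that occurrence of $x$ (taken cyclically).

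Next I would establish a bijection between the multiset of occurrences of $ax$ in $W$ and the positions $k\in[i..j)$ with $v[k]=a$. In one direction, given $k\in[i..j)$ with $v[k]=a$, let $\sa_W[k]=\pair{i_k,p_k}$; then $ax$ occurs at $\pair{i_k,(p_k-1)\bmod |w_{i_k}|}\in\pos(W)$. In the other direction, given an occurrence of $ax$ at some position $\pair{r,q}$, the cyclic suffix starting at $\pair{r,(q+1)\bmod |w_r|}$ has $x$ as a prefix, so it is represented by some entry $\sa_W[k]$ with $k\in[i..j)$, and by the BWT definition $v[k]=w_r[q]=a$. These two maps are inverse to each other because $\sa_W$ is a bijection between $[0..n)$ and $\pos(W)$ and the cyclic successor map is a bijection on $\pos(W)$. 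Hence the number of $a$'s in $v[i..j)$ equals $n_{ax}$. The symmetric argument (replacing $a$ with $b$) gives $n_{bx}$ $b$'s.

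The only subtle point is making sure the cyclic predecessor map is a well-defined bijection on $\pos(W)$ and that the correspondence respects multiplicities when several occurrences fall inside the same short cyclic string (for example when $W$ contains repeated strings or the $x$-interval contains entries with the special $\omega$ behaviour). Since $\stdperm_v$ (equivalently the cyclic-successor map on $\pos(W)$) is a permutation, no double counting can occur, and the multiset count $n_{ax}+n_{bx}=n_x=j-i$ matches $|v[i..j)|$, which serves as a sanity check that the two counts saturate the interval. This is the main (mild) obstacle; once it is articulated, the rest is definition-chasing.
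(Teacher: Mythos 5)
Your proof is correct. The paper states this lemma without an explicit proof, treating it as a standard property of the BWT and of $x$-intervals; your argument --- a bijection via the cyclic predecessor map between positions $k\in[i..j)$ with $v[k]=a$ and occurrences of $ax$ in $W$, using that $\sa_W$ and the successor/predecessor maps on $\pos(W)$ are bijections --- is exactly the standard justification the authors have in mind.
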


In particular, when either $n_{ax}$ or $n_{bx}$ drops to zero, we have
fully determined $v[i..j)$ for the $x$-interval $[i..j)$. In such a
case, the LCP array intervals have to satisfy the following property.

\begin{lemma}
\label{lm:lcp-match}
  Let $[i_y..j_y)$ be the $y$-interval for $y\in\{x,ax,bx\}$.
  If $n_{ax}=j_{ax}-i_{ax}=0$, then 
  $\LCP[i_{bx}+1..j_{bx})=1+\LCP[i_{x}+1..j_{x})$, where
  $1+A$, for an array $A$, denotes adding one to all elements of $A$.
  Symmetrically, if $n_{bx}=0$, then 
  $\LCP[i_{ax}+1..j_{ax})=1+\LCP[i_{x}+1..j_{x})$.
\end{lemma}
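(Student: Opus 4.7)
The approach is to exhibit an order-preserving bijection between the cyclic suffixes in the $x$-interval and those in the $bx$-interval, and then observe that prepending a common character $b$ must shift every adjacent LCP value by exactly one.

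First, I would translate the hypothesis $n_{ax}=0$ into a statement about positions. Since the alphabet is binary, every occurrence of $x$ in $W$ is immediately preceded (in the appropriate cyclic string) by either $a$ or $b$. The assumption $n_{ax}=0$ rules out $a$, so for each $\pair{i,p}\in\pos(W)$ at which $x$ occurs, the shifted position $\phi(\pair{i,p}):=\pair{i,(p-1)\bmod|w_i|}$ is a position at which $bx$ occurs. A simple counting argument then gives $n_{bx}=n_x$, so $\phi$ is a bijection between the set of $x$-occurrences and the set of $bx$-occurrences.

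Second, I would verify that $\phi$ preserves lexicographic order on the corresponding cyclic suffixes. For any two $x$-occurrences $\pair{i,p}$ and $\pair{j,q}$, both $W_{\phi(\pair{i,p})}$ and $W_{\phi(\pair{j,q})}$ start with the character $b$, and their tails past that initial $b$ are exactly $W_{\pair{i,p}}$ and $W_{\pair{j,q}}$. Hence $W_{\phi(\pair{i,p})}\le W_{\phi(\pair{j,q})}$ iff $W_{\pair{i,p}}\le W_{\pair{j,q}}$, and therefore $\sa_W[i_{bx}+k]=\phi(\sa_W[i_x+k])$ for every $k\in[0..n_x)$.

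Finally, for each $k\in[1..n_x)$, if $\ell=\LCP[i_x+k]$ then the two suffixes $W_{\sa_W[i_x+k-1]}$ and $W_{\sa_W[i_x+k]}$ agree on a prefix of length exactly $\ell$ (where $\ell=\omega$ is interpreted as the two suffixes being identical infinite strings, arising from duplicate primitive cyclic strings in $W$). Prepending the common character $b$ to both produces the adjacent pair in the $bx$-interval, whose longest common prefix has length exactly $\ell+1$; the natural convention $1+\omega=\omega$ covers the case of equal cyclic suffixes. This yields $\LCP[i_{bx}+k]=1+\LCP[i_x+k]$ for every $k$, proving the equality of arrays. The case $n_{bx}=0$ is obtained by swapping the roles of $a$ and $b$. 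There is no real obstacle here; the only delicate point is checking that the LCP increases by \emph{exactly} one (no spurious match is introduced beyond the shared prepended $b$), which follows immediately because the tails after that $b$ have LCP exactly $\ell$ by definition.
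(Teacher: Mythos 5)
Your proof is correct. The paper states Lemma~\ref{lm:lcp-match} without proof (of the lemmas in Section~\ref{sec:SILA}, only Lemma~\ref{lm:swap} is proved, in the appendix), so there is no authorial argument to compare against; your order-preserving bijection $\phi$ from $x$-occurrences to $bx$-occurrences, the resulting alignment $\sa_W[i_{bx}+k]=\phi(\sa_W[i_x+k])$, and the observation that prepending the shared character $b$ increases each adjacent LCP by exactly one (with $1+\omega=\omega$ handling identical cyclic suffixes) constitute the natural and complete argument, consistent with the style of reasoning the paper uses in its proof of Lemma~\ref{lm:swap}.
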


\begin{algorithm}[ht]%[b]
        \KwIn{an array $\LCP[1..n)$ of integers and $\omega$'s}
        \KwOut{a string $v\in\{a,b\}^n$ such that $\LCP_{\ibwt(v)}=\LCP$
        together with a set $S$ of swap intervals, 
        or \textbf{false} if there is no such string $v$}
        $S := \emptyset$\;
        preprocess $\LCP$ for RMQs\;
        $k := \RMQ_{\LCP}[1..n)$\;
        \If{$\LCP[k] \neq 0$}{
          \lIf{$\LCP[k]=\omega$}{\Return $a^n$, $\emptyset$}
          \lElse{\Return \textbf{false}}
        }
        InferInterval$([0,n),[0,k),[k,n))$\;
        % $success := {}$InferInterval$([0,n),[0,k),[k,n))$\;
        % \lIf{success=\textbf{false}}{\Return \textbf{false}}
        compute $W=\ibwt(v)$, $\sa_W$, and $\LCP_W$\;
        \lIf{$\LCP_W \neq \LCP$}{\Return \textbf{false}}
        \Return  $v$, $S$\;
        \caption{Infer BWT from an LCP array}
        \label{alg:bwt-reconstruction}
\end{algorithm}

\begin{algorithm}[ht]%[t]
        \KwIn{(nonempty) $x$-, $ax$- and $bx$-intervals}
        \KwOut{Set $v[i_x..j_x)$ and add the swap intervals within
          $[i_x..j_x)$ to $S$}
        % \KwOut{If successful, set $v[i_x..j_x)$, add the swap
        %   intervals within $[i_x..j_x)$ to $S$ and return true.
        %   Otherwise return false.}
		$k_{x}:=\RMQ_{\LCP}[i_{x}+1..j_{x})$\;
		$m_{x}:=\LCP[k_{x}]$\;
        \If{$j_{ax}-i_{ax}=1$}{
        		$k_{ax}:=i_{ax}$\; 
        		$m_{ax}:=\omega$\;
   		}
        \Else{
        		$k_{ax}:=\RMQ_{\LCP}[i_{ax}+1..j_{ax})$\;
        		$m_{ax}:=\LCP[k_{ax}]$\;
        	}
        \If{$j_{bx}-i_{bx}=1$}{
        		$k_{bx}:=i_{bx}$\;
        		$m_{bx}:=\omega$\;
        	}
        \Else{
        		$k_{bx}:=\RMQ_{\LCP}[i_{bx}+1..j_{bx})$\;
			$m_{bx}:=\LCP[k_{bx}]$\;
        }

        \If{$m_{ax}>m_x+1$ and $m_{bx}>m_x+1$}{
          \If{$\LCP[i_{ax}+1..j_{ax})=1+\LCP[i_x+1..k_x)$}{
            $v[i_x..k_x)=aa\dots a$\;
            $v[k_x..j_x)=bb\dots b$\;
            \If{$\LCP[i_{ax}+1..j_{ax})=1+\LCP[k_x+1..j_x)$}{
              add $[i_x..j_x)$ to $S$\;
            }
            % \Return true\;
          }
          \Else{
            $v[i_x..k_x)=bb\dots b$\;
            $v[k_x..j_x)=aa\dots a$\;
            % \Return true\;
          }
        }
        \ElseIf{$m_{ax}>m_x+1$}{
          \If{$k_{bx}-i_{bx}=k_x-i_x$}{
            $v[i_x..k_x)=bb\dots b$\;
            % \Return 
            InferInterval($[k_x..j_x)$, $[i_{ax}..j_{ax})$, $[k_{bx}..j_{bx})$)\;
          }
          \Else{
            $v[k_x..j_x)=bb\dots b$\;
            % \Return 
            InferInterval($[i_x..k_x)$, $[i_{ax}..j_{ax})$,$[i_{bx}..k_{bx})$)\;
          }
        }
        \ElseIf{$m_{bx}>m_x+1$}{
          \If{$k_{ax}-i_{ax}=k_x-i_x$}{
            $v[i_x..k_x)=aa\dots a$\;
            % \Return 
            InferInterval($[k_x..j_x)$, $[k_{ax}..j_{ax})$, $[i_{bx}..j_{bx})$)\;
          }
          \Else{
            $v[k_x..j_x)=aa\dots a$\;
            % \Return 
            InferInterval($[i_x..k_x)$, $[i_{ax}..k_{ax})$,$[i_{bx}..j_{bx})$)\;
          }
        }
        \Else{
          % $success1 :={}$
          InferInterval($[i_x..k_x)$,$[i_{ax}..k_{ax})$,$[i_{bx}..k_{bx})$)\;
          % $success2:={}$
          InferInterval($[k_x..j_x)$,$[k_{ax}..j_{ax})$,$[k_{bx}..j_{bx})$)\;
          % \Return $success1$ and $success2$;
        }
        \caption{InferInterval($[i_x..j_x)$, $[i_{ax}..j_{ax})$, $[i_{bx}..j_{bx})$)}
        \label{alg:interval-matching}
\end{algorithm}

The main procedure is given in Algorithm~1. The main work is done in
the recursive procedure InferInterval given in Algorithm~2. The
procedure gets as input the $x$-, $ax$- and $bx$-intervals for some
(unknown) string $x$, splits the $x$-interval into $xya$- and
$xyb$-subintervals based on Lemma~\ref{lm:rmq}, and tries to
split $ax$- and $bx$-intervals similarly. If all subintervals are
nonempty, the algorithm processes the two subinterval triples
recursively (lines 28 and 29). 

When trying to split the $ax$-interval, the result may be, for
example, that the $axya$-interval is empty. In this case, we do not
need to recurse on the $xya$-interval since the corresponding part of
$v$ must be all $b$'s.  The algorithm recognizes the emptiness of
$axya$- or $axyb$-interval by the fact that $m_{ax} > m_x+1$, but the
problem is to decide which is the empty one.  In most cases, this can
be determined by comparing the sizes of the different subintervals
or even the actual LCP-intervals (see Lemma~\ref{lm:lcp-match}).

There is one case, where the algorithm is unable to 
determine the empty subintervals, which is when
$\LCP[i_{ax}+1..j_{ax}) = \LCP[i_{bx}+1..j_{bx}) = 1+\LCP[i_x+1..k_x)
= 1+\LCP[k_x+1..j_x)$. Then, either the $axya$- and
$bxyb$-intervals are empty or the $axyb$- and $bxya$-intervals are empty,
but there is no way of deciding between the two cases. It turns out
that both are valid choices. The algorithm sets $v$
according to one choice (line 8) but records the alternative choice by
adding the interval to the set $S$. In such a case, the string $xy$ is
called a \emph{swap core} and the $xy$-interval (equal to the $x$-interval) 
is called a \emph{swap interval}.

For each swap interval $[i..j)$, the algorithm sets $v[i..k)=aa\dots
a$ and $v[k..j)=bb\dots b$, where $k=(i+j)/2$, but swapping the two
halves would be an equally good choice.  Therefore, if the output of
the algorithm contains $s$ swap intervals, it represents a set of
$2^s$ distinct strings.  The following lemma shows that the swaps
indeed do not affect the LCP array (with the proof in the appendix).

\begin{lemma}
\label{lm:swap}
  Let $v\in\{a,b\}^n$, $W=\ibwt(v)$, $\sa=\sa_W$ and $\LCP=\LCP_W$.
  Let $x$ be a string that occurs in $W$ and satisfies:
  %\begin{enumerate}
  %\item 
    (1) $\LCP[i_{xa}+1..j_{xa}) = \LCP[i_{xb}+1..j_{xb})$,  and
  %\item 
    (2) $v[i_{xa}..j_{xa})=aa\dots a$ and $v[i_{xb}..j_{xb})=bb\dots b$,
  %\end{enumerate}
  where $[i_z..j_z)$ is the $z$-interval for $z\in\{xa,xb\}$.
  Let $v'$ be the same as $v$ except that 
  $v'[i_{xa}..j_{xa})=bb\dots b$ and
    $v'[i_{xb}..j_{xb})=aa\dots a$. Then $\LCP_{\ibwt(v')}=\LCP$.
\end{lemma}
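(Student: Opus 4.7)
The plan proceeds in three steps.

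First, I would show that $\widehat{v'} = \widehat{v}$. Condition~(1) gives $|\LCP[i_{xa}+1..j_{xa})| = |\LCP[i_{xb}+1..j_{xb})|$, so the two intervals have equal size $L := j_{xa}-i_{xa} = j_{xb}-i_{xb}$, and condition~(2) means the swap exchanges $L$ copies of $a$ with $L$ copies of $b$ in $v$. Hence the character multisets of $v$ and $v'$ coincide, so $\widehat{v'} = \widehat{v}$, and in particular the $a$- and $b$-blocks $[0..k)$ and $[k..n)$ of $\widehat{v}$ are also the blocks of $\widehat{v'}$. By Lemma~\ref{lm:suffix}, the LCP value at each adjacent pair $(i{-}1,i)$ equals the smallest $t \geq 0$ at which $\stdperm^t(i{-}1)$ and $\stdperm^t(i)$ lie in different blocks of $\widehat{v}$, using $\stdperm = \stdperm_v$ for $\LCP_W$ and $\stdperm = \stdperm_{v'}$ for $\LCP_{W'}$.

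Next, I would unpack condition~(2) structurally. Because $v[i_{xa}..j_{xa})=a^L$, every cyclic occurrence of $xa$ in $W = \ibwt(v)$ is preceded by $a$, and symmetrically every $xb$ by $b$, giving $n_{axa} = n_{bxb} = L$ and $n_{axb} = n_{bxa} = 0$. After the swap, in $W' = \ibwt(v')$ the roles are exchanged: $n'_{bxa} = n'_{axb} = L$ and $n'_{axa} = n'_{bxb} = 0$. Thus the sorted-index locations of the $x$-, $xa$- and $xb$-intervals, together with the boundary LCP value $|x|$ at sorted index $i_x+L$, coincide in $W$ and $W'$; and since $v$ and $v'$ agree outside the two swap intervals, all LCP values outside the enclosing $x$-interval are immediately seen to match. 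It then remains to prove $\LCP_{W'}[i_{xa}+1..j_{xa}) = \LCP[i_{xa}+1..j_{xa})$ and the analogous equality for the $xb$-interval.

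My approach to this last equality is to view condition~(1) as a shape isomorphism between the $\ell$-interval sub-trees rooted at $xa$ and at $xb$ in $W$, and argue that the swap merely re-plumbs these two sub-trees in $W'$ by interchanging which of them hangs under an $a$-preceder and which under a $b$-preceder, without altering their internal shape. Concretely, the cycles of $\stdperm_v$ and $\stdperm_{v'}$ differ only in how they thread the two swap regions, and this re-threading produces a bijection between the sorted cyclic suffixes in the $xa$-interval of $W'$ and those in the $xa$-interval of $W$ that preserves all inner LCP values (and similarly on the $xb$-interval). The main obstacle will be the careful combinatorial verification of this bijection when orbits of $\stdperm$ re-enter the swap intervals multiple times, since the two permutations reroute globally and their iterated block-label sequences visit different sorted indices; condition~(1) is precisely the hypothesis that forces the first divergence of these sequences to occur at the same step $t$.
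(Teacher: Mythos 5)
Your setup matches the paper's: you observe $\widehat{v'}=\widehat{v}$, invoke Lemma~\ref{lm:suffix} to reduce everything to the orbits of $\stdperm_v$ versus $\stdperm_{v'}$, and note that these permutations differ only in that values landing in the $x$-interval are shifted to the other half. But the argument is not complete, and the reduction you perform along the way is wrongly justified. You claim that because $v$ and $v'$ agree outside the two swap intervals, ``all LCP values outside the enclosing $x$-interval are immediately seen to match,'' leaving only the $xa$- and $xb$-intervals to check. That does not follow: the swap changes \emph{every} cyclic suffix containing an occurrence of $xa$ or $xb$, and most such suffixes are sorted outside the $x$-interval (a suffix of the form $yxa\cdots$ with $y$ nonempty lies in the $yx$-interval). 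Two adjacent such suffixes sharing the prefix $yxa$ have an LCP that reaches into the changed region, so those positions require exactly the same argument as the ones inside the $x$-interval; ``$v=v'$ there'' is not the reason they are unaffected.

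More importantly, the heart of the lemma is exactly the step you defer: you name ``the careful combinatorial verification of this bijection when orbits of $\stdperm$ re-enter the swap intervals multiple times'' as the main obstacle and do not resolve it. The paper's proof closes this with an idea your plan lacks: only the \emph{first} entry of an orbit into the $x$-interval matters. For two adjacent suffixes, either they diverge before both reach an occurrence of $x$, in which case the LCP is determined by an unchanged prefix; or they agree through a common first occurrence of $x$, both having the form $yxa\cdots$ (the observation that $yx$ contains $x$ only as a suffix makes this first occurrence well defined and shared). In the latter case the LCP equals $|yx|$ plus a range minimum taken inside $\LCP[i_{xa}+1..j_{xa})$ before the swap and inside the corresponding subrange of the $xb$-interval after it, and hypothesis~(1) is invoked precisely to equate these; no analysis of repeated re-entries is needed because the LCP computation bottoms out at the first occurrence of $x$. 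Without this ``first occurrence'' reduction, your sub-tree re-plumbing picture remains a plausible plan rather than a proof.
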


\begin{theorem}
  Algorithm~\ref{alg:bwt-reconstruction} computes in linear time a representation of the set of
  all strings~$v\in\{a,b\}^*$ such that $\LCP_{\ibwt(v)}$ is the input array, or
  returns false if no such string exists.
\end{theorem}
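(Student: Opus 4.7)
The plan is to prove both correctness (the algorithm outputs a valid $v$ and a swap set that together characterize all solutions, or correctly rejects) and linear running time. For correctness I would argue by induction on the recursion tree of \textsc{InferInterval} that the procedure maintains the invariant: on each call, $[i_x..j_x)$, $[i_{ax}..j_{ax})$, $[i_{bx}..j_{bx})$ are the $x$-, $ax$-, and $bx$-intervals (with the latter two possibly empty) for some---not necessarily known---string $x$. The base case is the initial setup in Algorithm~\ref{alg:bwt-reconstruction}: the global RMQ splits $[0..n)$ at the first zero of $\LCP$, producing the $a$- and $b$-intervals by Lemma~\ref{lm:rmq} applied to $x=\varepsilon$ (with the cases $\LCP[k]=\omega$ and $\LCP[k]\notin\{0,\omega\}$ disposed of separately).

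Next I would examine each branch of \textsc{InferInterval}. Computing $k_x$, $k_{ax}$, $k_{bx}$ via RMQs, Lemma~\ref{lm:rmq} tells us the $x$-interval splits into $xya$- and $xyb$-intervals for some $y$ of length $m_x-|x|$. When $m_{ax}>m_x+1$, the $ax$-interval is an $\ell$-interval for $\ell$ strictly greater than $|axy|$, so one of the $axya$- and $axyb$-intervals is empty; by the lemma on interval contents this forces the corresponding subinterval of $v$ to be uniform. The algorithm decides which of the two is empty by comparing the sizes $k_{bx}-i_{bx}$ and $k_x-i_x$ (resp. for $ax$), or, in the fully symmetric subcase where both $m_{ax}>m_x+1$ and $m_{bx}>m_x+1$, by checking equality of LCP substrings as in Lemma~\ref{lm:lcp-match}. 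The only residual ambiguity is when both substrings match both candidate subintervals; Lemma~\ref{lm:swap} applied to the string $xy$ then certifies that the two completions yield the same $\LCP$ array, so recording $[i_x..j_x)$ in $S$ is semantically correct.

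For completeness I would show, by the same induction, that every string $v'$ with $\LCP_{\ibwt(v')}=\LCP$ is obtained from the computed $v$ by flipping a subset of the swap intervals: at every forced branch the chosen assignment is provably the unique one consistent with the input, while at every ambiguous branch both assignments extend to valid solutions (again by Lemma~\ref{lm:swap}). If no valid $v$ exists, the algorithm either rejects immediately (when $\LCP[k]\notin\{0,\omega\}$ at the top level, or when the inferred counts $n_{ax}$, $n_{bx}$ cannot be reconciled inside a recursive call) or is caught by the final sanity check $\LCP_W\neq\LCP$ after rebuilding $W=\ibwt(v)$, $\sa_W$ and $\LCP_W$ in linear time by standard algorithms.

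For the running time, RMQ preprocessing is $\Oh(n)$ and each RMQ is $\Oh(1)$. Each recursive call of \textsc{InferInterval} either splits the three intervals further (producing two child calls associated to two distinct, disjoint $\ell$-intervals of $\LCP$) or terminates after fixing a uniform stretch of $v$; since the total number of $\ell$-intervals is $\Oh(n)$ and each position of $v$ is written once, the structural part is linear. The main obstacle I anticipate is bounding the cost of the array-equality tests on the LCP substrings in the $m_{ax},m_{bx}>m_x+1$ branch: a naive character-by-character comparison could be quadratic in the worst case. I would handle this by implementing the comparison as a parallel two-pointer scan of length $j_{ax}-i_{ax}$ (which in that branch equals the smaller of $k_x-i_x$ and $j_x-k_x$), and charging each character read to the position of $v$ that the same call immediately freezes to $a$ or $b$. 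Since every position of $v$ is frozen exactly once across the entire recursion, the total cost of all comparisons is $\Oh(n)$, giving the claimed linear bound.
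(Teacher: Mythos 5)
Your proposal is correct and follows essentially the same route as the paper's (much terser) proof: the choices in \textsc{InferInterval} are forced by Lemmas~\ref{lm:rmq}--\ref{lm:lcp-match}, the ambiguous cases are exactly the swap intervals and are justified by Lemma~\ref{lm:swap}, invalid inputs are caught by the final verification on lines 9--10, and the recursion over $\ell$-intervals gives linear time. Your charging argument for the LCP-subarray equality tests (each compared entry is charged to a position of $v$ frozen in the same call) is a genuine refinement of the paper's ``it is also easy to see that the algorithm runs in linear time,'' and is exactly the right way to rule out the quadratic blow-up a naive accounting would suggest.
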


\begin{proof}
  Since the algorithm verifies its result (lines 9 and 10), it will
  return false if the input is not a valid LCP array. Given a
  valid LCP array, Algorithm~2 sets all elements of $v$ since it
  recurses on any subinterval that it doesn't set. All the choices
  made by the algorithm are forced by the lemmas in this and the
  previous section. The swap intervals record all alternatives 
  in the cases where the
  content of $v$ could not be fully determined, and
  %Lemma~\ref{lm:swap} shows that 
  all of those alternatives have the same LCP array by Lemma~\ref{lm:swap}.
It is also easy to see that the algorithm runs in linear time.
\qed
\end{proof}

% In Appendix~\ref{sec:CSSILA}, we describe an algorithm for general
% alphabet sizes summarized in the following result.

% \begin{theorem}
%   For an integer array $\Ell[1..n)$ containing $\sigma-1$ zeroes
%   Algorithm~\ref{alg:cssila} computes a representation of the set of
%   all strings~$v\in\Sigma^n$ such that $\LCP_{\ibwt(v)}=\Ell$, or
%   returns false if no such string exists.  It works in time
%   $O(\sigma^2 2^\sigma(\frac{n}{\sigma}+1)^\sigma)$ and space
%   $O(\sigma 2^\sigma(\frac{n}{\sigma}+1)^\sigma)$.
% \end{theorem}

\section{Coupling Constrained Eulerian Cycle}
\label{sec:CCEC}

We will now set out to prove the NP-completeness of the single string
inference problems BCSILA and BTSILA. The proofs are done by a
reduction from 3-SAT via an intermediate problem called Coupling
Constrained Eulerian Cycle (CCEC) described in this section.

% Single
% String Inference from LCP Array (1-SILA) problem where, given an array
% $\LCP$, we want to determine if there exists a single cyclic string $W$
% such that $\LCP_W=\LCP$. 
% Given an LCP array, the algorithm of the previous section produces a
% representation of a potentially exponential size set $V$ and the
% NP-completeness of BCSILA shows that determining whether $\ibwt(v)$ is
% a single string rather than a (multi)set for any $v\in V$ is
% NP-complete too. 

% The proofs are done by a reduction from 3-SAT according to the following
% outline:
% \begin{itemize}
% \item In this section, we define another problem called Coupling
%   Constrained Eulerian Cycle (CCEC), which plays the role of an
%   intermediary, and prove its NP-completeness by a reduction from
%   3-SAT.
% \item In Section~\ref{sec:BCSILA-to-CCEC}, we describe a reduction from
%   BCSILA to CCEC to establish a connection between the two problems.
% \item In Section~\ref{sec:CCEC-to-BCSILA}, we describe an incomplete
%   (one-sided) reduction from CCEC to BCSILA.
% \item In Section~\ref{sec:NP-completeness-BCSILA}, we show that the reduction
%   from the previous section can be made complete in the special case
%   where the CCEC instance was derived from a 3-SAT instance using the
%   construction described in the present section.
% \item In Section~\ref{sec:NP-completeness-BTSILA}, we modify the
%   reduction to BCSILA to obtain a reduction to BTSILA.
% \end{itemize}

Consider a directed graph $G$ of degree two, i.e., every vertex in $G$
has exactly two incoming and two outgoing edges.  If $G$ is connected,
it is Eulerian.  An Eulerian cycle can pass through each vertex in two
possible ways, which we call the
\emph{straight state} and the \emph{crossing state} of the vertex
as illustrated here:
%\vspace{-2.5ex}
\begin{center}
%\hspace*{4cm}
\begin{tikzpicture}
\tikzstyle{every node}=[inner sep=0pt,outer sep=0pt]

\node[circle,draw,inner sep=.2cm] (v1) at (0,0) {};
\node[circle,draw,inner sep=.2cm] (v2) at (5,0) {};

\node (v1in1) [above left=.2cm of v1] {};
\node (v1in1s) [left=1cm of v1.north west] {};
\node (v1in2) [below left=.2cm of v1] {};
\node (v1in2s) [left=1cm of v1.south west] {};
\node (v1out1) [above right=.2cm of v1] {};
\node (v1out1t) [right=1cm of v1.north east] {};
\node (v1out2) [below right=.2cm of v1] {};
\node (v1out2t) [right=1cm of v1.south east] {};

\draw[->,>=latex] (v1in1s) -- (v1.north west);
\draw[->,>=latex] (v1in2s) -- (v1.south west);
\draw[->,>=latex] (v1.north east) -- (v1out1t);
\draw[->,>=latex] (v1.south east) -- (v1out2t);
\draw[thick,densely dotted] (v1.north west) -- (v1.north east);
\draw[thick,densely dotted] (v1.south west) -- (v1.south east);

\node (v2in1) [above left=.2cm of v2] {};
\node (v2in1s) [left=1cm of v2.north west] {};
\node (v2in2) [below left=.2cm of v2] {};
\node (v2in2s) [left=1cm of v2.south west] {};
\node (v2out1) [above right=.2cm of v2] {};
\node (v2out1t) [right=1cm of v2.north east] {};
\node (v2out2) [below right=.2cm of v2] {};
\node (v2out2t) [right=1cm of v2.south east] {};

\draw[->,>=latex] (v2in1s) -- (v2.north west);
\draw[->,>=latex] (v2in2s) -- (v2.south west);
\draw[->,>=latex] (v2.north east) -- (v2out1t);
\draw[->,>=latex] (v2.south east) -- (v2out2t);
\draw[thick,densely dotted] (v2.north west) -- (v2.south east);
\draw[thick,densely dotted] (v2.south west) -- (v2.north east);
\end{tikzpicture}
\end{center}
%\vspace{-1.5ex}
% To distinguish between the two ways, we will call them the
% \emph{straight state} and the \emph{crossing state} of the vertex.
We consider each vertex to be a \emph{switch} that can be flipped
between these two states. The combination of vertex states is called
the \emph{graph state}.
For a given graph state, the paths in the graph form, in general,
a collection of cycles.
The Eulerian cycle problem can then be stated as finding a graph state
such that there is only a single cycle; we call such a graph state Eulerian.

In the \emph{Coupling Constrained Eulerian Cycle (CCEC) problem}, we
are given a graph as described above, an initial graph state, and 
a partitioning of the set of vertices. If we
flip a vertex state, we must simultaneously flip the states of all the
vertices in the
same partition, i.e., the vertices in a partition are coupled.
%(see Figure~\ref{fig:BCSILA-to-CCEC} for an example). 
A graph state that is achievable from the initial state by a set of
such \emph{partition flips} is called a \emph{feasible state}.
The CCEC problem is to determine if there exists a feasible graph state
that is Eulerian.

\begin{theorem}
%  The coupling constrained Eulerian cycle problem is NP-complete.
  CCEC is NP-complete.
\end{theorem}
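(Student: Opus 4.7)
My plan is the standard two-step argument: first establish membership in $\mathrm{NP}$, then hardness by reduction from 3-SAT.

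For membership in $\mathrm{NP}$, note that flipping the same partition twice is the identity and that partition flips commute, so the set of feasible graph states is in bijection with the set of subsets $F$ of the vertex partitions. A short certificate is exactly such a subset $F$. The verifier computes, for each vertex, its resulting state as the initial state XORed with a bit indicating whether its partition lies in $F$, then traces the resulting cycle decomposition in linear time by walking edge-to-edge according to the state of each vertex, and accepts iff the decomposition consists of a single cycle covering every edge of $G$.

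For $\mathrm{NP}$-hardness I would reduce from 3-SAT. Given a formula $\phi = C_1 \wedge \cdots \wedge C_m$ over variables $x_1,\dots,x_n$, I construct a CCEC instance $(G,\text{initial state},\text{partitioning})$ in which each variable $x_i$ is represented by a single partition $P_i$, with the effect that flipping $P_i$ corresponds to toggling the truth value of $x_i$. Each literal occurrence of $x_i$ or of its negation in some clause contributes a dedicated vertex placed inside $P_i$. Positive-literal vertices are initialized in one state (say straight) and negative-literal vertices in the opposite state, so that after any set of partition flips the state of each literal vertex faithfully encodes whether the corresponding literal is satisfied by the induced truth assignment.

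For each clause $C_j$ I would attach a small clause gadget wired into its three literal vertices, and string all clause gadgets together along a single backbone so that in the ``all literals satisfied'' configuration the whole graph is traversed by one Eulerian cycle. The gadget must be designed so that when at least one of the three literal vertices of $C_j$ is in the satisfied state the gadget can be absorbed into the backbone cycle, but when all three literal vertices are in their falsified state it necessarily peels off a small closed sub-cycle that no further flip can merge back in. Combined over all clauses this ensures that a feasible state is Eulerian iff every clause is satisfied, i.e.\ iff $\phi$ is satisfiable, and the construction is clearly of polynomial size and computable in polynomial time. The main obstacle is engineering the clause gadget so that it behaves correctly across all eight possible local configurations of its three coupled literal vertices while respecting the fact that a single partition $P_i$ may simultaneously touch many different clause gadgets; once such a gadget is exhibited, the correctness of the reduction is a direct case analysis and the theorem follows.
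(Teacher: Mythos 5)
Your overall plan coincides with the paper's: membership in NP via the observation that partition flips commute and are involutions (so a subset of partitions is a polynomial certificate, checkable by tracing the cycle decomposition), and hardness by a reduction from 3-SAT in which each variable becomes one partition, each literal occurrence becomes a vertex of that partition whose state encodes the truth of the literal, and the clause gadgets are strung along a single backbone cycle. The membership argument is fine and the variable/partition encoding matches the paper exactly.

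However, there is a genuine gap: the clause gadget, which is the entire technical content of the reduction, is never exhibited. You explicitly defer it as ``the main obstacle,'' but without it the proof does not exist --- one must actually produce a constant-size wiring of degree-two vertices (two incoming and two outgoing edges each) that, for all seven satisfying configurations of the three literal vertices, can be absorbed into the backbone, and that for the all-falsified configuration necessarily detaches a sub-cycle. It is far from clear that a gadget built on the three literal vertices alone can do this, since their states are entirely dictated by the truth assignment and leave no local freedom. The paper resolves precisely this difficulty by adding two \emph{free} vertices to each five-vertex clause gadget, each forming a singleton partition with an arbitrarily choosable state: when all three literal vertices are straight, part of the gadget is separated from the main cycle no matter how the free vertices are set, whereas if at least one literal vertex is crossing, the free vertices can always be flipped so that the whole gadget joins the backbone. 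Your proposal neither introduces such auxiliary degrees of freedom nor verifies the degree constraints, so the case analysis you invoke at the end has nothing to operate on.
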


\begin{proof}
  The proof is by reduction from 3-SAT. To obtain a CCEC graph from a
  3-CNF formula, a gadget of five vertices is constructed from each
  clause and these gadgets are connected by a cycle.  In each gadget,
  three of the vertices are labeled by the literals of the
  corresponding clause; the other two are called free vertices.
  See Fig.~\ref{fig:3SAT-to-CCEC} for an illustration.

\begin{figure}[ht]
  \centering
  \tiny
  \begin{tikzpicture}
    \useasboundingbox (-.4,-.7) (12.92,1.3);
    \tikzstyle{every node}=[circle,draw,minimum size=.5cm,inner sep=0pt];
    \node (0,0) (v11) {$x_1$};
    \node (v1a) [above right of=v11] {};
    \node (v12) [below right of=v1a] {$x_2$};
    \node (v1b) [above right of=v12] {};
    \node (v13) [below right of=v1b] {$\neg x_3$};

    \node (v21) [right=1.5cm of v13] {$\neg x_1$};
    \node (v2a) [above right of=v21] {};
    \node (v22) [below right of=v2a] {$x_3$};
    \node (v2b) [above right of=v22] {};
    \node (v23) [below right of=v2b] {$x_4$};

    \node (v31) [right=1.5cm of v23] {$x_1$};
    \node (v3a) [above right of=v31] {};
    \node (v32) [below right of=v3a] {$\neg x_2$};
    \node (v3b) [above right of=v32] {};
    \node (v33) [below right of=v3b] {$\neg x_4$};

    \draw[->,>=latex] (v11) -- (v1a);
    \draw[->,>=latex] (v1a) -- (v12);
    \draw[->,>=latex] (v12) -- (v1b);
    \draw[->,>=latex] (v1b) -- (v13);
    \draw[->,>=latex] (v11.south east) -- (v12.south west);
    \draw[->,>=latex] (v12.south east) -- (v13.south west);
    \draw[->,>=latex] (v1a.north east) -- (v1b.north west);
    \draw[->,>=latex] (v13.north east) .. controls +(1,1) and +(-2,1)
    .. (v1a.north west);
   \draw[<-,>=latex] (v11.north west) .. controls +(-1,1) and +(2,1)
   .. (v1b.north east); 

    \draw[->,>=latex] (v21) -- (v2a);
    \draw[->,>=latex] (v2a) -- (v22);
    \draw[->,>=latex] (v22) -- (v2b);
    \draw[->,>=latex] (v2b) -- (v23);
    \draw[->,>=latex] (v21.south east) -- (v22.south west);
    \draw[->,>=latex] (v22.south east) -- (v23.south west);
    \draw[->,>=latex] (v2a.north east) -- (v2b.north west);
    \draw[->,>=latex] (v23.north east) .. controls +(1,1) and +(-2,1)
    .. (v2a.north west);
   \draw[<-] (v21.north west) .. controls +(-1,1) and +(2,1)
   .. (v2b.north east); 

    \draw[->,>=latex] (v31) -- (v3a);
    \draw[->,>=latex] (v3a) -- (v32);
    \draw[->,>=latex] (v32) -- (v3b);
    \draw[->,>=latex] (v3b) -- (v33);
    \draw[->,>=latex] (v31.south east) -- (v32.south west);
    \draw[->,>=latex] (v32.south east) -- (v33.south west);
    \draw[->,>=latex] (v3a.north east) -- (v3b.north west);
    \draw[->,>=latex] (v33.north east) .. controls +(1,1) and +(-2,1)
    .. (v3a.north west);
   \draw[<-,>=latex] (v31.north west) .. controls +(-1,1) and +(2,1)
   .. (v3b.north east); 

   \draw[->,>=latex] (v13.south east) -- (v21.south west);
   \draw[->,>=latex] (v23.south east) -- (v31.south west);

   \coordinate[below=.5cm of v33] (c1);
   \coordinate[below=.5cm of v11] (c2);
   \draw[->,>=latex] (v33.south east) .. controls +(.5,-.2) and +(.5,0) .. (c1) 
   -- (c2) .. controls +(-.5,0) and +(-.5,-.2) .. (v11.south west);
  \end{tikzpicture}
  \caption{The CCEC graph corresponding to a 3-CNF formula $(x_1
    \lor x_2 \lor \neg x_3) \land (\neg x_1 \lor x_3 \lor x_4) \land
    (x_1 \lor \neg x_2 \lor \neg x_4)$.}
  \label{fig:3SAT-to-CCEC}
\end{figure}

  Each labeled vertex is in a straight state if the labeling literal
  is false and in a crossing state if the literal is true; their
  initial state corresponds to some arbitrary truth assignment to the
  variables.  For each variable $x_i$, there is a vertex partition
  consisting of all vertices labeled by $x_i$ or $\neg x_i$, so that
  flipping this partition corresponds to changing the truth value of
  $x_i$.  Each free vertex forms a singleton partition and has an
  arbitrary initial state. Thus a graph state is feasible iff the 
  labeled vertex states correspond to some truth assignment.

  If a clause is false for a given truth assignment, the labeled
  vertices in the corresponding gadget are all in a straight
  state. This separates a part of the gadget from the main cycle and
  thus the graph state is not Eulerian.  If a clause is true, at least
  one of the labeled vertices in the gadget is in a crossing
  state. Then we can always choose the state of the free vertices so
  that the full gadget is connected to the main cycle. Thus there
  exists a feasible Eulerian graph state iff there exists a
  truth assignment to the variables that satisfies all clauses.
\qed
\end{proof}

For purposes that will become clear later, we modify the above
construction by adding some
extra components to the graph without changing the validity of the
reduction. Specifically, for each variable $x_i$ in the 3-CNF
formula we add the following gadget to the main cycle:
\begin{center}
\begin{tikzpicture}
  \footnotesize
  \tikzstyle{every node}=[circle,draw,minimum size=.65cm,inner sep=1pt];
  \node (v1) at (0,0) {$x_i$};
  \node (v2) at (2,0) {$x_i$};
  \node (v3) at (4,0) {$x_i$};
  \node (v4) at (6,0) {$\neg x_i$};  
  \draw[->,>=latex] (-2,-.23) -- (v1.south west);
  \draw[->,>=latex] (v1.south east) -- (v2.south west);
  \draw[->,>=latex] (v2.south east) -- (v3.south west);
  \draw[->,>=latex] (v3.south east) -- (v4.south west);
  \draw[->,>=latex] (v4.south east) -- (8,-.23);
  \draw[->,>=latex] (v1.north east) -- (v2.north west);
  \draw[->,>=latex] (v2.north east) -- (v3.north west);
  \draw[->,>=latex] (v3.north east) -- (v4.north west);
  \draw[->,>=latex] (v4.north east) .. controls (8,1) and (-2,1) .. (v1.north west);
\end{tikzpicture}
\end{center}
The vertices in the gadget are treated similarly to the other vertices
in the graph: they belong to the partition with the other vertices
labeled by $x_i$ or $\neg x_i$, and the initial state is determined by
the truth value of the labeling literal. It is easy to see that the
gadget will be fully connected to the main cycle whether $x_i$ is true or
false. Thus the extra gadgets have no effect on the existence of an
Eulerian cycle. Finally, we insert to the main cycle a single vertex
labelled $y$ with a self loop and forming a singleton partition.

\section{BCSILA to CCEC}
\label{sec:BCSILA-to-CCEC}

The next step is to establish a connection between the BCSILA and CCEC
problems by showing a reduction from BCSILA 
to CCEC. 
Although the direction of the reduction is opposite to what we want,
this construction plays a key role in the analysis of the main
construction described in the next section.

Given a BCSILA instance (an integer array), we use Algorithm~1 to produce a
representation of a set $V$ of strings.  The problem is then to decide
if there exists $v\in V$ such that $\ibwt(v)$ is a single (cyclic) string.
We will write $V$ as a string
with brackets marking the swaps.  For example,
$V=b[ab][ab]a=\{bababa,babbaa,bbaaba,bbabaa\}$.  In
Example~\ref{ex:ibwt}, we saw that the inverse BWT of a string $v\in
V$ can be represented as a graph $G_v$ where the vertices are labeled
by positions in $v$ and there is an edge between vertices $i$ and $j$
if, for some character $c\in\{a,b\}$ and some integer $k$,
$\widehat{v}[i]=c$ is the $k$th occurrence of $c$ in $\widehat{v}$ and
$v[j]=c$ is the $k$th occurrence of $c$ in $v$. Such an edge $(i,j)$
is labeled by $c_k$. Note that $\forall v\in V$, $\widehat{v}$ is the
same; we will denote it by $\widehat{V}$.  We form a generalized graph
$G_V$ as a union of the graphs $G_v$, $v \in V$
%An example is given in Fig.~\ref{fig:BCSILA-to-CCEC}.
(see Fig.~\ref{fig:BCSILA-to-CCEC} for an example).
%The graph $G_V$ can be constructed as follows.  

Consider $a_k$ (the
$k$th $a$) in $\widehat{V}$, say at position $i$. If $a_k$ is outside
any swap region in $V$, say at position $j$, there is a single edge
$(i,j)$ in $G_V$ labeled by $a_k$. If $a_k$ is within a swap region in
$V$, it has two possible positions in the strings $v\in V$, say $j$
and $j'$. That same pair of positions are also the possible positions
of some $b$, say $b_{k'}=\widehat{V}[i']$.  Then $g_v$ has two edges,
$(i,j)$ and $(i,j')$, labeled with $a_k$ and two edges, $(i',j)$ and
$(i',j')$, labeled with $b_{k'}$. The positions/vertices $j$ and $j'$
are called a \emph{swap pair}.

To obtain a CCEC graph $\widetilde{G}_V$, we make two modifications to
$G_V$.  First, we merge each swap pair into a single vertex.  Each
merged vertex now has two incoming and two outgoing edges and all
other vertices have one incoming and one outgoing edge.  Second, we
remove all vertices with degree one by concatenating their incoming
and outgoing edges (see Fig.~\ref{fig:BCSILA-to-CCEC}).

The initial state of the vertices in $\widetilde{G}_V$ is set so that
the cycles in $\widetilde{G}_V$
correspond to the cycles in $G_v$ for some $v\in V$. Two vertices in
$\widetilde{G}_V$ belong to the same partition if their labels belong
to the same swap interval in $V$. Then we have a one-to-one
correspondence between swaps in $V$ and partition flips in
$\widetilde{G}_V$. If this CCEC instance has a solution, the Eulerian
cycle spells a single string realizing the input LCP array.
If the CCEC instance has no solution, the original BCSILA problem has
no solution either.

\section{BCSILA is NP-Complete}
\label{sec:NP-completeness-BCSILA}

We are now ready to show that BCSILA is NP-complete using the
reduction chain 3-SAT $\rightarrow$ CCEC $\rightarrow$ BCSILA.  The
first step was described in Section~\ref{sec:CCEC}, and we will next
describe the second. The latter reduction is not a general reduction
from an arbitrary CCEC instance but works only for a CCEC instance
obtained by the first reduction (including the extra gadgets).

The above BCSILA to CCEC reduction transforms each pair of swapped
positions into a vertex and each swap interval into a vertex
partition. Our construction creates a BCSILA instance such that the
resulting BWT has the necessary swaps to produce the CCEC instance
vertices and partitions.  However, the BWT also has some unwanted
swaps producing spurious vertices, but we will show that these
spurious vertices do not invalidate the reduction.

Starting from a CCEC instance, we construct a set of
cyclic strings and obtain the BCSILA instance as the LCP array of that
string set. The construction associates two strings to each vertex and
the cyclic strings are formed by concatenating the vertex strings
according to the cycles in the graph in its initial state. The two
passes of the cycles through a vertex must use different strings but
it does not matter which pass uses which string.

Let $n$ be the number of vertices in the CCEC graph and let $m$ be the
number of vertex partitions. We number the vertices from $1$ to $n$
and the partitions from $1$ to $m$.  
% Small partition numbers are assigned to singleton partitions and large
% numbers to non-singleton partitions.
The biggest partition number is assigned to the partition
with the vertex $y$, the second biggest to the partition
corresponding to the variable $x_1$, the third biggest to variable
$x_2$, and so on.  The three biggest vertex numbers are assigned to
the vertices labeled $x_1$ in the extra gadget for the variable $x_1$,
the next three biggest to the extra gadget vertices labeled $x_2$ and
so on. Within each extra gadget, the biggest number is assigned to the
middle one of the three vertices.  The strings associated with a
vertex are $ba^kba^{m+2h}$ and $bba^kbba^{m+2h-1}$, where $k$ is the
partition number and $h$ is the vertex number.  This completes the
description of the transformation from a CECC instance to a BCSILA
instance. 

Let us now analyze the transformation by changing the BCSILA instance
back to a CCEC instance using the construction of the preceding section.
Specifically, we will analyze the swaps in the BWT produced from the
LCP array.
Let $W$ be the set of cyclic strings constructed from the CCEC
instance, and let $V$ be the BWT
with swaps constructed from $\LCP_W$.  
An interval $[i..j)$ in $V$ is a swap interval if and only if 
%the following conditions hold:
%  \begin{enumerate}
%  \item 
(1) $[i..j)$ is an $x$-interval for a string $x$ such that either
    $occ(axa)=occ(bxb)=occ(x)/2$ or $occ(axb)=occ(bxa)=occ(x)/2$,
    where $occ(y)$ is the number of occurrences of $y$ in $W$, and
%  \item 
(2) $\LCP_W[i+1..k)=\LCP_W[k+1..j)$, where $k=(i+j)/2$.
  %\end{enumerate}
If $[i..j)$ is a swap interval, the string $x$ is called its
\emph{swap core}. Our goal is to identify all swap cores. 
%Notice that if $occ(x)=j-i=2$, the second condition is trivially true.

Let us first consider strings of the form $x=ba^kb$. If $k>m$,
$occ(x)\le 1$ and $x$ cannot be a swap core. For $k\in [1..m]$, $x$ is
always a swap core and corresponds to the CCEC partition numbered
$k$. Let $v=\bwt(W)$ and let $V'$ be $v$ together with the swaps for
cores of the form $x=ba^kb$, $k\in[1..m]$.  It is easy to verify that a CCEC
instance constructed from $V'$ as described in the previous section is
identical to the original CCEC instance.  Thus, if there were no other
swap cores, we would have a perfect reduction. 

Unfortunately, there are other swap cores.
%, which might break the reduction. 
A systematic examination of all strings in
Appendix~\ref{sec:swap-cores} shows that the other swap cores must be
of the following forms: $ba^{m+2n-1}$, $a^{m+2n-1}b$, $a^mba^m$,
$a^mbba^m$,$a^kba^h$, $a^kbba^h$, $a^kba^iba^h$ and
$a^kbba^ibba^h$. Furthermore, it shows that each such swap core has
exactly two occurrences, which means that the values $k$ and/or $h$
have to be sufficiently large.  Each extra swap core adds a free
vertex that is connected to the graph by making two existing edges to
pass through the new vertex. Because of the way we chose to assign the
biggest partition and vertex numbers, all the additional connections
are within the extra gadgets, which does not change the existence of
an Eulerian cycle.  This completes the proof.

\begin{theorem}
  BCSILA is NP-complete.
\end{theorem}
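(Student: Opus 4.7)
The plan is to assemble the proof from the three ingredients already set up in the excerpt. First, to see that BCSILA lies in NP, I would note that a binary string $v$ is itself a polynomial-size witness: given $v$, one can compute $\ibwt(v)$, its cyclic suffix array, and $\LCP_{\ibwt(v)}$ in polynomial time, and then compare the result entry-wise with the input array. This is exactly the verification already performed in lines 8--10 of Algorithm~\ref{alg:bwt-reconstruction}.

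For NP-hardness I would compose the two reductions developed in the previous sections: $\text{3-SAT}\to\text{CCEC}$ from Section~\ref{sec:CCEC} and $\text{CCEC}\to\text{BCSILA}$ from Section~\ref{sec:NP-completeness-BCSILA}. Given a 3-CNF formula $\varphi$, the plan is to build the CCEC graph $G$ of Section~\ref{sec:CCEC} together with the per-variable extra gadgets and the $y$-vertex with self loop, and then, following Section~\ref{sec:NP-completeness-BCSILA}, attach to each vertex the two strings $ba^kba^{m+2h}$ and $bba^kbba^{m+2h-1}$ (with $k$ the partition number and $h$ the vertex number, both assigned as prescribed) and concatenate these along the two passes of the initial Eulerian tour. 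The resulting multiset of cyclic strings $W$ has size polynomial in $|\varphi|$, so the BCSILA instance $\LCP_W$ is computed in polynomial time.

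The heart of the argument, and the main obstacle, is correctness: passing $\LCP_W$ through Algorithm~\ref{alg:bwt-reconstruction} produces a swap-annotated BWT $V$ whose intended swap cores $ba^kb$ for $k\in[1..m]$ reproduce exactly the CCEC partitions, but the case analysis in Appendix~\ref{sec:swap-cores} reveals additional spurious cores of the forms $ba^{m+2n-1}$, $a^{m+2n-1}b$, $a^mba^m$, $a^mbba^m$, $a^kba^h$, $a^kbba^h$, $a^kba^iba^h$, and $a^kbba^ibba^h$. To handle them I would apply the BCSILA-to-CCEC construction of Section~\ref{sec:BCSILA-to-CCEC} to $V$, obtaining a CCEC graph $\widetilde{G}_V$ that contains the original $G$ but with some extra free (singleton-partition) vertices inserted into existing edges. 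Here the careful numbering of partitions and vertices pays off: because the largest partition index was reserved for $y$ and the largest vertex indices for the per-variable extra gadgets, every spurious core has both of its two occurrences located inside one extra gadget (or incident to the $y$-loop), so the inserted free vertices only subdivide edges of $G$ that are on the main cycle in every feasible state. Edge subdivision never disconnects a cycle, so $\widetilde{G}_V$ admits a feasible Eulerian state if and only if $G$ does.

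Putting the pieces together, Section~\ref{sec:BCSILA-to-CCEC} gives that an Eulerian feasible state of $\widetilde{G}_V$ corresponds to a choice of swaps yielding a single-string $\ibwt(v)$ for some $v\in V$, i.e.\ a single-string solution of the BCSILA instance $\LCP_W$. By the analysis above, such a state exists iff $G$ has a feasible Eulerian state, iff $\varphi$ is satisfiable. Combined with membership in NP, this establishes that BCSILA is NP-complete.
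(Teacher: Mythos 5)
Your proposal follows the paper's own proof essentially verbatim: membership in NP via the verification step of Algorithm~\ref{alg:bwt-reconstruction}, hardness via the chain 3-SAT $\rightarrow$ CCEC $\rightarrow$ BCSILA with the vertex strings $ba^kba^{m+2h}$ and $bba^kbba^{m+2h-1}$, and the same treatment of the spurious swap cores (each has exactly two occurrences and, by the choice of the largest partition and vertex numbers, only adds free vertices whose connections stay inside the extra gadgets, leaving the existence of an Eulerian cycle unchanged). One minor slip worth noting: the multiset $W$ is built from the cycles of the graph in its \emph{initial} state, which need not be Eulerian --- it is in general a collection of cycles yielding several cyclic strings --- but this does not affect the argument.
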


\section{BTSILA is NP-Complete}
\label{sec:NP-completeness-BTSILA}

We will now show that BTSILA is NP-complete by modifying the above
reduction for BCSILA to include a single terminator symbol $\$$ in the
strings. The modification is applied to the set $W$ of cyclic strings
derived from the CCEC instance such that $\LCP_W$ is the BCSILA
instance. Specifically, we replace the (unique) occurrence of
$a^{m+2n}$, which is the longest consecutive run of $a$'s, with
$a^{m+2n+1}\$a^{m+2n}$ to obtain $W_{\$}$ and $\LCP_{W_{\$}}$. We will
show that $\LCP_{W_{\$}}$ is a yes-instance of CSILA iff $\LCP_W$ is a
yes-instance of BCSILA. Furthermore, if a cyclic string $u$ is a
solution to the CSILA instance, i.e., $\LCP_u=\LCP_{W_{\$}}$, then
$\LCP_{v}=\LCP_{W_{\$}}$, where $v$ is the rotation of $u$ ending with
$\$$ interpreted as a terminated string. Thus $\LCP_{W_{\$}}$ is a
yes-instance of BTSILA iff it is a yes-instance of CSILA iff $\LCP_W$
is a yes-instance of BCSILA.

% Notice that $\LCP_{W_{\$}}$
% (with a leading zero) is a valid LCP
% array for a single cyclic string $w$ if and only if it is a valid LCP
% array for the rotation of $w$ ending with $\$$ interpreted as a
% terminated string.

% The modification happens after we have derived a set $W$ of cyclic
% strings from the CCEC instance but before computing $\LCP_W$
% as the BCSILA instance. 

In general, adding even a single occurrence of a third symbol
complicates the inference of the BWT from the LCP array and means that
the set of equivalent BWTs can no more be described by a set of swaps.
Consider how the operation of the procedure InferInterval (Algorithm
2) changes. First, it gets an extra $\$x$-interval as an input in
addition to $x$-, $ax$- and $bx$-intervals. 
%(though the size of the extra interval is at most one). 
Second, the $x$-interval may be split
into three subintervals, $xy\$$-, $xya$- and $xyb$-intervals, instead
of two (which happens when the LCP interval contains two identical
minima). This leads to many more combinations to consider, and some of
those combinations are more complicated.

Fortunately, in our case, having the single $\$$ surrounded by the two
longest runs of $a$'s simplifies things, and we will describe a
modification of InferInterval to handle this case. Every call to
InferInterval belongs to one of the following three types:
%\begin{itemize}
%\item 
(1) the $x$-interval is split into two and the $\$x$-interval is
  empty,
%\item 
(2) the $x$-interval is split into two and the $\$x$-interval is
  non-empty, and
%\item 
(3) the $x$-interval is split into three.
%\end{itemize}
The first case needs no modification at all. The other two cases mean
that either $\$x$ or $x\$$ occurs in the produced string set, and
since this property is not affected by swaps (or the threeway
permutations described below), one of them occurs in every produced
string set including $W_{\$}$. 
Since $x$ must occur at least twice,
%Therefore, 
one of the latter two cases
happens iff $x=a^k$ for some $k\in[0..m+2n]$. Although in general
InferInterval cannot always know $x$, it is easy to keep track of $x$
when $x=a^k$.

% First, there is no
% need to modify processing an $x$-interval if we know that $x\$$ and
% $\$x$ do not occur. Second, if an $x$-interval has size one, it is
% easy to process. If an $x$-interval does not fall into one these two
% categories, we must have $x=a^k$ for some $k\in[0..m+2n]$. Although in
% general InferInterval cannot always know $x$, it is easy to keep track
% of $x$ when $x=a^k$.

When InferInterval is called with $x=a^k$ for $k \le m+2n-2$, the
$x$-interval and the $ax$-interval are always split into three, the
$bx$-interval is split into two, and there is a $\$x$-interval of size
one. In general, we might not know whether the two subintervals of
$bx$-interval are $bx\$$- and $bxa$-, or $bx\$$- and $bxb$-, or $bxa$-
and $bxb$-intervals.  However, since $x\$$- and $ax\$$-intervals both
have size one, there can be no $bx\$$-interval, and thus all the
subintervals can be uniquely determined and recursed on.  When
$x=a^{m+2n-1}$, the $x$-interval has size five and is split into three
with the middle part ($xa$-interval) having size three. The $ax$
interval has size three and is split into three.  In this case too,
only one combination of subintervals is possible.

When $x=a^{m+2n}$, the $x$-interval has size three and is split into
three, and the $\$x$-, $ax$- and $bx$-intervals have size
one. Therefore, the $x$-interval in the BWT contains some permutation
of the three characters and all permutations are valid. This threeway
permutation adds to the variation provided by the swaps in other
parts of the BWT. A more careful analysis shows that the
BWT $x$-interval of
\begin{itemize}
\item $\$ab$ or $\$ba$ implies an occurrence of
%that any produced string set must contain 
$\$x\$$ which is only possible if $x\$$ is a separate
  string;
\item $ba\$$ implies 
%that any produced string set must contain 
an occurrence of $axa$ which is only possible if a single $a$ is
  separate string;
\item $a\$b$ implies 
%that any produced string set must contain 
occurrences of $ax\$$
  and $\$xa$ which is only possible if $ax\$$ is a separate string;
\item $ab\$$ implies an occurrence of 
% that any produced string set must contain
  $ax\$xb$; and
\item $b\$a$ implies an occurrence of 
%that any produced string set must contain
  %$x\$axb$.
  $bx\$xa$.
\end{itemize}
A single string solution is only possible in the last two cases, and
any such solution corresponds to a solution for the BCSILA instance 
$\LCP_W$ (obtained by replacing $ax\$x$ or $x\$ax$ with $x$). Hence
$\LCP_{W_{\$}}$ is a yes-instance of CSILA, and thus of BTSILA, if and
  only if $\LCP_W$ is a yes-instance of BCSILA, which proves the
  following result.

  \begin{theorem}
    BTSILA is NP-complete.
  \end{theorem}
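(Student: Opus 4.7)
The plan is to establish NP-hardness by polynomial-time reduction from BCSILA, which was shown NP-complete in Section~\ref{sec:NP-completeness-BCSILA}, following the construction already described in this section. Membership in NP is immediate: given a candidate terminated binary string, its LCP array can be computed in linear time and compared against the input. The reduction itself maps $\LCP_W$ to $\LCP_{W_\$}$ by explicitly constructing $W_\$$ from $W$ and computing its LCP array, both of which have size linear in that of the input.

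First I would prove the easy half of the correspondence between CSILA and BTSILA: since $\$$ occurs exactly once in $W_\$$, any cyclic string $u$ with $\LCP_u = \LCP_{W_\$}$ contains a single $\$$, and rotating $u$ so that $\$$ becomes the last symbol and reinterpreting the result as a terminated string produces a BTSILA solution with the same LCP array. This reduces the task to showing that $\LCP_{W_\$}$ is a yes-instance of CSILA if and only if $\LCP_W$ is a yes-instance of BCSILA.

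Next I would carry out the analysis of the modified \textsc{InferInterval} procedure, arguing that each recursive call falls into exactly one of the three listed types and that the $\$x$-interval is nonempty only when $x = a^k$. Because the leftmost branch of the recursion always tracks the all-$a^k$ prefixes, the algorithm can detect this case and proceed accordingly. For $k \le m+2n-2$ and for $k = m+2n-1$, I would verify, by examining the sizes of the candidate subintervals together with the constraints imposed by $\LCP_{W_\$}$ (in particular that the $x\$$- and $ax\$$-intervals both have size one, forcing the $bx\$$-interval to be empty), that the assignment of subintervals among $xy\$$-, $xya$-, and $xyb$-intervals is uniquely forced at every such step.

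The main obstacle is the threeway permutation freedom at $x = a^{m+2n}$, where the BWT's $x$-interval consists of exactly one $\$$, one $a$, and one $b$ in some order. I would examine each of the six permutations in turn, showing that four of them ($\$ab$, $\$ba$, $ba\$$, $a\$b$) force occurrences such as $\$x\$$, $axa$, or the simultaneous pair $ax\$$ and $\$xa$, which cannot be realized by a single cyclic string containing a single $\$$ and instead require the IBWT to split into multiple cyclic strings. Only $ab\$$ and $b\$a$ remain, corresponding respectively to occurrences $ax\$xb$ and $bx\$xa$ inside a single cyclic string; I would then verify that replacing $ax\$x$ or $x\$ax$ by $x$ gives a bijection between single-string solutions of $\LCP_{W_\$}$ and single-string solutions of $\LCP_W$. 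Combined with the easy direction, this establishes the biconditional and hence the NP-completeness of BTSILA.
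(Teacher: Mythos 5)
Your proposal follows essentially the same route as the paper: the same modification of $W$ to $W_{\$}$ by inserting $\$$ inside the longest run of $a$'s, the same rotation argument linking CSILA and BTSILA solutions, the same three-type analysis of the modified InferInterval with the $\$x$-interval nonempty only for $x=a^k$, and the same six-permutation case analysis at $x=a^{m+2n}$ isolating $ab\$$ and $b\$a$ as the only single-string cases, with the substitution of $ax\$x$ or $x\$ax$ by $x$ closing the equivalence. The argument is correct and matches the paper's proof in structure and detail.
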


\section{Algorithm for CSSILA}
\label{sec:CSSILA-short}

In all of the above, we have assumed a binary alphabet (excluding the
single symbol \$).
In this section, we consider the
%we outline the idea of a dynamic programming procedure solving 
CSSILA 
 problem (i.e. Cyclic String Set Inference from LCP
Array) without a restriction on the alphabet size.
% for alphabets of any size.
% % in the case of constant size alphabets
% (more detailed description can be found in section~\ref{sec:CSSILA} in
% the appendix).

Let $\Ell[1..n)$ be an instance of the CSSILA problem, i.e.,
an array of integers (and possibly $\omega$'s).
Let $\sigma-1$ be the number of zeroes in $\Ell$, and
$\Sigma$
% $=\{a_1,a_2,\ldots,a_\sigma\}$ be 
an alphabet of size~$\sigma$.
%and $\Ell[1..n)$ be an integer sequence containing $\sigma-1$ zeroes.  
As with the binary BCSSILA problem, we describe
an algorithm that outputs a representation of
the set $W_{\Ell}= \{ w\in\Sigma^n : \LCP_{\ibwt(w)}=\Ell \}$; in this
case the representation is an automaton that accepts $W_{\Ell}$.
We show the following result.

\begin{theorem}
\label{thm:CSSILA}
  Given an array $\Ell[1..n)$ of integers (and possibly $\omega$'s) 
  containing $\sigma-1$ zeroes, we
  can construct a deterministic 
  finite automaton recognizing $W_{\Ell}$ in time
  $O(\sigma^2 2^\sigma(\frac{n}{\sigma}+1)^\sigma)$ and space
  $O(\sigma 2^\sigma(\frac{n}{\sigma}+1)^\sigma)$.
\end{theorem}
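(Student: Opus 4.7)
The plan is to generalize the BCSSILA construction of Section~\ref{sec:SILA} to alphabets of size $\sigma$, and to represent the (potentially exponentially many) valid BWTs as a deterministic finite automaton rather than as a single string augmented with swap intervals. The $\sigma-1$ zeroes of $\Ell$ partition $[0..n)$ into $\sigma$ top-level $c$-intervals, one per character, and the definitions of $x$-interval and $\ell$-interval carry over verbatim. Lemma~\ref{lm:rmq} generalizes: any $x$-interval that is an $\ell$-interval with $\ell<\omega$ splits into up to $\sigma$ contiguous subintervals, one per character, corresponding to the extensions $xyc$. To reconstruct the BWT symbol by symbol we therefore recurse on $(\sigma+1)$-tuples consisting of an $x$-interval together with the $cx$-intervals for every $c\in\Sigma$, splitting each into subintervals based on RMQs of $\Ell$ and matching them up, exactly as InferInterval does for $\sigma=2$.

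Next, I would construct the DFA whose states encode the ``status'' of the recursive decomposition at each point as the BWT is read left-to-right. Concretely, a state records, for the currently active $x$-interval being filled, the number of positions still pending in each of its $\sigma$ character sub-blocks (a $\sigma$-tuple of non-negative integers summing to at most $n$), together with a subset of $\Sigma$ indicating which of those sub-blocks are still ``open'', i.e.\ have not yet been definitively matched to a $cx$-subinterval via the generalized Lemma~\ref{lm:lcp-match}. The number of such size-tuples is bounded by $O((n/\sigma+1)^\sigma)$ via the standard composition counting / AM--GM argument for tuples with bounded sum, and the subset contributes the factor of $2^\sigma$. Storing each state in $O(\sigma)$ machine words yields the claimed $O(\sigma 2^\sigma (n/\sigma+1)^\sigma)$ space bound.

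Transitions correspond to committing the next BWT symbol to a specific character $c$: from state $s$ we decrement the $c$-component of the size-tuple, update the open-subset if a sub-block is thereby exhausted, and verify the local LCP consistency constraints dictated by the generalized Lemmas~\ref{lm:rmq} and~\ref{lm:lcp-match}. Each transition can be computed in $O(\sigma)$ time, and there are at most $\sigma$ outgoing transitions per state, giving the $O(\sigma^2 2^\sigma (n/\sigma+1)^\sigma)$ running time. Correctness follows by induction on recursion depth, mirroring the proof of Theorem~4: every accepted word is a valid BWT because every transition preserves the LCP invariants, and every valid BWT is accepted because at each branching step the DFA explores every locally consistent character choice, which by the generalization of Lemma~\ref{lm:swap} covers all BWTs equivalent under the multi-character analog of swap intervals.

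The main obstacle, I expect, is designing the state representation with the right level of granularity: it must be coarse enough to respect the stated $2^\sigma (n/\sigma+1)^\sigma$ bound, yet fine enough to guarantee that two computation paths reaching the same state really continue identically, so that the automaton is both deterministic and accepts exactly $W_\Ell$. A secondary subtlety is the handling of the $\omega$-entries of $\Ell$, arising from runs of identical cyclic suffixes induced by non-primitive roots; these must be absorbed cleanly into the size-tuples so they do not inflate the state count beyond $(n/\sigma+1)^\sigma$.
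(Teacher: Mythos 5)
There is a genuine gap here, and it sits exactly where you yourself flag uncertainty. Your construction rests on first generalizing the recursive interval-matching of InferInterval to $\sigma$-ary alphabets (recursing on an $x$-interval together with all $cx$-intervals and matching up their subintervals), and then reading off a DFA whose states summarize the status of that decomposition. But the paper explicitly abandons this route for $\sigma>2$: once there are three or more characters, the subintervals of the $x$-interval can no longer be uniquely matched to subintervals of the $cx$-intervals by size and LCP comparisons, so the ``generalized Lemma~\ref{lm:lcp-match}'' your transitions are supposed to verify does not exist in a usable form. Consequently the ``local LCP consistency constraints'' that your transition function checks are never actually specified, and your own closing admission --- that the state must be fine enough that two paths reaching the same state really continue identically --- is precisely the unproved claim on which determinism and correctness of the automaton hinge.

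The missing idea is the paper's \emph{pair constraint} characterization (Lemma~\ref{lm:pair-constraints}): $w\in W_\Ell$ iff for every pair of consecutive occurrences of a character $c$, at positions $h<k$ and being the $i$-th and $(i{+}1)$-th occurrences of $c$, one has $\Ell[i_c+i]=1+\min\{\Ell[j]: h<j\le k\}$. This purely local, per-character condition is what permits a left-to-right scan in which a prefix $s$ is fully summarized by its Parikh vector $p(s)$ together with one trit per character comparing $\Ell_c[|s|_c]$ against the running minimum of $\Ell$ since the last occurrence of $c$ (the vector $b(s)$), updatable in $O(\sigma)$ time per appended symbol. Your state --- pending counts per sub-block of the ``currently active'' interval plus a set of open sub-blocks --- has the same cardinality bound $2^\sigma(\frac{n}{\sigma}+1)^\sigma$ but different, and underspecified, semantics; without the pair-constraint lemma there is no argument that it captures exactly the information needed. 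So the complexity bookkeeping in your second and third paragraphs is fine, but the correctness core of the proof is absent.
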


The algorithm and further details are in Appendix~\ref{sec:CSSILA}.

\clearpage

\appendix

\section{Examples of Suffix and LCP Arrays and Suffix Trees}

%------------------------------------------------------------------------

%---------------------------------------------------------------------------------------------

\begin{figure}[ht]
\begin{center}
\begin{tikzpicture}
\edef\sizebox{0.5cm}
\tikzstyle{box}=[draw,minimum size=\sizebox]
%\tikzstyle{graybox}=[draw,minimum size=\sizebox,fill=lightgray]

\begin{scope}[start chain=7 going right,node distance=-0.15mm]
    \node [on chain=7] (r7) at (0,0) {\small\bf 6};
    \node [on chain=7,box] {$\$$};
\end{scope}

\begin{scope}[start chain=6 going right,node distance=-0.15mm]
    \node [on chain=6] (r6) at (0,-0.5) {\small\bf 5};
    \node [on chain=6,box] {$a$};
    \node [on chain=6,box] {$\$$};
\end{scope}

\begin{scope}[start chain=1 going right,node distance=-0.15mm]
    \node [on chain=1] (r1) at (0,-1) {\small\bf 0};
    \node [on chain=1,box] {$a$};
    \node [on chain=1,box] {$a$};
    \node [on chain=1,box] {$b$};
    \node [on chain=1,box] {$a$};
    \node [on chain=1,box] {$b$};
    \node [on chain=1,box] {$a$};
    \node [on chain=1,box] {$\$$};
\end{scope}

\begin{scope}[start chain=4 going right,node distance=-0.15mm]
    \node [on chain=4] (r4) at (0,-1.5) {\small\bf 3};
    \node [on chain=4,box] {$a$};
    \node [on chain=4,box] {$b$};
    \node [on chain=4,box] {$a$};
    \node [on chain=4,box] {$\$$};
\end{scope}

\begin{scope}[start chain=2 going right,node distance=-0.15mm]
    \node [on chain=2] (r2) at (0,-2) {\small\bf 1};
    \node [on chain=2,box] {$a$};
    \node [on chain=2,box] {$b$};
    \node [on chain=2,box] {$a$};
    \node [on chain=2,box] {$b$};
    \node [on chain=2,box] {$a$};
    \node [on chain=2,box] {$\$$};
\end{scope}

\begin{scope}[start chain=5 going right,node distance=-0.15mm]
    \node [on chain=5] (r5) at (0,-2.5) {\small\bf 4};
    \node [on chain=5,box] {$b$};
    \node [on chain=5,box] {$a$};
    \node [on chain=5,box] {$\$$};
\end{scope}

\begin{scope}[start chain=3 going right,node distance=-0.15mm]
    \node [on chain=3] (r3) at (0,-3) {\small\bf 2};
    \node [on chain=3,box] {$b$};
    \node [on chain=3,box] {$a$};
    \node [on chain=3,box] {$b$};
    \node [on chain=3,box] {$a$};
    \node [on chain=3,box] {$\$$};
\end{scope}

% Headers
\node at (0.15,0.75) {\bf SA:};
\node at (4.5,0.75) {\bf LCP:};

% LCP array
\node at (4.5,-0.5) {\small\bf 0};
\node at (4.5,-1) {\small\bf 1};
\node at (4.5,-1.5) {\small\bf 1};
\node at (4.5,-2) {\small\bf 3};
\node at (4.5,-2.5) {\small\bf 0};
\node at (4.5,-3) {\small\bf 2};

%\node at (2,-4) {\small\bf (a)};

\end{tikzpicture}
\hspace{0.2cm}
%
%\begin{tikzpicture}[node distance=0.8cm]
%%
%\node (v0) {$\circ$};
%%
%\node (v11) [below of=v0,xshift=-1.2cm,yshift=-1.6cm,label=below:{\tiny 7}] {\tiny $\blacksquare$};
%\node (v12) [below of=v0] {$\circ$};
%\node (v13) [below of=v0,xshift=2.1cm] {$\circ$};
%%
%\draw (v0) -- (v11);
%\draw (v0) -- (v12);
%\draw (v0) -- (v13);
%%
%\node (v21) [below of=v12,xshift=-0.6cm,yshift=-0.8cm,label=below:{\tiny 6}] {\tiny $\blacksquare$};
%\node (v22) [below of=v12,yshift=-0.8cm,label=below:{\tiny 1}] {\tiny $\blacksquare$};
%\node (v23) [below of=v12,xshift=0.9cm] {$\circ$};
%\node (v24) [below of=v13,xshift=-0.3cm,yshift=-0.8cm,label=below:{\tiny 5}] {\tiny $\blacksquare$};
%\node (v25) [below of=v13,xshift=0.3cm,yshift=-0.8cm,label=below:{\tiny 3}] {\tiny $\blacksquare$};
%%
%\draw (v12) -- (v21);
%\draw (v12) -- (v22);
%\draw (v12) -- (v23);
%\draw (v13) -- (v24);
%\draw (v13) -- (v25);
%%
%\node (v31) [below of=v23,xshift=-0.3cm,label=below:{\tiny 4}] {\tiny $\blacksquare$};
%\node (v32) [below of=v23,xshift=0.3cm,label=below:{\tiny 2}] {\tiny $\blacksquare$};
%%
%\draw (v23) -- (v31);
%\draw (v23) -- (v32);
%%
%% suffix links
%%
%%\draw[>=stealth,dotted,->] (v21) -- (v31);
%%\draw[>=stealth,dotted,->] (v31) -- (v23);
%%\draw[>=stealth,dotted,->] (v23) -- (v22);
%%\draw[>=stealth,dotted,->] (v22) -- (v12);
%%\draw[>=stealth,dotted,->] (v12) -- (v11);
%%\draw[>=stealth,dotted,->] (v11) to [out=60,in=180] node [auto,inner sep=4pt,pos=0.5] {} (v0);
%\end{tikzpicture}
\begin{tikzpicture}[node distance=0.8cm]
\node (l1) [label=below:{6}] at (0,0) {\tiny $\blacksquare$};
\node (l2) [label=below:{5}] at (1,0) {\tiny $\blacksquare$};
\node (l3) [label=below:{0}] at (2,0) {\tiny $\blacksquare$};
\node (l4) [label=below:{3}] at (3,0) {\tiny $\blacksquare$};
\node (l5) [label=below:{1}] at (4,0) {\tiny $\blacksquare$};
\node (l6) [label=below:{4}] at (5,0) {\tiny $\blacksquare$};
\node (l7) [label=below:{2}] at (6,0) {\tiny $\blacksquare$};

\node (i0) at (2,3) {$\circ$};
\node (i1) at (2,2) {$\circ$};
\node (i2) at (3.5,1) {$\circ$};
\node (i3) at (5.5,1) {$\circ$};

\node at (1.9,2.4) {\tiny $a$};
\node at (0.8,1.5) {\tiny $\$$};
\node at (1.35,1) {\tiny $\$$};
\node at (3.15,0.6) {\tiny $\$$};
\node at (5.15,0.6) {\tiny $\$$};

\node at (3.6,2.25) {\tiny $a$};
\node at (3.8,2.15) {\tiny $b$};

\node at (2.9,1.55) {\tiny $b$};
\node at (3.1,1.45) {\tiny $a$};

\node at (5.75,0.8) {\tiny $b$};
\node at (5.83,0.65) {\tiny $a$};
\node at (5.92,0.5) {\tiny $\$$};

\node at (3.75,0.8) {\tiny $b$};
\node at (3.83,0.65) {\tiny $a$};
\node at (3.92,0.5) {\tiny $\$$};

\node at (2.15,1.5) {\tiny $a$};
\node at (2.15,1.3) {\tiny $b$};
\node at (2.15,1.1) {\tiny $a$};
\node at (2.15,0.9) {\tiny $b$};
\node at (2.15,0.7) {\tiny $a$};
\node at (2.15,0.5) {\tiny $\$$};

\draw (i0) -- (l1);
\draw (i0) -- (i1);
\draw (i0) -- (i3);
\draw (i1) -- (l2);
\draw (i1) -- (l3);
\draw (i1) -- (i2);
\draw (i2) -- (l4);
\draw (i2) -- (l5);
\draw (i3) -- (l6);
\draw (i3) -- (l7);

% Suffix links

%\draw[>=stealth,dotted,->] (l1) to [out=90,in=-150] node [auto,inner sep=4pt,pos=0.5] {} (i0);
%\draw[>=stealth,dotted,->] (i1) to [out=120,in=-110] node [auto,inner sep=4pt,pos=0.5] {} (i0);
\draw[>=stealth,dotted,->] (i1) to [out=50,in=-60] node [auto,inner sep=4pt,pos=0.5] {} (i0);
\draw[>=stealth,dotted,->] (i2) to [out=0,in=180] node [auto,inner sep=4pt,pos=0.5] {} (i3);
\draw[>=stealth,dotted,->] (i3) to node [auto,inner sep=4pt,pos=0.5] {} (i1);
%\draw[>=stealth,dotted,->] (l2) to node [auto,inner sep=4pt,pos=0.5] {} (l1);
%\draw[>=stealth,dotted,->] (l7) to node [auto,inner sep=4pt,pos=0.5] {} (l6);
%\draw[>=stealth,dotted,->] (l2) to [out=-20,in=200] node [auto,inner sep=4pt,pos=0.5] {} (l4);
%\draw[>=stealth,dotted,->] (l3) to [out=-20,in=200] node [auto,inner sep=4pt,pos=0.5] {} (l5);
%\draw[>=stealth,dotted,->] (l5) to [out=-20,in=200] node [auto,inner sep=4pt,pos=0.5] {} (l7);
%\draw[>=stealth,dotted,->] (l4) to [out=-20,in=200] node [auto,inner sep=4pt,pos=0.5] {} (l6);
%\draw[>=stealth,dotted,->] (l7) to [out=160,in=20] node [auto,inner sep=4pt,pos=0.5] {} (l4);
%\draw[>=stealth,dotted,->] (l6) to [out=165,in=15] node [auto,inner sep=4pt,pos=0.5] {} (l2);

% LCP

\node at (0.5,-1) {0};
\node at (1.5,-1) {1};
\node at (2.5,-1) {1};
\node at (3.5,-1) {3};
\node at (4.5,-1) {0};
\node at (5.5,-1) {2};

\node at (-0.8,2.8) {\bf ST:};
\node at (-0.8,-0.5) {\bf SA:};
\node at (-0.7,-1) {\bf LCP:};

%\node at (3,-2) {\small\bf (b)};

\end{tikzpicture}
\end{center}

\vspace*{-0.5cm}

\caption{SA, LCP and ST for terminated string $aababa\$$.
Notice how the LCP array encodes the shape of the suffix tree.
The dashed arrows are suffix links, which connect node representing
$cx$ for a symbols $c$ and a string $x$ to node representing $x$.}
\label{fig:terminated}
\end{figure}
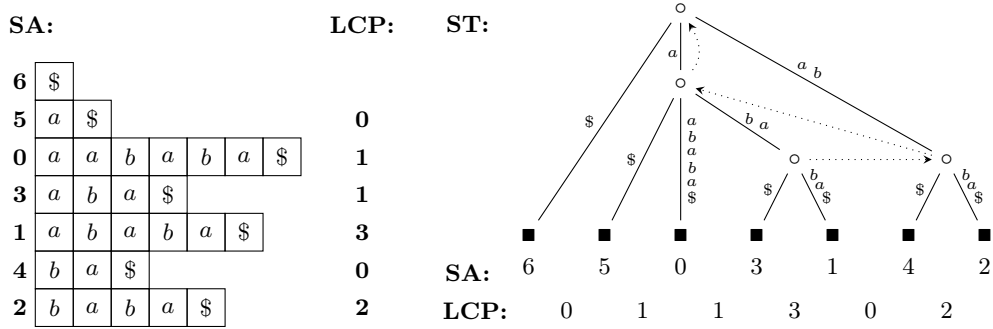

%------------------------------------------------------------------------------------

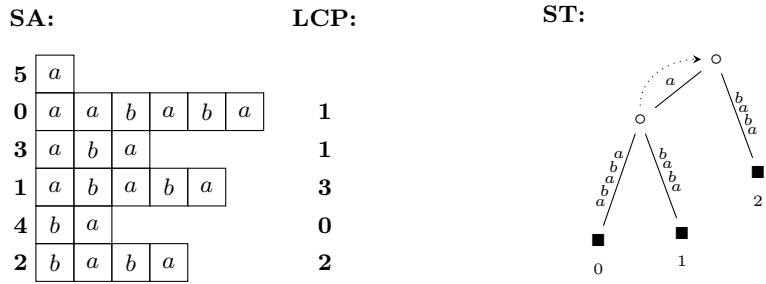
\begin{figure}[ht]
\begin{center}
\begin{tikzpicture}
\edef\sizebox{0.5cm}
\tikzstyle{box}=[draw,minimum size=\sizebox]
%\tikzstyle{graybox}=[draw,minimum size=\sizebox,fill=lightgray]

\begin{scope}[start chain=6 going right,node distance=-0.15mm]
    \node [on chain=6] (r6) at (0,0) {\small\bf 5};
    \node [on chain=6,box] {$a$};
\end{scope}

\begin{scope}[start chain=1 going right,node distance=-0.15mm]
    \node [on chain=1] (r1) at (0,-0.5) {\small\bf 0};
    \node [on chain=1,box] {$a$};
    \node [on chain=1,box] {$a$};
    \node [on chain=1,box] {$b$};
    \node [on chain=1,box] {$a$};
    \node [on chain=1,box] {$b$};
    \node [on chain=1,box] {$a$};
\end{scope}

\begin{scope}[start chain=4 going right,node distance=-0.15mm]
    \node [on chain=4] (r4) at (0,-1) {\small\bf 3};
    \node [on chain=4,box] {$a$};
    \node [on chain=4,box] {$b$};
    \node [on chain=4,box] {$a$};
\end{scope}

\begin{scope}[start chain=2 going right,node distance=-0.15mm]
    \node [on chain=2] (r2) at (0,-1.5) {\small\bf 1};
    \node [on chain=2,box] {$a$};
    \node [on chain=2,box] {$b$};
    \node [on chain=2,box] {$a$};
    \node [on chain=2,box] {$b$};
    \node [on chain=2,box] {$a$};
\end{scope}

\begin{scope}[start chain=5 going right,node distance=-0.15mm]
    \node [on chain=5] (r5) at (0,-2) {\small\bf 4};
    \node [on chain=5,box] {$b$};
    \node [on chain=5,box] {$a$};
\end{scope}

\begin{scope}[start chain=3 going right,node distance=-0.15mm]
    \node [on chain=3] (r3) at (0,-2.5) {\small\bf 2};
    \node [on chain=3,box] {$b$};
    \node [on chain=3,box] {$a$};
    \node [on chain=3,box] {$b$};
    \node [on chain=3,box] {$a$};
\end{scope}

% Headers
\node at (0.15,0.75) {\bf SA:};
\node at (4,0.75) {\bf LCP:};

% LCP array
\node at (4,-0.5) {\small\bf 1};
\node at (4,-1) {\small\bf 1};
\node at (4,-1.5) {\small\bf 3};
\node at (4,-2) {\small\bf 0};
\node at (4,-2.5) {\small\bf 2};

%\node at (2,-3.5) {\small\bf (a)};
\end{tikzpicture}
\hspace{2cm}
\begin{tikzpicture}[node distance=0.8cm]
\node (v0) {$\circ$};
\node (v11) [below of=v0,xshift=-1cm] {$\circ$};
%\node (v11) [below of=v0,xshift=-1cm,label=left:{\tiny 6}] {\tiny $\blacksquare$};
%\node (v12) [below of=v0,xshift=1cm,label=right:{\tiny 5}] {\tiny $\blacksquare$};
%
\draw (v0) -- (v11);
%\draw (v0) -- (v12);
%
\node (v21) [below of=v11,xshift=-0.55cm,yshift=-0.8cm,label=below:{\tiny 0}] {\tiny $\blacksquare$};
%\node (v22) [below of=v11,xshift=0.55cm,label=left:{\tiny 4}] {\tiny $\blacksquare$};
\node (v23) [below of=v0,xshift=0.55cm,yshift=-0.7cm,label=below:{\tiny 2}] {\tiny $\blacksquare$};
%\node (v23) [below of=v12,xshift=0.55cm,label=below:{\tiny 3}] {\tiny $\blacksquare$};
%
\draw (v11) -- (v21);
%\draw (v11) -- (v22);
%\draw (v12) -- (v23);
\draw (v0) -- (v23);
\node (v31) [below of=v11,xshift=0.55cm,yshift=-0.7cm,label=below:{\tiny 1}] {\tiny $\blacksquare$};
%\node (v31) [below of=v2,xshift=0.4cm,label=below:{\tiny 2}] {\tiny $\blacksquare$};
%
\draw (v11) -- (v31);
%\draw (v22) -- (v31);
%
% suffix links
%
% \draw[>=stealth,dotted,->] (v21) -- (v31);
% \draw[>=stealth,dotted,->] (v31) -- (v23);
% \draw[>=stealth,dotted,->] (v23) -- (v22);
% \draw[>=stealth,dotted,->] (v22) -- (v12);
% \draw[>=stealth,dotted,->] (v12) -- (v11);
\draw[>=stealth,dotted,->] (v11) to [out=90,in=180] node [auto,inner sep=4pt,pos=0.5] {} (v0);

\node at (-0.6,-0.3) {\tiny $a$};

\node at (0.3,-0.5) {\tiny $b$};
\node at (0.36,-0.65) {\tiny $a$};
\node at (0.42,-0.8) {\tiny $b$};
\node at (0.48,-0.95) {\tiny $a$};

\node at (-0.7,-1.25) {\tiny $b$};
\node at (-0.64,-1.4) {\tiny $a$};
\node at (-0.58,-1.55) {\tiny $b$};
\node at (-0.52,-1.7) {\tiny $a$};

% \node at (1.35,-1.05) {\tiny $b$};
% \node at (1.45,-1.2) {\tiny $a$};

% \node at (-0.65,-1.05) {\tiny $b$};
% \node at (-0.55,-1.2) {\tiny $a$};

% \node at (-0.15,-1.9) {\tiny $b$};
% \node at (-0.05,-2.05) {\tiny $a$};

\node at (-1.28,-1.27) {\tiny $a$};
\node at (-1.35,-1.42) {\tiny $b$};
\node at (-1.40,-1.59) {\tiny $a$};
\node at (-1.47,-1.74) {\tiny $b$};
\node at (-1.52,-1.9) {\tiny $a$};

\node at (-2,0.6) {\bf ST:};
%\node at (0,-3.5) {\small\bf (b)};
\end{tikzpicture}
\end{center}

\vspace*{-0.5cm}

\caption{SA, LCP and ST for open-ended string $aababa$. In the suffix tree,
  the suffixes that are proper prefixes of another suffixes are not
  represented by a leaf (or another special node).}
\label{fig:open-ended}
\end{figure}

%------------------------------------------------------------------------------------

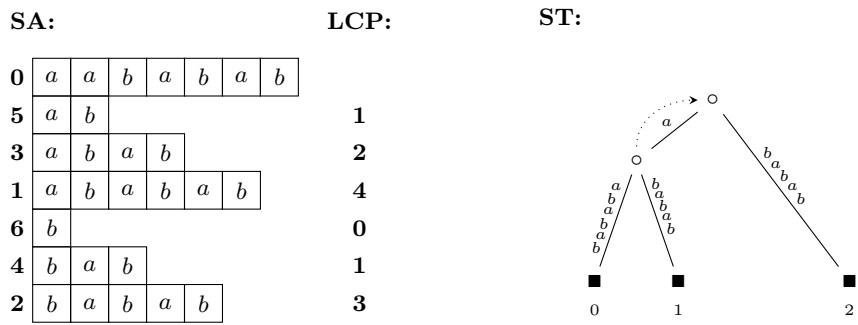
\begin{figure}[ht]
\begin{center}
\begin{tikzpicture}
\edef\sizebox{0.5cm}
\tikzstyle{box}=[draw,minimum size=\sizebox]
%\tikzstyle{graybox}=[draw,minimum size=\sizebox,fill=lightgray]

\begin{scope}[start chain=1 going right,node distance=-0.15mm]
    \node [on chain=1] (r1) at (0,0) {\small\bf 0};
    \node [on chain=1,box] {$a$};
    \node [on chain=1,box] {$a$};
    \node [on chain=1,box] {$b$};
    \node [on chain=1,box] {$a$};
    \node [on chain=1,box] {$b$};
    \node [on chain=1,box] {$a$};
    \node [on chain=1,box] {$b$};
\end{scope}

\begin{scope}[start chain=6 going right,node distance=-0.15mm]
    \node [on chain=6] (r6) at (0,-0.5) {\small\bf 5};
    \node [on chain=6,box] {$a$};
    \node [on chain=6,box] {$b$};
\end{scope}

\begin{scope}[start chain=4 going right,node distance=-0.15mm]
    \node [on chain=4] (r4) at (0,-1) {\small\bf 3};
    \node [on chain=4,box] {$a$};
    \node [on chain=4,box] {$b$};
    \node [on chain=4,box] {$a$};
    \node [on chain=4,box] {$b$};
\end{scope}

\begin{scope}[start chain=2 going right,node distance=-0.15mm]
    \node [on chain=2] (r2) at (0,-1.5) {\small\bf 1};
    \node [on chain=2,box] {$a$};
    \node [on chain=2,box] {$b$};
    \node [on chain=2,box] {$a$};
    \node [on chain=2,box] {$b$};
    \node [on chain=2,box] {$a$};
    \node [on chain=2,box] {$b$};
\end{scope}

\begin{scope}[start chain=7 going right,node distance=-0.15mm]
    \node [on chain=7] (r7) at (0,-2) {\small\bf 6};
    \node [on chain=7,box] {$b$};
\end{scope}

\begin{scope}[start chain=5 going right,node distance=-0.15mm]
    \node [on chain=5] (r5) at (0,-2.5) {\small\bf 4};
    \node [on chain=5,box] {$b$};
    \node [on chain=5,box] {$a$};
    \node [on chain=5,box] {$b$};
\end{scope}

\begin{scope}[start chain=3 going right,node distance=-0.15mm]
    \node [on chain=3] (r3) at (0,-3) {\small\bf 2};
    \node [on chain=3,box] {$b$};
    \node [on chain=3,box] {$a$};
    \node [on chain=3,box] {$b$};
    \node [on chain=3,box] {$a$};
    \node [on chain=3,box] {$b$};
\end{scope}

% Headers
\node at (0.2,0.75) {\bf SA:};
\node at (4.5,0.75) {\bf LCP:};

% LCP array
\node at (4.5,-0.5) {\small\bf 1};
\node at (4.5,-1) {\small\bf 2};
\node at (4.5,-1.5) {\small\bf 4};
\node at (4.5,-2) {\small\bf 0};
\node at (4.5,-2.5) {\small\bf 1};
\node at (4.5,-3) {\small\bf 3};

%\node at (2,-4) {\small\bf (a)};
\end{tikzpicture}
\hspace{1.5cm}
\begin{tikzpicture}[node distance=0.8cm]
\node (v0) at (0,0) {$\circ$};
\node (v11) [below of=v0,xshift=-1cm] {$\circ$};
%\node (v12) [below of=v0,xshift=1cm,label=right:{\tiny 7}] {\tiny $\blacksquare$};
%
\draw (v0) -- (v11);
%\draw (v0) -- (v12);
%
\node (v21) [below of=v11,xshift=-0.55cm,yshift=-0.8cm,label=below:{\tiny 0}] {\tiny $\blacksquare$};
%\node (v22) [below of=v11,xshift=0.55cm,label=left:{\tiny 6}] {\tiny $\blacksquare$};
%\node (v23) [below of=v12,xshift=0.55cm,label=right:{\tiny 5}] {\tiny $\blacksquare$};
%
\draw (v11) -- (v21);
%\draw (v11) -- (v22);
%\draw (v12) -- (v23);
%
%\node (v31) [below of=v22,xshift=0.4cm,label=left:{\tiny 4}] {\tiny $\blacksquare$};
\node (v32) [below of=v0,xshift=1.8cm,yshift=-1.6cm,label=below:{\tiny 2}] {\tiny $\blacksquare$};
%
%\draw (v22) -- (v31);
%\draw (v23) -- (v32);
%
\node (v41) [below of=v11,xshift=0.55cm,yshift=-0.8cm,label=below:{\tiny 1}] {\tiny $\blacksquare$};
\draw (v11) -- (v41);
\draw (v0) -- (v32);
%
% suffix links
%
\draw[>=stealth,dotted,->] (v11) to [out=90,in=180] node [auto,inner sep=4pt,pos=0.5] {} (v0);
%\draw[>=stealth,dotted,->] (v12) to [out=90,in=0] node [auto,inner sep=4pt,pos=0.5] {} (v0);
%\draw[>=stealth,dotted,->] (v21) to (v41);
%\draw[>=stealth,dotted,->] (v41) to (v32);
%\draw[>=stealth,dotted,->] (v32) to (v31);
%\draw[>=stealth,dotted,->] (v31) to (v23);
%\draw[>=stealth,dotted,->] (v23) to (v22);
%\draw[>=stealth,dotted,->] (v22) to (v12);

\node at (-0.6,-0.3) {\tiny $a$};

\node at (-0.75,-1.1) {\tiny $b$};
\node at (-0.7,-1.25) {\tiny $a$};
\node at (-0.65,-1.4) {\tiny $b$};
\node at (-0.6,-1.55) {\tiny $a$};
\node at (-0.55,-1.7) {\tiny $b$};

\node at (0.72,-0.7) {\tiny $b$};
\node at (0.83,-0.85) {\tiny $a$};
\node at (0.94,-1) {\tiny $b$};
\node at (1.05,-1.15) {\tiny $a$};
\node at (1.16,-1.3) {\tiny $b$};

\node at (-1.27,-1.15) {\tiny $a$};
\node at (-1.33,-1.31) {\tiny $b$};
\node at (-1.37,-1.47) {\tiny $a$};
\node at (-1.43,-1.63) {\tiny $b$};
\node at (-1.47,-1.79) {\tiny $a$};
\node at (-1.53,-1.95) {\tiny $b$};

\node at (-2,1.1) {\bf ST:};
%\node at (0.3,-4.3) {\small\bf (b)};

\end{tikzpicture}
\end{center}

\vspace*{-0.5cm}

\caption{SA, LCP and ST for open-ended string $aababab$, which is an
  extension of the string in the preceding figure. The suffix tree
  shape is the same but SA and LCP are quite different.}
\label{fig:extended}
\end{figure}

%-------------------------------------------------------------------------------------

\begin{figure}[ht]
\begin{center}
\begin{tikzpicture}
\edef\sizebox{0.5cm}
\tikzstyle{box}=[draw,minimum size=\sizebox]
%\tikzstyle{graybox}=[draw,minimum size=\sizebox,fill=lightgray]

\begin{scope}[start chain=7 going right,node distance=-0.15mm]
    \node [on chain=7] (r7) at (0,0) {\small\bf 6};
    \node [on chain=7,box] {$\$$};
    \node [on chain=7,box] {$a$};
    \node [on chain=7,box] {$a$};
    \node [on chain=7,box] {$b$};
    \node [on chain=7,box] {$a$};
    \node [on chain=7,box] {$b$};
    \node [on chain=7,box] {$a$};
    \node [on chain=7] {$\cdots$};
\end{scope}

\begin{scope}[start chain=6 going right,node distance=-0.15mm]
    \node [on chain=6] (r6) at (0,-0.5) {\small\bf 5};
    \node [on chain=6,box] {$a$};
    \node [on chain=6,box] {$\$$};
    \node [on chain=6,box] {$a$};
    \node [on chain=6,box] {$a$};
    \node [on chain=6,box] {$b$};
    \node [on chain=6,box] {$a$};
    \node [on chain=6,box] {$b$};
    \node [on chain=6] {$\cdots$};
\end{scope}

\begin{scope}[start chain=1 going right,node distance=-0.15mm]
    \node [on chain=1] (r1) at (0,-1) {\small\bf 0};
    \node [on chain=1,box] {$a$};
    \node [on chain=1,box] {$a$};
    \node [on chain=1,box] {$b$};
    \node [on chain=1,box] {$a$};
    \node [on chain=1,box] {$b$};
    \node [on chain=1,box] {$a$};
    \node [on chain=1,box] {$\$$};
    \node [on chain=1] {$\cdots$};
\end{scope}

\begin{scope}[start chain=4 going right,node distance=-0.15mm]
    \node [on chain=4] (r4) at (0,-1.5) {\small\bf 3};
    \node [on chain=4,box] {$a$};
    \node [on chain=4,box] {$b$};
    \node [on chain=4,box] {$a$};
    \node [on chain=4,box] {$\$$};
    \node [on chain=4,box] {$a$};
    \node [on chain=4,box] {$a$};
    \node [on chain=4,box] {$b$};
    \node [on chain=4] {$\cdots$};
\end{scope}

\begin{scope}[start chain=2 going right,node distance=-0.15mm]
    \node [on chain=2] (r2) at (0,-2) {\small\bf 1};
    \node [on chain=2,box] {$a$};
    \node [on chain=2,box] {$b$};
    \node [on chain=2,box] {$a$};
    \node [on chain=2,box] {$b$};
    \node [on chain=2,box] {$a$};
    \node [on chain=2,box] {$\$$};
    \node [on chain=2,box] {$a$};
    \node [on chain=2] {$\cdots$};
\end{scope}

\begin{scope}[start chain=5 going right,node distance=-0.15mm]
    \node [on chain=5] (r5) at (0,-2.5) {\small\bf 4};
    \node [on chain=5,box] {$b$};
    \node [on chain=5,box] {$a$};
    \node [on chain=5,box] {$\$$};
    \node [on chain=5,box] {$a$};
    \node [on chain=5,box] {$a$};
    \node [on chain=5,box] {$b$};
    \node [on chain=5,box] {$a$};
    \node [on chain=5] {$\cdots$};
\end{scope}

\begin{scope}[start chain=3 going right,node distance=-0.15mm]
    \node [on chain=3] (r3) at (0,-3) {\small\bf 2};
    \node [on chain=3,box] {$b$};
    \node [on chain=3,box] {$a$};
    \node [on chain=3,box] {$b$};
    \node [on chain=3,box] {$a$};
    \node [on chain=3,box] {$\$$};
    \node [on chain=3,box] {$a$};
    \node [on chain=3,box] {$a$};
    \node [on chain=3] {$\cdots$};
\end{scope}

% Headers
\node at (0.2,0.75) {\bf SA:};
\node at (5,0.75) {\bf LCP:};

% LCP array
\node at (5,-0.5) {\small\bf 0};
\node at (5,-1) {\small\bf 1};
\node at (5,-1.5) {\small\bf 1};
\node at (5,-2) {\small\bf 3};
\node at (5,-2.5) {\small\bf 0};
\node at (5,-3) {\small\bf 2};

%\node at (2,-4) {\small\bf (a)};
\end{tikzpicture}
\hspace{1.2cm}
\begin{tikzpicture}[node distance=0.8cm]
\node (l1) [label=below:{\tiny 6}] at (0,0) {\tiny $\blacksquare$};
\node (l2) [label=below:{\tiny 5}] at (1,0) {\tiny $\blacksquare$};
\node (l3) [label=below:{\tiny 0}] at (2,0) {\tiny $\blacksquare$};
\node (l4) [label=below:{\tiny 3}] at (3,0) {\tiny $\blacksquare$};
\node (l5) [label=below:{\tiny 1}] at (4,0) {\tiny $\blacksquare$};
\node (l6) [label=below:{\tiny 4}] at (5,0) {\tiny $\blacksquare$};
\node (l7) [label=below:{\tiny 2}] at (6,0) {\tiny $\blacksquare$};

\node (i0) at (2,3) {$\circ$};
\node (i1) at (2,2) {$\circ$};
\node (i2) at (3.5,1.5) {$\circ$};
\node (i3) at (5.5,1.8) {$\circ$};

\node at (1.9,2.4) {\tiny $a$};
%\node at (0.8,1.5) {\tiny $\$$};
%\node at (1.35,1) {\tiny $\$$};
%\node at (3.15,0.6) {\tiny $\$$};
%\node at (5.15,0.6) {\tiny $\$$};

\node at (3.6,2.64) {\tiny $b$};
\node at (3.8,2.55) {\tiny $a$};

\node at (2.7,1.92) {\tiny $b$};
\node at (2.9,1.85) {\tiny $a$};

\node at (5.78,1.3) {\tiny $b$};
\node at (5.83,1.14) {\tiny $a$};
\node at (5.88,0.98) {\tiny $\$$};
\node at (5.93,0.82) {\tiny $a$};
\node at (5.98,0.66) {\tiny $a$};
\node at (6,0.54) {\tiny $\cdot$};
\node at (6.02,0.47) {\tiny $\cdot$};
\node at (6.04,0.4) {\tiny $\cdot$};

\node at (5.21,1.3) {\tiny $\$$};
\node at (5.16,1.14) {\tiny $a$};
\node at (5.11,0.98) {\tiny $a$};
\node at (5.06,0.82) {\tiny $b$};
\node at (5.02,0.66) {\tiny $a$};
\node at (4.97,0.54) {\tiny $\cdot$};
\node at (4.95,0.47) {\tiny $\cdot$};
\node at (4.93,0.4) {\tiny $\cdot$};

\node at (3.77,1.14) {\tiny $b$};
\node at (3.82,0.99) {\tiny $a$};
\node at (3.87,0.83) {\tiny $\$$};
\node at (3.92,0.67) {\tiny $a$};
\node at (3.94,0.55) {\tiny $\cdot$};
\node at (3.96,0.48) {\tiny $\cdot$};
\node at (3.98,0.41) {\tiny $\cdot$};

\node at (3.25,1.15) {\tiny $\$$};
\node at (3.2,0.99) {\tiny $a$};
\node at (3.15,0.83) {\tiny $a$};
\node at (3.1,0.67) {\tiny $b$};
\node at (3.06,0.55) {\tiny $\cdot$};
\node at (3.04,0.48) {\tiny $\cdot$};
\node at (3.02,0.41) {\tiny $\cdot$};

\node at (2.15,1.6) {\tiny $a$};
\node at (2.15,1.4) {\tiny $b$};
\node at (2.15,1.2) {\tiny $a$};
\node at (2.15,1) {\tiny $b$};
\node at (2.15,0.8) {\tiny $a$};
\node at (2.15,0.6) {\tiny $\$$};
\node at (2.15,0.48) {\tiny $\cdot$};
\node at (2.15,0.4) {\tiny $\cdot$};
\node at (2.15,0.33) {\tiny $\cdot$};

\node at (1.67,1.7) {\tiny $\$$};
\node at (1.57,1.5) {\tiny $a$};
\node at (1.47,1.3) {\tiny $a$};
\node at (1.37,1.1) {\tiny $b$};
\node at (1.27,0.9) {\tiny $a$};
\node at (1.17,0.7) {\tiny $b$};
\node at (1.14,0.58) {\tiny $\cdot$};
\node at (1.1,0.5) {\tiny $\cdot$};
\node at (1.06,0.43) {\tiny $\cdot$};

\node at (1.35,2.4) {\tiny $\$$};
\node at (1.22,2.2) {\tiny $a$};
\node at (1.09,2) {\tiny $a$};
\node at (0.96,1.8) {\tiny $b$};
\node at (0.83,1.6) {\tiny $a$};
\node at (0.7,1.4) {\tiny $b$};
\node at (0.57,1.2) {\tiny $a$};
\node at (0.54,1.08) {\tiny $\cdot$};
\node at (0.49,1) {\tiny $\cdot$};
\node at (0.44,0.93) {\tiny $\cdot$};

\draw (i0) -- (l1);
\draw (i0) -- (i1);
\draw (i0) -- (i3);
\draw (i1) -- (l2);
\draw (i1) -- (l3);
\draw (i1) -- (i2);
\draw (i2) -- (l4);
\draw (i2) -- (l5);
\draw (i3) -- (l6);
\draw (i3) -- (l7);

% Suffix links

%\draw[>=stealth,dotted,->] (l1) to [out=90,in=-150] node [auto,inner sep=4pt,pos=0.5] {} (i0);
%%\draw[>=stealth,dotted,->] (i1) to [out=120,in=-110] node [auto,inner sep=4pt,pos=0.5] {} (i0);
\draw[>=stealth,dotted,->] (i1) to [out=50,in=-60] node [auto,inner sep=4pt,pos=0.5] {} (i0);
\draw[>=stealth,dotted,->] (i2) to (i3);
\draw[>=stealth,dotted,->] (i3) to [out=170,in=10] node [auto,inner sep=4pt,pos=0.5] {} (i1);
%\draw[>=stealth,dotted,->] (l2) to (l1);
%%\draw[>=stealth,dotted,->] (l7) to node [auto,inner sep=4pt,pos=0.5] {} (l6);
%%\draw[>=stealth,dotted,->] (l2) to [out=-20,in=200] node [auto,inner sep=4pt,pos=0.5] {} (l4);
%\draw[>=stealth,dotted,->] (l3) to [out=20,in=160] node [auto,inner sep=4pt,pos=0.5] {} (l5);
%\draw[>=stealth,dotted,->] (l5) to [out=20,in=160] node [auto,inner sep=4pt,pos=0.5] {} (l7);
%\draw[>=stealth,dotted,->] (l4) to [out=20,in=160] node [auto,inner sep=4pt,pos=0.5] {} (l6);
%\draw[>=stealth,dotted,->] (l7) to [out=210,in=-30] node [auto,inner sep=4pt,pos=0.5] {} (l4);
%\draw[>=stealth,dotted,->] (l6) to [out=210,in=-30] node [auto,inner sep=4pt,pos=0.5] {} (l2);
%\draw[>=stealth,dotted,->] (l1) to [out=20,in=160] node [auto,inner sep=4pt,pos=0.5] {} (l3);

% LCP

%\node at (0.5,-1) {0};
%\node at (1.5,-1) {1};
%\node at (2.5,-1) {1};
%\node at (3.5,-1) {3};
%\node at (4.5,-1) {0};
%\node at (5.5,-1) {2};

%\node at (-0.8,-0.5) {\bf SA:};
%\node at (-0.7,-1) {\bf LCP:};

\node at (0,3.5) {\bf ST:};
%\node at (3,-1.3) {\small\bf (b)};

\end{tikzpicture}
\end{center}

\vspace*{-0.5cm}

\caption{SA, LCP and ST for cyclic  string $aababa\$$. Notice that SA,
LCP and suffix tree shape are the same as in Fig.~\ref{fig:terminated}}
\label{fig:cyclic-with-terminal}
\end{figure}
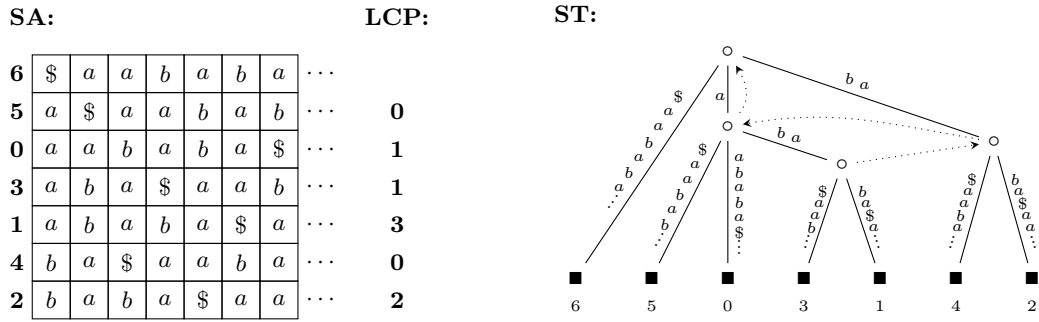

%-------------------------------------------------------------------------------------

\begin{figure}[ht]
\begin{center}
\begin{tikzpicture}
\edef\sizebox{0.5cm}
\tikzstyle{box}=[draw,minimum size=\sizebox]
%\tikzstyle{graybox}=[draw,minimum size=\sizebox,fill=lightgray]

\begin{scope}[start chain=6 going right,node distance=-0.15mm]
    \node [on chain=6] (r6) at (0,0) {\small\bf 5};
    \node [on chain=6,box] {$a$};
    \node [on chain=6,box] {$a$};
    \node [on chain=6,box] {$a$};
    \node [on chain=6,box] {$b$};
    \node [on chain=6,box] {$a$};
    \node [on chain=6,box] {$b$};
    \node [on chain=6] {$\cdots$};
\end{scope}

\begin{scope}[start chain=1 going right,node distance=-0.15mm]
    \node [on chain=1] (r1) at (0,-0.5) {\small\bf 0};
    \node [on chain=1,box] {$a$};
    \node [on chain=1,box] {$a$};
    \node [on chain=1,box] {$b$};
    \node [on chain=1,box] {$a$};
    \node [on chain=1,box] {$b$};
    \node [on chain=1,box] {$a$};
    \node [on chain=1] {$\cdots$};
\end{scope}

\begin{scope}[start chain=4 going right,node distance=-0.15mm]
    \node [on chain=4] (r4) at (0,-1) {\small\bf 3};
    \node [on chain=4,box] {$a$};
    \node [on chain=4,box] {$b$};
    \node [on chain=4,box] {$a$};
    \node [on chain=4,box] {$a$};
    \node [on chain=4,box] {$a$};
    \node [on chain=4,box] {$b$};
    \node [on chain=4] {$\cdots$};
\end{scope}

\begin{scope}[start chain=2 going right,node distance=-0.15mm]
    \node [on chain=2] (r2) at (0,-1.5) {\small\bf 1};
    \node [on chain=2,box] {$a$};
    \node [on chain=2,box] {$b$};
    \node [on chain=2,box] {$a$};
    \node [on chain=2,box] {$b$};
    \node [on chain=2,box] {$a$};
    \node [on chain=2,box] {$a$};
    \node [on chain=2] {$\cdots$};
\end{scope}

\begin{scope}[start chain=5 going right,node distance=-0.15mm]
    \node [on chain=5] (r5) at (0,-2) {\small\bf 4};
    \node [on chain=5,box] {$b$};
    \node [on chain=5,box] {$a$};
    \node [on chain=5,box] {$a$};
    \node [on chain=5,box] {$a$};
    \node [on chain=5,box] {$b$};
    \node [on chain=5,box] {$a$};
    \node [on chain=5] {$\cdots$};
\end{scope}

\begin{scope}[start chain=3 going right,node distance=-0.15mm]
    \node [on chain=3] (r3) at (0,-2.5) {\small\bf 2};
    \node [on chain=3,box] {$b$};
    \node [on chain=3,box] {$a$};
    \node [on chain=3,box] {$b$};
    \node [on chain=3,box] {$a$};
    \node [on chain=3,box] {$a$};
    \node [on chain=3,box] {$a$};
    \node [on chain=3] {$\cdots$};
\end{scope}

% Headers
\node at (0.2,0.75) {\bf SA:};
\node at (4.5,0.75) {\bf LCP:};

% LCP array
\node at (4.5,-0.5) {\small\bf 2};
\node at (4.5,-1) {\small\bf 1};
\node at (4.5,-1.5) {\small\bf 3};
\node at (4.5,-2) {\small\bf 0};
\node at (4.5,-2.5) {\small\bf 2};

%\node at (2,-3.5) {\small\bf (a)};
\end{tikzpicture}
\hspace{1.5cm}
\begin{tikzpicture}[node distance=0.8cm]
\node (l1) [label=below:{\tiny 5}] at (1,0) {\tiny $\blacksquare$};
\node (l2) [label=below:{\tiny 0}] at (2,0) {\tiny $\blacksquare$};
\node (l3) [label=below:{\tiny 3}] at (3,0) {\tiny $\blacksquare$};
\node (l4) [label=below:{\tiny 1}] at (4,0) {\tiny $\blacksquare$};
\node (l5) [label=below:{\tiny 4}] at (5,0) {\tiny $\blacksquare$};
\node (l6) [label=below:{\tiny 2}] at (6,0) {\tiny $\blacksquare$};

\node (i0) at (4,3.2) {$\circ$};
\node (i1) at (2.5,2.5) {$\circ$};
\node (i2) at (5.5,2.5) {$\circ$};
\node (i3) at (1.5,1.5) {$\circ$};
\node (i4) at (3.5,1.5) {$\circ$};

\node at (3.15,2.95) {\tiny $a$};

\node at (4.65,3.05) {\tiny $b$};
\node at (4.83,2.94) {\tiny $a$};

\node at (1.9,2.1) {\tiny $a$};

\node at (3,2.2) {\tiny $b$};
\node at (3.15,2.06) {\tiny $a$};

\node at (5.83,1.5) {\tiny $b$};
\node at (5.86,1.34) {\tiny $a$};
\node at (5.89,1.18) {\tiny $a$};
\node at (5.92,1.02) {\tiny $a$};
\node at (5.93,0.9) {\tiny $\cdot$};
\node at (5.95,0.78) {\tiny $\cdot$};
\node at (5.97,0.66) {\tiny $\cdot$};
s\node at (5.16,1.5) {\tiny $a$};
\node at (5.13,1.34) {\tiny $a$};
\node at (5.1,1.18) {\tiny $b$};
\node at (5.08,1.02) {\tiny $a$};
\node at (5.07,0.9) {\tiny $\cdot$};
\node at (5.05,0.78) {\tiny $\cdot$};
\node at (5.03,0.66) {\tiny $\cdot$};
\node at (3.77,1.09) {\tiny $b$};
\node at (3.82,0.93) {\tiny $a$};
\node at (3.87,0.77) {\tiny $a$};
\node at (3.89,0.65) {\tiny $\cdot$};
\node at (3.91,0.58) {\tiny $\cdot$};
\node at (3.93,0.51) {\tiny $\cdot$};
\node at (3.23,1.09) {\tiny $a$};
\node at (3.18,0.93) {\tiny $a$};
\node at (3.13,0.77) {\tiny $b$};
\node at (3.09,0.65) {\tiny $\cdot$};
\node at (3.07,0.58) {\tiny $\cdot$};
\node at (3.05,0.51) {\tiny $\cdot$};
\node at (1.75,1.2) {\tiny $b$};
\node at (1.81,1) {\tiny $a$};
\node at (1.87,0.8) {\tiny $b$};
\node at (1.93,0.6) {\tiny $a$};
\node at (1.95,0.48) {\tiny $\cdot$};
\node at (1.97,0.4) {\tiny $\cdot$};
\node at (1.99,0.33) {\tiny $\cdot$};
\node at (1.26,1.2) {\tiny $a$};
\node at (1.19,1) {\tiny $b$};
\node at (1.12,0.8) {\tiny $a$};
\node at (1.05,0.6) {\tiny $b$};
\node at (1.04,0.48) {\tiny $\cdot$};
\node at (1.02,0.4) {\tiny $\cdot$};
\node at (1,0.33) {\tiny $\cdot$};
%
%\node at (1.35,2.4) {\tiny $\$$};
%\node at (1.22,2.2) {\tiny $a$};
%\node at (1.09,2) {\tiny $a$};
%\node at (0.96,1.8) {\tiny $b$};
%\node at (0.83,1.6) {\tiny $a$};
%\node at (0.7,1.4) {\tiny $b$};
%\node at (0.57,1.2) {\tiny $a$};
%\node at (0.54,1.08) {\tiny $\cdot$};
%\node at (0.49,1) {\tiny $\cdot$};
%\node at (0.44,0.93) {\tiny $\cdot$};
%
\draw (i0) -- (i1);
\draw (i0) -- (i2);
\draw (i1) -- (i3);
\draw (i1) -- (i4);
\draw (i3) -- (l1);
\draw (i3) -- (l2);
\draw (i4) -- (l3);
\draw (i4) -- (l4);
\draw (i2) -- (l5);
\draw (i2) -- (l6);

% Suffix links

%%\draw[>=stealth,dotted,->] (l1) to [out=90,in=-150] node [auto,inner sep=4pt,pos=0.5] {} (i0);
%%%\draw[>=stealth,dotted,->] (i1) to [out=120,in=-110] node [auto,inner sep=4pt,pos=0.5] {} (i0);
\draw[>=stealth,dotted,->] (i1) to [out=70,in=180] node [auto,inner sep=4pt,pos=0.5] {} (i0);
\draw[>=stealth,dotted,->] (i3) to [out=90,in=180] node [auto,inner sep=4pt,pos=0.5] {} (i1);
\draw[>=stealth,dotted,->] (i4) to (i2);
\draw[>=stealth,dotted,->] (i2) to (i1);
%\draw[>=stealth,dotted,->] (i3) to [out=170,in=10] node [auto,inner sep=4pt,pos=0.5] {} (i1);
%\draw[>=stealth,dotted,->] (l2) to (l1);
\draw[>=stealth,dotted,->] (l1) to (l2);
%%%\draw[>=stealth,dotted,->] (l2) to [out=-20,in=200] node [auto,inner sep=4pt,pos=0.5] {} (l4);
%\draw[>=stealth,dotted,->] (l2) to [out=20,in=160] node [auto,inner sep=4pt,pos=0.5] {} (l4);
%\draw[>=stealth,dotted,->] (l3) to [out=20,in=160] node [auto,inner sep=4pt,pos=0.5] {} (l5);
%\draw[>=stealth,dotted,->] (l4) to [out=20,in=160] node [auto,inner sep=4pt,pos=0.5] {} (l6);
%\draw[>=stealth,dotted,->] (l6) to [out=215,in=-35] node [auto,inner sep=4pt,pos=0.5] {} (l3);
%\draw[>=stealth,dotted,->] (l5) to [out=215,in=-35] node [auto,inner sep=4pt,pos=0.5] {} (l1);
%\draw[>=stealth,dotted,->] (l1) to [out=20,in=160] node [auto,inner sep=4pt,pos=0.5] {} (l3);

% LCP

%\node at (0.5,-1) {0};
%\node at (1.5,-1) {1};
%\node at (2.5,-1) {1};
%\node at (3.5,-1) {3};
%\node at (4.5,-1) {0};
%\node at (5.5,-1) {2};

%\node at (-0.8,-0.5) {\bf SA:};
%\node at (-0.7,-1) {\bf LCP:};

\node at (0.75,3) {\bf ST:};
%\node at (3,-1.3) {\small\bf (b)};

\end{tikzpicture}
\end{center}

\vspace*{-0.5cm}

\caption{SA, LCP and ST for cyclic string $aababa$. }
\label{fig:cyclic-without-terminal}
\end{figure}
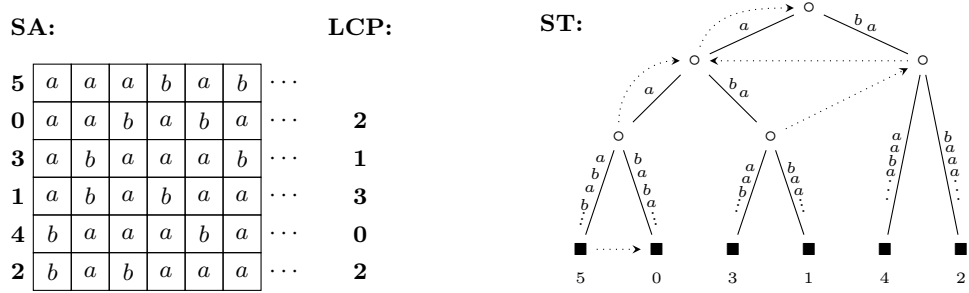

%-------------------------------------------------------------------------------------

\begin{figure}[ht]
\begin{center}
\begin{tikzpicture}
\edef\sizebox{0.5cm}
\tikzstyle{box}=[draw,minimum size=\sizebox]
%\tikzstyle{graybox}=[draw,minimum size=\sizebox,fill=lightgray]

\begin{scope}[start chain=1 going right,node distance=-0.15mm]
    \node [on chain=1] (r1) at (0,0) {\small\bf 0};
    \node [on chain=1,box] {$a$};
    \node [on chain=1,box] {$a$};
    \node [on chain=1,box] {$b$};
    \node [on chain=1,box] {$a$};
    \node [on chain=1,box] {$a$};
    \node [on chain=1,box] {$b$};
    \node [on chain=1] {$\cdots$};
\end{scope}

\begin{scope}[start chain=4 going right,node distance=-0.15mm]
    \node [on chain=4] (r4) at (0,-0.5) {\small\bf 3};
    \node [on chain=4,box] {$a$};
    \node [on chain=4,box] {$a$};
    \node [on chain=4,box] {$b$};
    \node [on chain=4,box] {$a$};
    \node [on chain=4,box] {$a$};
    \node [on chain=4,box] {$b$};
    \node [on chain=4] {$\cdots$};
\end{scope}

\begin{scope}[start chain=2 going right,node distance=-0.15mm]
    \node [on chain=2] (r2) at (0,-1) {\small\bf 1};
    \node [on chain=2,box] {$a$};
    \node [on chain=2,box] {$b$};
    \node [on chain=2,box] {$a$};
    \node [on chain=2,box] {$a$};
    \node [on chain=2,box] {$b$};
    \node [on chain=2,box] {$a$};
    \node [on chain=2] {$\cdots$};
\end{scope}

\begin{scope}[start chain=5 going right,node distance=-0.15mm]
    \node [on chain=5] (r5) at (0,-1.5) {\small\bf 4};
    \node [on chain=5,box] {$a$};
    \node [on chain=5,box] {$b$};
    \node [on chain=5,box] {$a$};
    \node [on chain=5,box] {$a$};
    \node [on chain=5,box] {$b$};
    \node [on chain=5,box] {$a$};
    \node [on chain=5] {$\cdots$};
\end{scope}

\begin{scope}[start chain=3 going right,node distance=-0.15mm]
    \node [on chain=3] (r3) at (0,-2) {\small\bf 2};
    \node [on chain=3,box] {$b$};
    \node [on chain=3,box] {$a$};
    \node [on chain=3,box] {$a$};
    \node [on chain=3,box] {$b$};
    \node [on chain=3,box] {$a$};
    \node [on chain=3,box] {$a$};
    \node [on chain=3] {$\cdots$};
\end{scope}

\begin{scope}[start chain=6 going right,node distance=-0.15mm]
    \node [on chain=6] (r6) at (0,-2.5) {\small\bf 5};
    \node [on chain=6,box] {$a$};
    \node [on chain=6,box] {$a$};
    \node [on chain=6,box] {$b$};
    \node [on chain=6,box] {$a$};
    \node [on chain=6,box] {$a$};
    \node [on chain=6,box] {$b$};
    \node [on chain=6] {$\cdots$};
\end{scope}

% Headers
\node at (0.2,0.75) {\bf SA:};
\node at (4.5,0.75) {\bf LCP:};

% LCP array
\node at (4.5,-0.5) {\small\bf $\omega$};
\node at (4.5,-1) {\small\bf 1};
\node at (4.5,-1.5) {\small\bf $\omega$};
\node at (4.5,-2) {\small\bf 0};
\node at (4.5,-2.5) {\small\bf $\omega$};

%\node at (2,-3.5) {\small\bf (a)};
\end{tikzpicture}
\hspace{1.5cm}
\begin{tikzpicture}[node distance=0.8cm]
\node (v0) {$\circ$};
\node (v11) [below of=v0,xshift=-0.5cm] {$\circ$};
\draw (v0) -- (v11);
\node (v21) [below of=v11,xshift=-0.6cm,yshift=-1cm,label=below:{\tiny 0,3}] {\tiny $\blacksquare$};
\node (v22) [below of=v11,xshift=0.6cm,yshift=-1cm,label=below:{\tiny 1,4}] {\tiny $\blacksquare$};
\node (v23) [below of=v0,xshift=1.5cm,yshift=-1.8cm,label=below:{\tiny 2,5}] {\tiny $\blacksquare$};
\draw (v11) -- (v21);
\draw (v11) -- (v22);
\draw (v0) -- (v23);
%
%
% suffix links
%
%\draw[>=stealth,dotted,->] (v21) -- (v22);
%\draw[>=stealth,dotted,->] (v22) -- (v23);
\draw[>=stealth,dotted,->] (v11) to [out=10,in=-90] node [auto,inner sep=4pt,pos=0.5] {} (v0);
%\draw[>=stealth,dotted,->] (v23) to [out=-145,in=-35] node [auto,inner sep=4pt,pos=0.5] {} (v21);
%\draw[>=stealth,dotted,->] (v23) to [out=160,in=20] node [auto,inner sep=4pt,pos=0.5] {} (v21);
%
\node at (-0.4,-0.3) {\tiny $a$};
\node at (-0.72,-1) {\tiny $a$};
\node at (-0.79,-1.2) {\tiny $b$};
\node at (-0.86,-1.4) {\tiny $a$};
\node at (-0.93,-1.6) {\tiny $a$};
\node at (-1,-1.8) {\tiny $b$};
\node at (-1.03,-1.92) {\tiny $\cdot$};
\node at (-1.05,-1.99) {\tiny $\cdot$};
\node at (-1.07,-2.06) {\tiny $\cdot$};
\node at (-0.28,-1) {\tiny $b$};
\node at (-0.22,-1.2) {\tiny $a$};
\node at (-0.15,-1.4) {\tiny $a$};
\node at (-0.09,-1.6) {\tiny $b$};
\node at (-0.02,-1.8) {\tiny $a$};
\node at (-0.01,-1.92) {\tiny $\cdot$};
\node at (0.01,-1.99) {\tiny $\cdot$};
\node at (0.03,-2.06) {\tiny $\cdot$};

\node at (0.53,-0.6) {\tiny $b$};
\node at (0.64,-0.8) {\tiny $a$};
\node at (0.75,-1) {\tiny $a$};
\node at (0.86,-1.2) {\tiny $b$};
\node at (0.97,-1.4) {\tiny $a$};
\node at (1.08,-1.6) {\tiny $a$};
\node at (1.12,-1.72) {\tiny $\cdot$};
\node at (1.16,-1.79) {\tiny $\cdot$};
\node at (1.2,-1.86) {\tiny $\cdot$};

\node at (-1.2,0.35) {\bf ST:};
%\node at (0,-3.5) {\small\bf (b)};
\end{tikzpicture}
\end{center}
\vspace*{-0.5cm}

\caption{SA, LCP and ST for cyclic string $aabaab$. Because the string
is non-primitive (concatenation of multiple copies of the same string),
some of its cyclic suffixes are identical. The LCP of identical
suffixes is $\omega$ and they share a leaf in the suffix tree.}
\label{fig:non-primitive}
\end{figure}
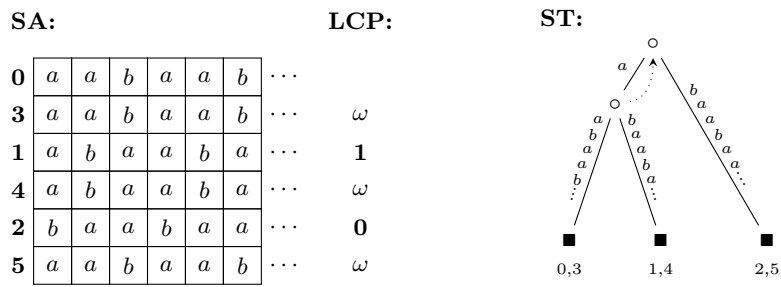

%-------------------------------------------------------------------------------------

\clearpage

\section{Reductions from BTSILA}

\begin{proof}[of Proposition~\ref{prop:reductions}]
  By the discussion in the introduction, an array of $n$ integers is
  \begin{itemize}
  \item a yes-instance of BTSILA iff it has a leading zero and is a
    yes-instance of BOSILA with the leading zero removed,
  % \item a yes-instance of BOSILA iff it is a yes-instance
  %   of BTSILA when prepended with a zero, 
  \item a yes-instance of BTSILA iff it has a leading zero and at most
    one other zero, and is a yes-instance of TSILA,
  \item a yes-instance of TSILA iff it has a leading zero and is a
    yes-instance of OSILA with the leading zero removed,
  % \item a yes-instance of OSILA iff it is a yes-instance
  %   of TSILA when prepended with a zero, 
  \item a yes-instance of TSILA iff it is a yes-instance
    of TSSILA,  
  \item a yes-instance of TSSILA iff it has one or more leading zeros
    and is a yes-instance of OSSILA with the leading zeros removed,
  % \item a yes-instance of OSSILA iff it is a yes-instance
  %   of TSSILA when prepended with some number of at most $n+1$ zeros,
  \item a yes-instance of BTSILA iff it has a leading zero and at most
    one other zero, and is a yes-instance of BTSSILA, and
  \item a yes-instance of BTSSILA iff it has one or more leading zeros
    and at most one other zero,
    and is a yes-instance of BOSSILA with the leading zeros removed.
  % \item a yes-instance of BOSSILA iff it is a yes-instance of BTSSILA
  %   when prepended with some number of at most $n+1$ zeros,
  % \item a yes-instance of CSILA if (but not only if) it is a
  %   yes-instance of TSILA, and
  % \item a yes-instance of CSSILA if (but not only if) it is a
  %   yes-instance of CSILA.
  \end{itemize}
  In all cases, there is a simple linear or at most quadratic time
  reduction. 
\qed
\end{proof}

%-------------------------------------------------------------------------------------

%\clearpage

\section{Algorithm for BCSSILA: A Proof and an Example}

\begin{proof}[of Lemma~\ref{lm:swap}]
  Consider first how $\Psi_{v'}$ differs from $\Psi_v$.  For any
  $i\in[0..n)$, if $\Psi_v[i]\not\in[i_{x}..j_{x})$ then
  $\Psi_{v'}[i]=\Psi_v[i]$. Otherwise
  $\Psi_{v'}[i]=\Psi_v[i]+n_{xa}\in[i_{x}..j_{x})$ or
  $\Psi_{v'}[i]=\Psi_v[i]-n_{xa}\in[i_{x}..j_{x})$, i.e., it is swapped
  from one side of the interval $[i_{x}..j_{x})$ to the other side.

  Now we use Lemma~\ref{lm:suffix} to determine how a suffix at
  $\sa[i]$ changes with the swap. If $i$ belongs to a cycle that never
  visits $[i_{x}..j_{x})$, i.e., the suffix does not contain $x$, 
  there is no change. Suppose then that the
  cycle starting at $i$ first reaches $[i_{x}..j_{x})$ after $k$
  steps, and w.l.o.g. assume that it reaches specifically the
  $xa$-interval, i.e. $\Psi^k_v[i] \in [i_{xa}..j_{xa})$.
  Then for some string $y$ of length $k$, the suffix at $i$ changes
  from $yxa\dots$ into $yxb\dots$. Note also that $yx$ cannot contain
  $x$ except at the end.

  Now consider two adjacent suffixes. If both are of the form
  $yxa\dots$, they both change to $yxb\dots$. The parts after $x$ may
  change a lot but LCP of the two suffixes remains the same because
  $\LCP[i_{xa}+1..j_{xa}) = \LCP[i_{xb}+1..j_{xb})$. In all other
  cases (one or both do not contain $x$ or the parts before $x$
  differ), the LCP is determined in the unchanged part of the
  suffixes. Thus $\LCP_{\ibwt(v')}=\LCP$.
\qed
\end{proof}

The following example illustrates the operation of the algorithm.

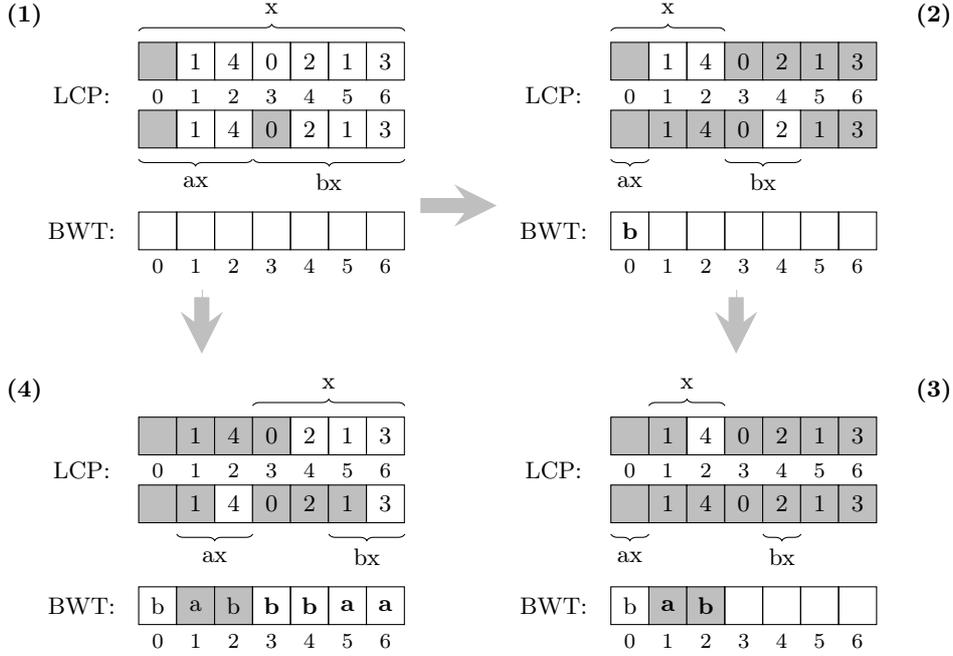
\begin{figure}[ht]

\vspace*{-0.5cm}

\begin{center}
\begin{tabular}{ccc}
\begin{tikzpicture}
\edef\sizebox{0.5cm}
\tikzstyle{box}=[draw,minimum size=\sizebox]
\tikzstyle{graybox}=[draw,minimum size=\sizebox,fill=lightgray]

\begin{scope}[start chain=1 going right,node distance=-0.15mm]
    \node [on chain=1,graybox,label=below:{\scriptsize 0}] (c1) at (0,0) {};
    \node [on chain=1,box,label=below:{\scriptsize 1}] {1};
    \node [on chain=1,box,label=below:{\scriptsize 2}] {4};
    \node [on chain=1,box,label=below:{\scriptsize 3}] {0};
    \node [on chain=1,box,label=below:{\scriptsize 4}] {2};
    \node [on chain=1,box,label=below:{\scriptsize 5}] {1};
    \node [on chain=1,box,label=below:{\scriptsize 6}] {3};
\end{scope}

\begin{scope}[start chain=2 going right,node distance=-0.15mm]
    \node [on chain=2,graybox] (c2) at (0,-0.9) {};
    \node [on chain=2,box] {1};
    \node [on chain=2,box] {4};
    \node [on chain=2,graybox] {0};
    \node [on chain=2,box] {2};
    \node [on chain=2,box] {1};
    \node [on chain=2,box] {3};
\end{scope}

\draw [decoration={brace}, decorate] (-0.25,0.4) -- (3.25,0.4);
\draw [decoration={brace,mirror}, decorate] (-0.25,-1.3) -- (1.245,-1.3);
\draw [decoration={brace,mirror}, decorate, label=below:a] (1.26,-1.3) -- (3.25,-1.3);

\node at (1.5,0.7) {x};
\node at (0.5,-1.6) {ax};
\node at (2.25,-1.6) {bx};

\begin{scope}[start chain=3 going right,node distance=-0.15mm]
    \node [on chain=3,box,label=below:{\scriptsize 0}] (c1) at (0,-2.25) {};
    \node [on chain=3,box,label=below:{\scriptsize 1}] {};
    \node [on chain=3,box,label=below:{\scriptsize 2}] {};
    \node [on chain=3,box,label=below:{\scriptsize 3}] {};
    \node [on chain=3,box,label=below:{\scriptsize 4}] {};
    \node [on chain=3,box,label=below:{\scriptsize 5}] {};
    \node [on chain=3,box,label=below:{\scriptsize 6}] {};
\end{scope}

\node at (-1.75,0.6) {\bf (1)};
\node at (-1,-0.45) {$\LCP$:};
\node at (-1,-2.25) {$\bwt$:};
\end{tikzpicture}

&

\begin{tikzpicture}[baseline=-1cm]
\draw[color=lightgray, -stealth, line width=2mm, postaction={draw, line width=0.2cm, shorten >=1cm, -}] (0,0) -- (1,0);
\end{tikzpicture}

&

\begin{tikzpicture}
\edef\sizebox{0.5cm}
\tikzstyle{box}=[draw,minimum size=\sizebox]
\tikzstyle{graybox}=[draw,minimum size=\sizebox,fill=lightgray]

\begin{scope}[start chain=1 going right,node distance=-0.15mm]
    \node [on chain=1,graybox,label=below:{\scriptsize 0}] (c1) at (0,0) {};
    \node [on chain=1,box,label=below:{\scriptsize 1}] {1};
    \node [on chain=1,box,label=below:{\scriptsize 2}] {4};
    \node [on chain=1,graybox,label=below:{\scriptsize 3}] {0};
    \node [on chain=1,graybox,label=below:{\scriptsize 4}] {2};
    \node [on chain=1,graybox,label=below:{\scriptsize 5}] {1};
    \node [on chain=1,graybox,label=below:{\scriptsize 6}] {3};
\end{scope}

\begin{scope}[start chain=2 going right,node distance=-0.15mm]
    \node [on chain=2,graybox] (c2) at (0,-0.9) {};
    \node [on chain=2,graybox] {1};
    \node [on chain=2,graybox] {4};
    \node [on chain=2,graybox] {0};
    \node [on chain=2,box] {2};
    \node [on chain=2,graybox] {1};
    \node [on chain=2,graybox] {3};
\end{scope}

\draw [decoration={brace}, decorate] (-0.25,0.4) -- (1.25,0.4);
\draw [decoration={brace,mirror}, decorate] (-0.25,-1.3) -- (0.25,-1.3);
\draw [decoration={brace,mirror}, decorate, label=below:a] (1.25,-1.3) -- (2.25,-1.3);

\node at (0.5,0.7) {x};
\node at (0,-1.6) {ax};
\node at (1.75,-1.6) {bx};

\begin{scope}[start chain=3 going right,node distance=-0.15mm]	
    \node [on chain=3,box,label=below:{\scriptsize 0}] (c1) at (0,-2.25) {\bf b};
    \node [on chain=3,box,label=below:{\scriptsize 1}] {};
    \node [on chain=3,box,label=below:{\scriptsize 2}] {};
    \node [on chain=3,box,label=below:{\scriptsize 3}] {};
    \node [on chain=3,box,label=below:{\scriptsize 4}] {};
    \node [on chain=3,box,label=below:{\scriptsize 5}] {};
    \node [on chain=3,box,label=below:{\scriptsize 6}] {};
\end{scope}

\node at (4,0.6) {\bf (2)};
\node at (-1,-0.45) {$\LCP$:};
\node at (-1,-2.25) {$\bwt$:};
\end{tikzpicture}

\\

\begin{tikzpicture}
\draw[color=lightgray, -stealth, line width=2mm, postaction={draw, line width=0.2cm, shorten >=0.75cm, -}] (0,0) -- (0,-0.75);
\end{tikzpicture}

& 

&

\begin{tikzpicture}
\draw[color=lightgray, -stealth, line width=2mm, postaction={draw, line width=0.2cm, shorten >=0.75cm, -}] (0,0) -- (0,-0.75);
\end{tikzpicture}

\\

\begin{tikzpicture}
\edef\sizebox{0.5cm}
\tikzstyle{box}=[draw,minimum size=\sizebox]
\tikzstyle{graybox}=[draw,minimum size=\sizebox,fill=lightgray]

\begin{scope}[start chain=1 going right,node distance=-0.15mm]
    \node [on chain=1,graybox,label=below:{\scriptsize 0}] (c1) at (0,0) {};
    \node [on chain=1,graybox,label=below:{\scriptsize 1}] {1};
    \node [on chain=1,graybox,label=below:{\scriptsize 2}] {4};
    \node [on chain=1,graybox,label=below:{\scriptsize 3}] {0};
    \node [on chain=1,box,label=below:{\scriptsize 4}] {2};
    \node [on chain=1,box,label=below:{\scriptsize 5}] {1};
    \node [on chain=1,box,label=below:{\scriptsize 6}] {3};
\end{scope}

\begin{scope}[start chain=2 going right,node distance=-0.15mm]
    \node [on chain=2,graybox] (c2) at (0,-0.9) {};
    \node [on chain=2,graybox] {1};
    \node [on chain=2,box] {4};
    \node [on chain=2,graybox] {0};
    \node [on chain=2,graybox] {2};
    \node [on chain=2,graybox] {1};
    \node [on chain=2,box] {3};
\end{scope}

\draw [decoration={brace}, decorate] (1.25,0.4) -- (3.25,0.4);
\draw [decoration={brace,mirror}, decorate] (0.25,-1.3) -- (1.25,-1.3);
\draw [decoration={brace,mirror}, decorate, label=below:a] (2.25,-1.3) -- (3.25,-1.3);

\node at (2.25,0.7) {x};
\node at (0.75,-1.6) {ax};
\node at (2.75,-1.6) {bx};

\begin{scope}[start chain=3 going right,node distance=-0.15mm]
    \node [on chain=3,box,label=below:{\scriptsize 0}] (c1) at (0,-2.25) {b};
    \node [on chain=3,graybox,label=below:{\scriptsize 1}] {a};
    \node [on chain=3,graybox,label=below:{\scriptsize 2}] {b};
    \node [on chain=3,box,label=below:{\scriptsize 3}] {$\mathbf{b}$};
    \node [on chain=3,box,label=below:{\scriptsize 4}] {$\mathbf{b}$};
    \node [on chain=3,box,label=below:{\scriptsize 5}] {$\mathbf{a}$};
    \node [on chain=3,box,label=below:{\scriptsize 6}] {$\mathbf{a}$};
\end{scope}

\node at (-1.75,0.6) {\bf (4)};
\node at (-1,-0.45) {$\LCP$:};
\node at (-1,-2.25) {$\bwt$:};
\end{tikzpicture}

&

& 

\begin{tikzpicture}
\edef\sizebox{0.5cm}
\tikzstyle{box}=[draw,minimum size=\sizebox]
\tikzstyle{graybox}=[draw,minimum size=\sizebox,fill=lightgray]

\begin{scope}[start chain=1 going right,node distance=-0.15mm]
    \node [on chain=1,graybox,label=below:{\scriptsize 0}] (c1) at (0,0) {};
    \node [on chain=1,graybox,label=below:{\scriptsize 1}] {1};
    \node [on chain=1,box,label=below:{\scriptsize 2}] {4};
    \node [on chain=1,graybox,label=below:{\scriptsize 3}] {0};
    \node [on chain=1,graybox,label=below:{\scriptsize 4}] {2};
    \node [on chain=1,graybox,label=below:{\scriptsize 5}] {1};
    \node [on chain=1,graybox,label=below:{\scriptsize 6}] {3};
\end{scope}

\begin{scope}[start chain=2 going right,node distance=-0.15mm]
    \node [on chain=2,graybox] (c2) at (0,-0.9) {};
    \node [on chain=2,graybox] {1};
    \node [on chain=2,graybox] {4};
    \node [on chain=2,graybox] {0};
    \node [on chain=2,graybox] {2};
    \node [on chain=2,graybox] {1};
    \node [on chain=2,graybox] {3};
\end{scope}

\draw [decoration={brace}, decorate] (0.25,0.4) -- (1.25,0.4);
\draw [decoration={brace,mirror}, decorate] (-0.25,-1.3) -- (0.25,-1.3);
\draw [decoration={brace,mirror}, decorate, label=below:a] (1.75,-1.3) -- (2.25,-1.3);

\node at (0.75,0.7) {x};
\node at (0,-1.6) {ax};
\node at (2,-1.6) {bx};

\begin{scope}[start chain=3 going right,node distance=-0.15mm]
    \node [on chain=3,box,label=below:{\scriptsize 0}] (c1) at (0,-2.25) {b};
    \node [on chain=3,graybox,label=below:{\scriptsize 1}] {\bf a};
    \node [on chain=3,graybox,label=below:{\scriptsize 2}] {\bf b};
    \node [on chain=3,box,label=below:{\scriptsize 3}] {};
    \node [on chain=3,box,label=below:{\scriptsize 4}] {};
    \node [on chain=3,box,label=below:{\scriptsize 5}] {};
    \node [on chain=3,box,label=below:{\scriptsize 6}] {};
\end{scope}

\node at (4,0.6) {\bf (3)};
\node at (-1,-0.45) {$\LCP$:};
\node at (-1,-2.25) {$\bwt$:};

\end{tikzpicture}

\end{tabular}
\end{center}

\caption{Graphical illustration of
Example~\ref{ex:algorithm-example}.}
\label{f:algorithm-example}
\end{figure}

\begin{example}
\label{ex:algorithm-example}
Let us consider an integer array $\Ell[1..7) = [1,\,4,\,0,\,2,\, 1,\,3]$.
Using the above algorithms we will try to reconstruct a string $v$, such
that $\LCP_{\ibwt(v)}=\Ell$.
Since $\Ell[3]=0$ $w$~contains 3 occurrences of~$a$ and 4
occurrences of~$b$, and the initial call to
Algorithm~\ref{alg:interval-matching} is
InferInterval$([0..7),[0..3),[3..7))$ (see Figure~\ref{f:algorithm-example}~(1)).
We then have $m_x=\Ell[3]=0$, $m_{ax}=\Ell[1]=1$ and
$m_{bx}=\Ell[5]=1$, which leads to the recursive calls
InferInterval$([0..3),[0..1),[3..5))$ and
InferInterval$([3..7),[1..3),[5..7))$.

When processing InferInterval$([0..3),[0..1),[3..5))$ (see Figure~\ref{f:algorithm-example}~(2)),
we find that $m_{bx}=m_x+1=2$ but $m_{ax}=\omega$ because the
$ax$-interval has size 1. Thus we set
$v[0..1)=b$ (line 14) and make the recursive call
InferInterval$([1..3),[0..1),[4..5))$.

When processing InferInterval$([1..3),[0..1),[4..5))$ (see Figure~\ref{f:algorithm-example}~(3)),
we find that both the $ax$- and the $bx$-interval have size 1. In such a case,
we always have a swap interval. Here we set $v[1..3)=ab$ and add
$[1..3)$ into $S$.

When processing InferInterval$([3..7),[1..3),[5..7))$ (see Figure~\ref{f:algorithm-example}~(4)),
we have $m_x=\Ell[5]=1$ but $m_{ax}=4>m_x+1$ and $m_{bx}=3>m_x+1$.
Comparing $\Ell[2..3)=[4]$ and $\Ell[4..5)=[2]$ (line
10), we find that they do not match. Thus we set
$v[3..5)=bb$ and $v[5..7)=aa$.

The final result is $v=b[ab]bbaa$, where the only swap interval is
marked with brackets. The main algorithm then computes
$W=\ibwt(v)=\multiset{aabb,abb}$, verifies that
$\LCP_{W}=\Ell$ and outputs $b[ab]bbaa$.
It is easy to verify
that $\LCP_{\ibwt(bbabbaa)}=\Ell$ too.

\end{example}

%%------------------------------------------------------------

%\clearpage

\section{Algorithm for CSSILA}
\label{sec:CSSILA}

In this section we present the algorithm solving CSSILA problem
%(i.e. Cyclic String Set Inference from $\LCP$ Array)
for alphabets of any size.
Let $\Sigma=\{a_1,a_2,\ldots,a_\sigma\}$ be an alphabet 
%of size $\sigma$ 
and $\Ell[1..n)$ be $\LCP$ array containing $\sigma-1$ zeroes.
We try to reconstruct a set of strings
$W_{\Ell}=\{w\in\Sigma^n : \LCP_{\ibwt(w)}=\Ell\}$.
The resulting set $W_\Ell$ is represented as 
an acyclic deterministic finite automaton~$\aut_\Ell$ 
accepting all strings $w\in W_\Ell$.
%a directed acyclic graph $G_\Ell$
%with a single source node $v_0$ and a single sink node $v_F$ such that
%each path $v_0\leadsto v_F$ has length $n$ and its edge labels
%spells some $w_j\in W_\Ell$. 
Such a representation allows us to perform efficient $W_\Ell$ membership tests,
enumerate all its members, and efficiently find the lexicographic predecessor and successor for any $w\in W_\Ell$.

The resursive iteration of intervals in the binary case does not work
for larger alphabets, because we can no more uniquely match intervals.
Instead, the algorithm iterates from left to right, and for that
we need a different characterization of $W_{\Ell}$.

For any $c\in\Sigma$ and any $w\in W_{\Ell}$,
consider two consecutive occurrences of $c$ in $w$
(i.e., there are no other occurrences of $c$ between them but there
may be other characters).  Say, they occur at positions $h$ and $k$,
and are the $i^{\mathrm{th}}$ and $(i+1)^{\mathrm{th}}$ occurrence of
$c$ in $w$. Then we must have that
\[
\Ell[i_c+i] = 1+\min \{ \Ell[j] : h < j \le k\}
\]
where $i_c$ is the starting position of the $c$-interval.
We call this the \emph{pair constraint}. The following lemma shows
how to characterize $W_{\Ell}$ using pair constraints.

\begin{lemma}
\label{lm:pair-constraints}
  For any $w\in \Sigma^n$, $w\in W_{\Ell}$ if and only if every pair
  of consecutive occurrences satisfies the pair constraint.
\end{lemma}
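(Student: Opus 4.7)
My plan uses the identity $W_{\sa_W[j]} = \widehat{w}[j]\cdot W_{\sa_W[\Psi_w(j)]}$, which follows directly from Lemma~\ref{lm:suffix} (the tail of the character sequence for index $j$ coincides with the full sequence for index $\Psi_w(j)$), together with the standard range-minimum characterization $\lcp\big(W_{\sa_W[h]},W_{\sa_W[k]}\big) = \min\{\Ell[j]:h<j\le k\}$ for $h<k$. For the forward direction, I assume $w\in W_{\Ell}$ and let $h<k$ be the $i$-th and $(i{+}1)$-th occurrences of $c$ in $w$. Stability of the standard permutation yields $\Psi_w(i_c+i-1)=h$ and $\Psi_w(i_c+i)=k$, so the identity gives $W_{\sa_W[i_c+i-1]}=c\cdot W_{\sa_W[h]}$ and $W_{\sa_W[i_c+i]}=c\cdot W_{\sa_W[k]}$. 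Taking $\lcp$ yields $\Ell[i_c+i] = 1 + \min\{\Ell[j]:h<j\le k\}$, which is exactly the pair constraint.

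For the backward direction, I assume $w$ satisfies every pair constraint and set $L := \LCP_{\ibwt(w)}$; the goal is to show $L=\Ell$. Applying the forward direction tautologically to $L$ gives that $L$ satisfies the same pair equations indexed by $w$, while $L[m]=0$ at every boundary position of $w$ (meaning positions $m$ where $\widehat{w}$ transitions between characters) by definition of LCP. The first step is to claim that the zero set of $\Ell$ coincides with the boundary set of $w$. Each pair equation forces $\Ell[i_c+i]\ge 1$, so zeros of $\Ell$ are contained in the boundaries; since $\Ell$ contains exactly $\sigma-1$ zeros while $w$ has at most $\sigma-1$ boundaries (with equality iff $w$ uses all $\sigma$ characters), the containment forces equality, and in particular $w$ must contain each symbol of $\Sigma$. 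Hence $L$ and $\Ell$ already agree on every boundary.

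It remains to show $L[m]=\Ell[m]$ at non-boundary positions, and I would do so by strong induction on $v$ with the strengthened hypothesis that, for every $m$, $L[m]\le v$ iff $\Ell[m]\le v$, and in that case $L[m]=\Ell[m]$. The base $v=0$ is the boundary equality. For the inductive step, if $L[m]=v>0$ then $m=i_c+i$ for a pair $(h,k)$, and the self pair equation gives $\min\{L[j]:h<j\le k\}=v-1$. By the induction hypothesis applied pointwise, the set $\{j\in(h,k]:L[j]\le v-1\}$ equals $\{j\in(h,k]:\Ell[j]\le v-1\}$ with matching values, while on the complement both $L[j]$ and $\Ell[j]$ are $\ge v$; the two minima over the range therefore coincide, and the pair constraint on $\Ell$ produces $\Ell[m]=v$. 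The case $\Ell[m]=v$ is symmetric. For $v=\omega$ I argue by contradiction: if $L[m]=\omega$ but $\Ell[m]$ were finite, the pair equation on $\Ell$ would expose a finite $\Ell[j]$ in the defining range, whose matching $L[j]$ from the finite induction would keep $\min L$ finite, contradicting $L[m]=\omega$.

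The main obstacle is the backward-direction induction: the dependency structure of the pair equations is not a DAG on the index $m$, since the ``children'' $h,k$ of $m=i_c+i$ can lie anywhere in $[0..n)$, so positional induction cannot work. Switching to induction on the LCP value is the crucial device, because the pair equation $X[m]=1+\min X[j]$ forces any position of value $v$ to depend only on positions of value at most $v$, with at least one value exactly $v-1$; this stratification lets $L$ and $\Ell$ be compared level by level.
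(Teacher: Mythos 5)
Your proposal is correct and takes essentially the same route as the paper: the forward direction is the same computation $\Ell[i_c+i]=\lcp(cx,cy)=1+\lcp(x,y)=1+\min\{\Ell[j]:h<j\le k\}$ via the range-minimum characterization of $\lcp$, and your induction on the LCP value is the paper's backward argument phrased contrapositively --- the paper selects the smallest \emph{wrong value} (not position) among the disagreements between $\Ell$ and $\LCP_{\ibwt(w)}$ and uses the pair equation to exhibit a strictly smaller wrong value, which is precisely the stratification-by-value device you single out as the crucial step.
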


\begin{proof}
  Let $V=\ibwt(w)$.
  %Assume first that $w\in W_{\Ell}$.
  Consider a pair of consecutive occurrences at positions $h$ and $k$
  in $w$, which are the $i^{\mathrm{th}}$ and $(i+1)^{\mathrm{th}}$
  occurrence of $c$ in $w$. Let $x=V_{\sa_{V}[h]}$ and
  $y=V_{\sa_{V}[k]}$. Then we must have that
  $cx=V_{\sa_{V}[i_c+i-1]}$ and $cy=V_{\sa_{V}[i_c+i]}$, where
  $i_c$ is the starting position of the $c$-interval.

  For any suffix array $\sa$ and the corresponding LCP array $\LCP$,
  and any two positions $h$ and $k$ with $h<k$,
  $\min \{ \LCP[j] : h < j \le k \}$ is the length of the longest
  common prefix of the suffixes $\sa[h]$ and $\sa[k]$.
  Thus if $\Ell=\LCP_{V}$, we must have
  \[
  \Ell[i_c+i] = \lcp(cx,cy) = 1 + \lcp(x,y) 
  = 1 + \min \{ \Ell[j] : h < j \le k \}
  \,.
  \]
  This proves the ``only if'' part.

  The ``if'' part is proven by contradiction.  Suppose that all
  the pair constraints hold in $\Ell$ but $\Ell \neq \LCP_{V}$.
  Let $\Ell[d] \neq \LCP_{V}[d]$ be the smallest wrong value in
  $\Ell$. Assume $\Ell[d] < \LCP_{V}[d]$; otherwise we swap the
  roles of $\Ell$ and $\LCP_{V}$ and pick the smallest
  value in $\LCP_{V}$ that differs from $\Ell$.
  Let $c\in\Sigma$ be the character such that $d$ is in the
  $c$-interval $[i_c,j_c)$, and let $i=d-i_c$. Let $h$ and $k$ be
  the positions of the $i^{\mathrm{th}}$ and $(i+1)^{\mathrm{th}}$
  occurrences of $c$ in $w$. Since the pair constraints hold for
  both $\Ell$ and $\LCP_{V}$, we must have
  \begin{align*}
    \Ell[d] &= 1 + \min \{ \Ell[j] : h < j \le k \}  \text{ and }\\
    \LCP_{V}[d] &= 1 + \min \{ \LCP_{V}[j] : h < j \le k \}\,.
  \end{align*}
  Let $j\in [h+1..k]$ be a position where $\Ell[j]$ is minimized in
  that range, i.e., $\Ell[j] = \Ell[d]-1$. But then we must have
  $\Ell[j] < \LCP_{V}[j]$, which contradicts $\Ell[d]$ being
  the smallest wrong value. This completes the ``if'' part.
  \qed
\end{proof}

% The solution is based on a similar idea of $\LCP$ interval matching
% as the one for BCSSILA problem.
% If we drop the requirement of uniqueness,
% all facts from Section~\ref{sec:intervals} stay correct,
% however we need a more generalized version of Lemma~\ref{lm:rmq} (see~\cite{ohl2013}).

% \begin{lemma}
% \label{lm:interval-split}
% Let $[i..j)$ be an $x$-interval and an $\ell$-interval for $\ell<\omega$, $k=\RMQ_\Ell[i+1..j)$,
% and let $i<c_1<c_2\ldots<c_m<j$, where $m\leq\sigma$,
% be such that $\forall_{1\leq t\leq m}\:\Ell[c_t]=\Ell[k]$,
% Then, there exist $1< d_1<\ldots<d_m<d_{m+1}<\sigma$ and a string $y$ of length $\ell-|x|$
% such that
% $[i..c_1)$ is $xya_{d_1}$-interval,
% $[c_1..c_2)$ is $xya_{d_2}$-interval,
% \ldots, and
% $[c_m..j)$ is $xya_{d_{m+1}}$-interval.
% \end{lemma}

% Due to the above fact, we need to split each $x$-interval into $\sigma$ subintervals, namely
% $xya_i$-interval for $1\leq i\leq \sigma$ (possibly some of them empty).
% Unfortunately, a simple extension of the solution for BCSSILA does not work.
% One of the problems is that the minimal value in the $\LCP$
% interval may not be unique.
% A potential large number of possible choices makes such an approach impractical.
% %
% Therefore, instead of the recursive reconstruction as in Algorithm~\ref{alg:bwt-reconstruction}
% we scan $\Ell$ from left to right trying to match 
% the minimal values in $xya_i$-(sub)intervals  with the minimal values in $x$-interval.

Recall that for a string $w\in\Sigma^*$ and $c\in\Sigma$, $|w|_c$ denotes the number
of occurrences of $c$ in~$w$.
We extend this notions to $LCP$ arrays.
Namely, $|\Ell|_c$ denotes the number of occurrences of $c$ in any string $w$ such that $LCP_w=\Ell$.
% Note that for a given $\LCP$ array the size of the alphabet 
% and the number of occurrences of each character can be easily computed
% just from the location of zeroes.
%(even without reconstructing~$w$).
We split $\LCP$ array $\Ell$ into $\sigma$ so-called \emph{character
  arrays} as follows.  For any $c\in\Sigma$, let $[i_c,j_c)$ be the
$c$-interval and let $\Ell_c[1..j_c-i_c) = \Ell[i_c+1..j_c) -1$ (where
$A-1$ means subtracting one from each element of $A$).  
Notice that the $c$-intervals can be determined solely based on the
occurrences of zeroes in $\Ell$, and thus we can extend the above
definitions to cases where $\Ell$ is not a valid LCP array.
%
% By $i_c=j_c$ we denote an empty interval.
%
% Our goal is to find all possible matchings between elements of
% sequences $\Ell_{c}$ and $\Ell$ satisfying conditions described below.
%
For a technical reason, to avoid a number of special cases to be checked
(e.g. for empty character subsequences or boundary cases),
we set $\Ell[0]=\Ell_c[0]=-1$ and $\Ell[n] = \Ell_c[|\Ell|_c]=-2$
for all $c\in\Sigma$.
%we prepend $-1$ and append $-2$ to each of the sequences 
%$\Ell$, $\Ell_{a_1}$, \ldots, $\Ell_{a_\sigma}$.
This gives us a trivial match for the begin and end 
of each character sequence with the global sequence~$\Ell$.

To be able to construct the set $W_\Ell$ iteratively
we define a notion of \emph{(prefix) consistency} of a string
$s\in\Sigma^k$ ($k\leq n$) with an $\LCP$ array $\Ell$ when $s$ is
considered to be a prefix of some string in $W_{\Ell}$.
For any $c\in\Sigma$, let 
$\ell_c(s) = \max \{j < k : s[j]=c\} \cup \{-1\}$ be the position of
the last occurrence of $c$ in $s$ (or -1 if $|s|_c=0$).
For any $c\in\Sigma$ such that $|s|_c < |\Ell|_c$,
a \emph{partial pair constraint} is
\[
\Ell[i_c+|s|_c] \le 1 + \min \{ \Ell[j] : \ell_c(s) < j \le k\} \,.
\]
In other words, it is a pair condition on the pair consisting of the
last occurrence of $c$ in $s$ and the next occurrence of $c$ after the
end of $s$. Since we do not know the location of the next
occurrence, we only verify that nothing in $\Ell[0..k]$ violates the
condition. Therefore, we have the inequality in place of the equality
in the condition.

\begin{definition}
\label{def:consistency}
Let $s\in\Sigma^k$ for $k\le n$.
We say that $s$ is prefix consistent with $\Ell$, if 
\begin{enumerate}
\item the pair constraint holds for every pair of consecutive
  occurrences in $s$, and
\item the partial pair constraint holds for each $c\in\Sigma$ 
such that $|s|_c < |\Ell|_c$.
\end{enumerate}
\end{definition}

From the definition and Lemma~\ref{lm:pair-constraints}, we
immediately get the following.

\begin{corollary}
  $w\in W_{\Ell}$ if and only if $|w|=n$ and $w$ is prefix consistent
  with $\Ell$.
\end{corollary}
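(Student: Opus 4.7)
The plan is to observe that this corollary is essentially a corollary of Lemma~\ref{lm:pair-constraints}, combined with a small bookkeeping argument that condition~2 of prefix consistency is either vacuous or redundant when $|s|=n$. I would prove the two directions separately, each being very short.

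For the forward direction, I would assume $w\in W_{\Ell}$. By definition of $W_{\Ell}$, $|w|=n$ immediately. Next, I would apply the ``only if'' part of Lemma~\ref{lm:pair-constraints} to conclude that every pair of consecutive occurrences of any character in $w$ satisfies the pair constraint, which is exactly condition~1 of Definition~\ref{def:consistency}. For condition~2, the key remark is that $|w|_c=|\Ell|_c$ for every $c\in\Sigma$: the size of the $c$-interval of $\Ell$ equals the number of occurrences of $c$ in $\ibwt(w)$, which in turn equals the number of occurrences of $c$ in $w$ because the BWT merely permutes characters. Consequently there is no $c\in\Sigma$ with $|w|_c<|\Ell|_c$, so condition~2 is vacuously true. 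Hence $w$ is prefix consistent with $\Ell$.

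For the reverse direction, I would assume $|w|=n$ and that $w$ is prefix consistent with $\Ell$. Then condition~1 of Definition~\ref{def:consistency} is literally the hypothesis of the ``if'' part of Lemma~\ref{lm:pair-constraints}, so the lemma yields $w\in W_{\Ell}$. Note that condition~2 of prefix consistency is not needed here at all; it is condition~1, applied to all the (now complete) consecutive pairs in $w$, that does the work.

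There is no real obstacle: the lemma handles everything of substance. The only point that deserves a sentence of care is the character-count equality $|w|_c=|\Ell|_c$ used in the forward direction, which is what collapses the two prefix-consistency conditions into a single non-trivial one once the whole string has been laid down. After that observation the corollary is immediate.
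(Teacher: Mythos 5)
Your proof is correct and follows exactly the route the paper intends: the paper gives no explicit argument beyond ``from the definition and Lemma~\ref{lm:pair-constraints} we immediately get the following,'' and your write-up is precisely the natural unpacking of that one-liner (the lemma handles condition~1 in both directions, while condition~2 is vacuous for a full-length string since $|w|_c=|\Ell|_c$ by the definition of $|\Ell|_c$). No gaps.
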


See Examples~\ref{ex:dp-bin} and \ref{ex:dp-tern} for illustration of
strings consistent and inconsistent with a given $\LCP$ array.

% To be able to construct the set $W_\Ell$ iteratively
% we define a notion of \emph{(prefix) consistency} of a string $s\in\Sigma^k$ ($k\leq n$) with an $\LCP$ array $\Ell$
% as follows.
%
Let $p(s)=(|s|_{a_1}, |s|_{a_2}, \dots, |s|_{a_\sigma})$
be the Parikh vector of $s$ and
$p(\Ell)=(|\Ell|_{a_1}, |\Ell|_{a_2}, \dots, |\Ell|_{a_\sigma})$
be the Parikh vector of $\Ell$.
% For any $c\in\Sigma$, let 
% $\ell_c(s) = \max \{j < k : s[j]=c\} \cup \{-1\}$ be the position of
% the last occurrence of $c$ in $s$ (or -1 if $|s|_c=0$)
%
%For any string $s\in\Sigma^k$, $k\le n$, we say that $s$ is prefix
%consistent with $\Ell$ if $s$ is a prefix of some string in
%$W_{\Ell}$. 
%
Define
\[
b_c(s) = \left\{
  \begin{array}{ll}
    -1 & \text{ if } \Ell_c[|s|_c] > \min \{\Ell[j] : \ell_c(s) < j \leq |s| \} \\
    0 & \text{ if } \Ell_c[|s|_c] = \min \{\Ell[j] : \ell_c(s) < j \leq |s| \} \\
    1 & \text{ if } \Ell_c[|s|_c] < \min \{\Ell[j] : \ell_c(s) < j \leq  |s| \}
  \end{array}
\right.
\]
and $b(s)=(b_{a_1}(s), b_{a_2}(s), \dots, b_{a_\sigma}(s))$.  The
following is easy to verify.

\begin{lemma}
\label{lm:b}
  A string $s$ violates a partial pair constraint if and only if
  $b(s)$ contains $-1$.
\end{lemma}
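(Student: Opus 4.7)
The plan is to show that the lemma is essentially a definitional rephrasing: the partial pair constraint for a character $c$ coincides with the inequality expressed by $b_c(s)\in\{0,1\}$, so its failure is captured exactly by $b_c(s)=-1$.

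First I would unpack the definitions. Because $\Ell_c[h]=\Ell[i_c+h]-1$ for $h\in[1..|\Ell|_c)$, the partial pair constraint
\[
\Ell[i_c+|s|_c] \le 1 + \min \{\Ell[j] : \ell_c(s) < j \le |s|\}
\]
is equivalent to
\[
\Ell_c[|s|_c] \le \min \{\Ell[j] : \ell_c(s) < j \le |s|\}.
\]
Comparing with the three-case definition of $b_c(s)$, this inequality is satisfied exactly when $b_c(s)\in\{0,1\}$ and violated exactly when $b_c(s)=-1$. So for any $c$ with $|s|_c<|\Ell|_c$, the constraint fails iff $b_c(s)=-1$.

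Next I would check that indices $c$ outside the guard $|s|_c<|\Ell|_c$ cannot contribute a spurious $-1$ to $b(s)$. This is where the boundary conventions $\Ell[0]=-1$, $\Ell[n]=-2$ and $\Ell_c[0]=-1$, $\Ell_c[|\Ell|_c]=-2$ do the work. If $|s|_c=0$ then $\ell_c(s)=-1$, so the range $\ell_c(s)<j\le|s|$ includes $j=0$, making the minimum at most $\Ell[0]=-1=\Ell_c[0]$; thus $b_c(s)\ne -1$. If $|s|_c=|\Ell|_c$ then $\Ell_c[|s|_c]=-2$, which is the global lower bound for any $\Ell[j]$ under our conventions, so the minimum is $\ge -2=\Ell_c[|s|_c]$ and again $b_c(s)\ne -1$. (If $|s|_c>|\Ell|_c$ could occur at all, $s$ would already fail to be extendable and this case need not arise inside the algorithm's invocation of $b(s)$.)

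Combining these two observations gives both directions: any $-1$ in $b(s)$ must come from some $c$ with $|s|_c<|\Ell|_c$, and for such $c$ the entry $b_c(s)=-1$ is equivalent to a violation of the partial pair constraint; conversely, any violated partial pair constraint produces $b_c(s)=-1$ by the same equivalence. The only subtlety to get right is the boundary-convention bookkeeping outlined above, which rules out off-by-one mismatches between the extended arrays $\Ell$ and $\Ell_c$ at the endpoints.
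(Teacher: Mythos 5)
The paper gives no written proof of this lemma (it is introduced with ``The following is easy to verify''), and your direct verification is exactly the intended argument: for $1\le|s|_c<|\Ell|_c$ the shift $\Ell_c[h]=\Ell[i_c+h]-1$ turns the partial pair constraint into $\Ell_c[|s|_c]\le\min\{\Ell[j]:\ell_c(s)<j\le|s|\}$, whose negation is precisely the condition $b_c(s)=-1$, and the sentinel values $-1$ and $-2$ keep the remaining coordinates of $b(s)$ out of the $-1$ case. Two small points to tighten: in the $|s|_c=0$ case you need the minimum to be \emph{at least} $-1$ (equivalently, exactly $-1$, since $\Ell[0]=-1$ and the value $\Ell[n]=-2$ only enters the range when $|s|=n$) in order to conclude $b_c(s)\ne-1$, whereas you only argue that it is at most $-1$; and $|s|_c=0$ is actually \emph{inside} the guard $|s|_c<|\Ell|_c$, so for the ``only if'' direction you should also observe that the partial pair constraint itself cannot be violated there (it reads $\Ell[i_c]\le 1+(-1)=0$, which holds because $\Ell[i_c]\le 0$). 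Neither point affects the correctness of the overall argument.
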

%
%We say that a string $s\in\Sigma^k$ is \emph{locally consistent} with $\Ell$ if
%$p_c(s)\leq |L|_c$ for each $c\in\Sigma$ and
%$b(s)$ does not contain $-1$.
% We will next formulate the consistency in terms of the vectors $p(s)$
% and $b(s)$.

% \begin{definition}
% \label{df:local-consistency}
% Let $\Ell[1..n)$ be an integer sequence and $s\in\Sigma^n$.
% We say that $s$ is \emph{locally consistent} with $\Ell$ if
% $p_c(s)\leq |L|_c$ for each $c\in\Sigma$ and
% $b(s)$ does not contain $-1$.
% \end{definition}

% \begin{lemma}
% \label{lm:consistency}
% Let $\Ell[1..n)$ be an integer sequence and $s\in\Sigma^n$.
% Then $s$ is consistent with $\Ell$ if and only if 
% every prefix of $s$ is locally consistent with $\Ell$.
% \end{lemma}

% The structure of the automaton $\aut_\Ell$ produced by the algorithm
% is based on the following result.

The significance of the vectors $p(s)$ and $b(s)$ is shown by the
following lemma.

\begin{lemma}
\label{lm:extension}
  Let $s\in\Sigma^k$, $k<n$, 
  be a string prefix consistent with $\Ell$.
  Given $p(s)$ and $b(s)$ (but not $s$), and $c\in\Sigma$,
  we can determine whether $sc$ is prefix consistent with $\Ell$
  and compute $p(sc)$ and $b(sc)$ in $O(\sigma)$ time.
\end{lemma}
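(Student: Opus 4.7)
The plan is to exhibit an explicit $O(\sigma)$ procedure that, given $p(s)$, $b(s)$, $c$, and access to the precomputed arrays $\Ell$ and $\Ell_{a_1},\dots,\Ell_{a_\sigma}$, produces $p(sc)$, $b(sc)$, and the consistency verdict. Writing $k=|s|=\sum_{c'\in\Sigma}p(s)_{c'}$ (computable in $O(\sigma)$ time) and updating $p(sc):=p(s)+e_c$, the core of the work is updating $b$ and verifying the pair and partial pair constraints.

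For the coordinate $c$ itself, adding $c$ at position $k$ creates a new pair of consecutive $c$'s, whose predecessor is at $\ell_c(s)$ (or the boundary $-1$). By definition of $b_c$ and the convention $\Ell_c[0]=-1$, this pair satisfies the pair constraint precisely when $b_c(s)=0$; since $s$ is prefix consistent, Lemma~\ref{lm:b} guarantees $b_c(s)\in\{0,1\}$, so we accept iff $b_c(s)=0$ and $|s|_c<|\Ell|_c$. The new window for $b_c(sc)$ collapses to the singleton $\{\Ell[k+1]\}$, so we set
\[
b_c(sc):=\operatorname{sign}\bigl(\Ell[k+1]-\Ell_c[|s|_c+1]\bigr),
\]
which gracefully handles the last-occurrence case via the boundary $\Ell_c[|\Ell|_c]=-2$.

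For each $c'\ne c$, the window of $b_{c'}$ extends by the single new entry $\Ell[k+1]$, hence the new minimum equals $\min(\min_{\text{old}},\Ell[k+1])$ while $\Ell_{c'}[|sc|_{c'}]=\Ell_{c'}[|s|_{c'}]$ is unchanged. A short case split on $(b_{c'}(s),\operatorname{sign}(\Ell[k+1]-\Ell_{c'}[|s|_{c'}]))$ suffices. When $b_{c'}(s)=0$ we have $\min_{\text{old}}=\Ell_{c'}[|s|_{c'}]$, so $b_{c'}(sc)$ stays $0$ if $\Ell[k+1]\ge \Ell_{c'}[|s|_{c'}]$ and drops to $-1$ otherwise. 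When $b_{c'}(s)=1$ we know only $\min_{\text{old}}>\Ell_{c'}[|s|_{c'}]$, but this is enough: if $\Ell[k+1]>\Ell_{c'}[|s|_{c'}]$ both terms in the $\min$ strictly exceed $\Ell_{c'}[|s|_{c'}]$, so $b_{c'}(sc)=1$; equality gives $0$; and $\Ell[k+1]<\Ell_{c'}[|s|_{c'}]$ forces the new minimum below $\Ell_{c'}[|s|_{c'}]$, so $b_{c'}(sc)=-1$. The main conceptual point, and essentially the only subtlety, is that we never need the exact value of $\min_{\text{old}}$, only its qualitative sign, which is exactly what $b_{c'}(s)$ records.

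Finally, by Lemma~\ref{lm:b} and Definition~\ref{def:consistency}, $sc$ is prefix consistent with $\Ell$ iff all existing pair constraints in $s$ still hold (automatic, being constraints on $s$), the new pair constraint holds ($b_c(s)=0$, $|s|_c<|\Ell|_c$), and no partial pair constraint is violated in $sc$ (no entry of the newly computed $b(sc)$ equals $-1$). Each of the $\sigma$ coordinates of $b(sc)$ is produced in $O(1)$ time by the rules above, so the procedure runs in $O(\sigma)$ total, as claimed.
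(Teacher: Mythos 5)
Your proposal is correct and follows essentially the same route as the paper's proof: increment the Parikh vector, observe that the new pair of consecutive $c$'s satisfies its pair constraint iff $b_c(s)=0$, recompute $b_c(sc)$ over the collapsed singleton window $\{\Ell[k+1]\}$, and update each $b_{c'}(sc)$ for $c'\neq c$ by comparing $\Ell[k+1]$ against $\Ell_{c'}[|s|_{c'}]$, then declare $sc$ consistent iff no coordinate of $b(sc)$ is $-1$. Your explicit case split for $b_{c'}(s)=1$ versus $b_{c'}(s)=0$ actually spells out (and repairs an apparent sign typo in) the paper's update equations, but the underlying argument is the same.
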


\begin{proof}
Let us first look at updating the vectors.
Let $s\in\Sigma^k$ be a string consistent with an $\LCP$ array $\Ell$ and $a_i\in\Sigma$.
Given $p(s)=(|s|_{a_1}, |s|_{a_2}, \dots, |s|_{a_\sigma})$ we have
$p(s\cdot a_i)=(|s|_{a_1},\ldots,|s|_{a_{i-1}}, |s|_{a_i}+1,|s|_{a_{i+1}}, \dots, |s|_{a_\sigma})$.

By definition of $b$ and consistency of $s$ with $\Ell$ we have:
\begin{equation}
b_{a_i}(s\cdot a_i) = \left\{
  \begin{array}{ll}
    -1 & \text{ if }\  \Ell_{a_i}[|s|_{a_i}+1] > \Ell[|s|+1] \\
    0 & \text{ if }\  \Ell_{a_i}[|s|_{a_i}+1] = \Ell[|s|+1] \\
    1 & \text{ if }\  \Ell_{a_i}[|s|_{a_i}+1] > \Ell[|s|+1]
  \end{array}
\right.,
\label{eq:ext-1}
\end{equation}
because we look for the minimal value over the singleton interval $\Ell[|s|+1..|s|+2)$,
and for $c\neq a_i$ we have
\begin{equation}
b_c(s\cdot a_i) = \left\{
  \begin{array}{ll}
    -1 & \text{ if }\  \Ell_c[|s|_c] > \Ell[|s|+1] \\
    0 & \text{ if }\  \Ell_c[|s|_c] = \Ell[|s|+1] \\
    b_c(s) & \text{ if }\  \Ell_c[|s|_c] > \Ell[|s|+1]
  \end{array}
\right.,
\label{eq:ext-2}
\end{equation}
according to the relation of $\Ell[|s|+1]$ to the minimal value in
$\Ell[\ell_c(s)+1..|s|+2)$.

Now consider prefix consistency. The extension of $s$ with $a_i$ adds
one new pair of consecutive occurrences of $a_i$'s, which satisfies the
pair constraint if and only if $b_{a_i}(s)=0$.
The partial pair constraints of $s\cdot a_i$ 
can be checked using Lemma~\ref{lm:b}.

The computation of $b(s\cdot a_i)$ requires the verification of a
separate condition for each $b_c(s\cdot a_i)$ for each $c\in\Sigma$,
hence it could be done in time $O(\sigma)$.  On the other hand, the
computation of $p(s\cdot a_i)$ can be done in a constant time.
\qed
\end{proof}

The structure of the automaton $\aut_\Ell$ produced by the algorithm
% is based on the following result.
%The structure of $\aut_\Ell$ 
is as follows.
Each state $v$ of $\aut_\Ell$ corresponds to a unique pair $(p_v,b_v)$
and represents the set of strings $S_v=\{s\::\:p(s)=p_v\ \wedge\ b(s)=b_v\}$.
For a pair of states $v_1,\,v_2\,\in\aut_\Ell$ there exists a transition $v_1\rightarrow v_2$ labelled with $c$
if $S_{v_2}=S_{v_1}\cdot c$ and for each $s\in S_{v_1}$ $sc$ is consistent with $\Ell$
(where $A\cdot c$ denotes appending a character $c$ to each element of the set $A$).
In such a case $(p_{v_2},b_{v_2})$ are given by the equations~\eqref{eq:ext-1} and \eqref{eq:ext-2}.

Note that if for a string $s$ consistent with $\Ell$ $b(s\cdot c)$ contains $-1$,
then the state~$v$ representing~$s$ can not have an outgoing transition labelled with $c$.
Therefore, for any $s$ consistent with $\Ell$, $b(s)$ can be represented as a bit vector
(i.e. contain only binary values).

Observe that the empty string $\varepsilon$ and all single characters $c\in\Sigma$ are consistent with $\Ell$.
Hence, we can construct the set $W_\Ell$ and the automaton $\aut_\Ell$ by
iterative extension of strings consistent with $\Ell$.
To construct $\aut_\Ell$ we iterate through sets of states corresponding to
strings of length $k=1,\ldots,n-1$, i.e.
$$\hp_k=\Big\{v\in\aut_\Ell\;:\; \forall_{w\in S_v} \: |w| = k\Big\},$$
and for each state $v\in\hp_k$ we check the existence of a transition $v\rightarrow v_1$.
All states corresponding to the sets of strings of length $k+1$ consistent with $\Ell$
form the set $\hp_{k+1}$.

Observe that for any $w\in W_\Ell$ we have $p(w)=p(L)$ and $b(w)=(0,0,..,)$
(for each $c$ $\Ell_c[|w|_c]=\Ell[|w|]=-2$).
Therefore the final state $v_f$ of $\aut_\Ell$ is unique.

Now we are ready to discuss the time and space complexity of our solution.
The number of states of $\aut_\Ell$ is bounded by the number all possible
pairs $(p,b)$ of Parikh vectors and bit vectors.
%Each state can be extended with each of $\sigma$ possible characters.
The number of all possible bit vectors is bounded by $2^\sigma$
and the number of all possible Parikh vectors reaches its maximum when the number
of occurrences of all characters are equal.
Moreover, we need $O(\sigma)$ space to store each state of $\aut_\Ell$.
%
%Each cell of $M$ can contain graph nodes represented by bit vectors originating from $\sigma$ adjacent cells.
%Bit vectors originating from a single adjacent cell can have up to $2^\sigma$ distinct values.
%Finally, the number of $M$ cells reaches its maximum when all dimensions of $M$ are equal.
Therefore, the space complexity of 
%Algorithm~\ref{alg:cssila} is
presented algorithm is
$O(\sigma 2^\sigma(\frac{n}{\sigma}+1)^\sigma)$.

To construct an automaton $\aut_\Ell$ returned by the algorithm
we need to check for each state $v\in\aut_\Ell$
up to $\sigma$ possible transition.
Validation of a single transition requires $O(\sigma)$ time.
%for each %string $s_v$ corresponding to some node $v$.
%sate $v\in\hp_k$.
%Each state is created once and removed at most once.
%Moreover, 
%To check for extensions each state is accessed only once.
%Finally, processing of hyperplanes in $M$ can be implemented in such a way,
%that creation of $\hp_{k+1}$ requires one scan over $\hp_k$ and 
%the part of $\hp_{k+1}$ created so far.
This, together with the bound for the number of all states, gives us
the time complexity $O(\sigma^2 2^\sigma(\frac{n}{\sigma}+1)^\sigma)$.

The above discussion constitutes a proof of Theorem~\ref{thm:CSSILA}
in Section~\ref{sec:CSSILA-short}.

\begin{remark}
The above presented algorithm works correctly also for binary alphabet,
however its time and space complexity is worse than the complexity of Algorithm~\ref{alg:bwt-reconstruction}.
\end{remark}

% \begin{theorem}
% For an integer array $\Ell[1..n)$ containing $\sigma-1$ zeroes one can
% compute a representation of the set of all strings $W_\Ell=\{w\in\Sigma^n: \LCP_{\ibwt(w)}=\Ell\}$,
% or answer false if no such string exists,
% in time $O(\sigma^2 2^\sigma(\frac{n}{\sigma}+1)^\sigma)$ and space $O(\sigma 2^\sigma(\frac{n}{\sigma}+1)^\sigma)$.
% \end{theorem}

The following examples illustrate the operation of reconstruction algorithm described above.

\begin{example}
\label{ex:dp-bin}
Let us recall an integer array $\Ell[1..7) = [1,\,4,\,0,\,2,\,1,\,3]$
considered in Example~\ref{ex:algorithm-example}.
Using the procedure described above we will try to construct
a finite deterministic automaton $\aut_\Ell$ accepting the set of strings
$W_\Ell=\{w\in\{a,b\}^7: \LCP_{\ibwt(w)}=\Ell\}$.
%Using Algorithm~\ref{alg:cssila} 

First we transform $\Ell$ into
$\Ell[0..8) = [-1,\,1,\,4,\,0,\,2,\,1,\,3,\,-2]$
and compute character sequences
$\Ell_a[0..4) = [-1,\,0,\,3,\,-2]$
and
$\Ell_b[0..5) = [-1,\,1,\,0,\,2,\,-2]$.
%We have $|\Sigma|=2$, hence
%the matrix $M$ is two dimensional and hyperplanes $\hp_k$ correspond to its diagonals.
%The structure of $M$ is depicted on Figure~\ref{fig:bin}.
The structure of $\aut_\Ell$ is depicted on Figure~\ref{fig:bin}.

We start with the automaton $\aut_\Ell$ consisting of a single initial node $v_{(0)}$ 
represented by a pair $(p_0,b_0)=([0,0],[0,0])$ and contained in the set $\hp_0$.
%
%It is represented by the bit vector $b_{v_{(0)}}=[0,0]$ and corresponds the empty string~$s_{v_{(0)}}=\varepsilon$.
Next, we are iterate over all sets $\hp_k$ for $k=0,\ldots,n-1$
and check for a possible extensions of each state $v\in \hp_k$.

\medskip

$\mathbf{v_{(0)}}:$ 
By \eqref{eq:ext-1} and \eqref{eq:ext-2},
$p(a)=[1,0]$, $b(a)=[1,0]$, $p(b)=[0,1]$ and $b(b)=[0,1]$.
Since neither $b(a)$ nor $b(b)$ contain $-1$, both strings are consistent with $\Ell$.
Hence we create states $v_{(1)}$, $v_{(2)}$ and the transitions $v_{(0)}\rightarrow v_{(1)}$
and $v_{(0)}\rightarrow v_{(2)}$ labelled with $a$ and $b$ respectively.

\medskip

$\mathbf{v_{(1)}}:$
By \eqref{eq:ext-1} and \eqref{eq:ext-2} we have 
$p(aa)=[2,0]$, $b(aa)=[1,0]$, but $\Ell_a[2]=3\neq 1=\Ell[2]$.
Hence, due to pair constraint violation $aa$ is not consistent with $\Ell$. 
On the other hand,
$p(ab)=[1,1]$, $b(ab)=[1,1]$ and it is the first occurrence of $b$,
hence we create a new state $v_{(3)}$ and a transition $v_{(1)} \rightarrow v_{(3)}$ labelled with $b$.

\medskip

$\mathbf{v_{(2)}}:$
We have 
$p(ba)=[1,1]$, $b(ba)=[1,0]$ and it is the first occurrence of $a$,
hence we create a new state $v_{(4)}$ and a transition $v_{(2)} \rightarrow v_{(4)}$ labelled with $a$. 
We have $p(bb)=[0,2]$, $b(bb)=[0,1]$ and $\Ell_b[1]=1=\Ell[1]$,
hence we create a new state $v_{(5)}$ and a transition $v_{(2)} \rightarrow v_{(5)}$ labelled with~$b$. 

\medskip

$\mathbf{v_{(3)}}:$
We have 
$p(aba)=[2,1]$ and $b(aba)=[-1,-1]$.
Moreover we have $p(abb)=[2,1]$ and $b(abb)=[0,-1]$.
Therefore, both $aba$ and $abb$ are not consistent with $\Ell$ and $v_{(3)}$ has no valid extension.
Due to that we remove states $v_{(3)}$ and $v_{(1)}$ (for which $v_{(3)}$ is the only successor)
from~$\aut_\Ell$.

\begin{figure}[ht]
\begin{center}

\begin{tikzpicture}
\tikzstyle{cell}=[rectangle, draw=lightgray, dotted, inner sep=2pt, minimum height=1.5cm,minimum width=1.5cm];
\tikzstyle{vector}=[rectangle, rounded corners, draw, inner sep=3pt];

\node[cell] (c00) at (0,0) {};
\node[cell] (c01) at (1.5,0) {};
\node[cell] (c02) at (3,0) {};
\node[cell] (c03) at (4.5,0) {};
\node[cell] (c04) at (6,0) {};
\node[cell] (c10) at (0,-1.5) {};
\node[cell] (c11) at (1.5,-1.5) {};
\node[cell] (c12) at (3,-1.5) {};
\node[cell] (c13) at (4.5,-1.5) {};
\node[cell] (c14) at (6,-1.5) {};
\node[cell] (c20) at (0,-3) {};
\node[cell] (c21) at (1.5,-3) {};
\node[cell] (c22) at (3,-3) {};
\node[cell] (c23) at (4.5,-3) {};
\node[cell] (c24) at (6,-3) {};
\node[cell] (c30) at (0,-4.5) {};
\node[cell] (c31) at (1.5,-4.5) {};
\node[cell] (c32) at (3,-4.5) {};
\node[cell] (c33) at (4.5,-4.5) {};
\node[cell] (c34) at (6,-4.5) {};

\node[vector] (v00) at (0,0) {\scriptsize $[0,0]$};
\draw (0,0.35) node {\tiny $\mathbf{(0)}$};

\node[vector] (v01) at (1.5,0) {\scriptsize $[0,0]$};
\draw (1.5,0.35	) node {\tiny $\mathbf{(2)}$};

\node[vector] (v02) at (3,0) {\scriptsize $[0,1]$};
\draw (3,0.35) node {\tiny $\mathbf{(5)}$};

\node[vector,dashed] (v10) at (0,-1.5) {\scriptsize $[1,0]$};
\draw (0,-1.85) node {\tiny $\mathbf{(1)}$};

\node[vector] (v11a) at (1.5,-1.2) {\scriptsize $[1,0]$};
\draw (0.95,-1.2) node {\tiny $\mathbf{(4)}$};
\node[dashed,vector] (v11b) at (1.5,-1.8) {\scriptsize $[1,1]$};
\draw (2.05,-1.8) node {\tiny $\mathbf{(3)}$};
\node[vector] (v12) at (3,-1.5) {\scriptsize $[0,0]$};
\draw (3,-1.85) node {\tiny $\mathbf{(6)}$};
\node[vector] (v13) at (4.5,-1.5) {\scriptsize $[0,0]$};
\draw (4.5,-1.15) node {\tiny $\mathbf{(7)}$};
\node[vector] (v14) at (6,-1.5) {\scriptsize $[0,1]$};
\draw (6,-1.15) node {\tiny $\mathbf{(8)}$};
\node[vector] (v24) at (6,-3) {\scriptsize $[0,1]$};
\draw (6.55,-3) node {\tiny $\mathbf{(9)}$};
\node[vector,double=white] (v34) at (6,-4.5) {\scriptsize $[0,0]$};
\draw (6,-4.87) node {\tiny $\mathbf{(10)}$};

\draw[-latex] (v00) to [] node [auto,inner sep=1pt] {\scriptsize $b$} (v01);
\draw[-latex,dashed] (v00) to [] node [auto,inner sep=1pt] {\scriptsize $a$} (v10);
\draw[-latex] (v01) to [] node [auto,inner sep=1pt] {\scriptsize $b$} (v02);
\draw[-latex] (v01) to [] node [auto,inner sep=1pt] {\scriptsize $a$} (v11a);
\draw[-latex,dashed] (v10) to [] node [auto,inner sep=1pt] {\scriptsize $b$} (v11b);
\draw[-latex] (v11a) to [] node [auto,inner sep=1pt] {\scriptsize $b$} (v12);
\draw[-latex] (v02) to [] node [auto,inner sep=1pt] {\scriptsize $a$} (v12);
\draw[-latex] (v12) to [] node [auto,inner sep=1pt] {\scriptsize $b$} (v13);
\draw[-latex] (v13) to [] node [auto,inner sep=1pt] {\scriptsize $b$} (v14);
\draw[-latex] (v14) to [] node [auto,inner sep=1pt] {\scriptsize $a$} (v24);
\draw[-latex] (v24) to [] node [auto,inner sep=1pt] {\scriptsize $a$} (v34);

\node (vi) at (-1.3,1.3) {};
\draw[-latex] (vi) to [] node [auto,inner sep=1pt] {} (v00);

\node (lb) at (-0.5,1) {\footnotesize $\Ell_b$};
\node (lb1) at (0.2,1) {\footnotesize $-1$};
\node (lb1) at (1.5,1) {\footnotesize $1$};
\node (lb1) at (3,1) {\footnotesize $0$};
\node (lb1) at (4.5,1) {\footnotesize $2$};
\node (lb1) at (6,1) {\footnotesize $-2$};

\node (la) at (-1.1,0.65) {\footnotesize $\Ell_a$};
\node (l1) at (-1.19,0) {\footnotesize $-1$};
\node (l1) at (-1.1,-1.5) {\footnotesize $0$};
\node (l1) at (-1.1,-3) {\footnotesize $3$};
\node (l1) at (-1.19,-4.5) {\footnotesize $-2$};

\draw (1.1,1.75) node {\footnotesize $\Ell$};
\draw (1.75,1.75) node {\footnotesize $\mathbf{-1}$};
\draw (-1.75,-1.75) node {\footnotesize $\hp_0$};
\draw[dashed,lightgray] (-1.65,-1.65) to (1.65,1.65);

\draw (3.25,1.75) node {\footnotesize $\mathbf{1}$};
\draw (-1.75,-3.25) node {\footnotesize $\hp_1$};
\draw[dashed,lightgray] (-1.65,-3.15) to (3.15,1.65);

\draw (4.75,1.75) node {\footnotesize $\mathbf{4}$};
\draw (-1.75,-4.75) node {\footnotesize $\hp_2$};
\draw[dashed,lightgray] (-1.65,-4.65) to (4.65,1.65);

\draw (6.25,1.75) node {\footnotesize $\mathbf{0}$};
\draw (-1.75,-6.25) node {\footnotesize $\hp_3$};
\draw[dashed,lightgray] (-1.65,-6.15) to (6.15,1.65);

\draw (7.75,1.75) node {\footnotesize $\mathbf{2}$};
\draw (-0.25,-6.25) node {\footnotesize $\hp_4$};
\draw[dashed,lightgray] (-0.15,-6.15) to (7.65,1.65);

\draw (7.75,0.25) node {\footnotesize $\mathbf{1}$};
\draw (1.25,-6.25) node {\footnotesize $\hp_5$};
\draw[dashed,lightgray] (1.35,-6.15) to (7.65,0.15);

\draw (7.75,-1.25) node {\footnotesize $\mathbf{3}$};
\draw (2.75,-6.25) node {\footnotesize $\hp_6$};
\draw[dashed,lightgray] (2.85,-6.15) to (7.65,-1.35);

\draw (7.75,-2.75) node {\footnotesize $\mathbf{-2}$};
\draw (4.25,-6.25) node {\footnotesize $\hp_7$};
\draw[dashed,lightgray] (4.15,-6.15) to (7.65,-2.85);

\end{tikzpicture}

\end{center}

\caption{The finite deterministic automaton $\aut_\Ell$ constructed for $\LCP$ array $\Ell=[1,4,0,2,1,3]$.
Squares containing nodes represent Parikh vectors, i.e. the square in $i$-th row
and $j$-th column represents the vector $p=(i-1,j-1)$.
$\hp_k$ denotes the sets of nodes representing all strings of length~$k$ consistent with~$\Ell$.
The temporarily created states with no valid extension, which are not included in $\aut_\Ell$, were marked with dashed lines.
We have two possible paths leading from the initial to the final state corresponding to
strings $babbbaa$ and $bbabbaa$.}
\label{fig:bin}
\end{figure}
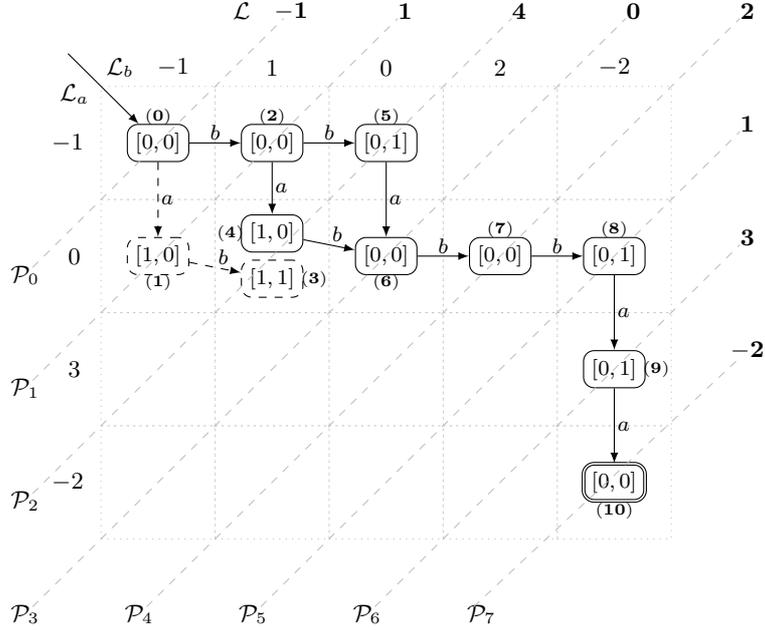

%\medskip

$\mathbf{v_{(4)}}:$
We have $p(baa)=[2,1]$ and $b(baa)=[-1,-1]$, hence $baa$ is not consistent with $\Ell$.
On the other hand, we have $p(bab)=[1,2]$, $b(bab)=[0,0]$ and $\Ell_b[1]=1=\Ell[1]$,
hence we create a new state $v_{(6)}$ and a transition $v_{(4)} \rightarrow v_{(6)}$ labelled with $b$.

\medskip

$\mathbf{v_{(5)}}:$
We have $p(bbb)=[1,3]$ and $b(bbb)=[0,-1]$, hence $bbb$ is not consistent with $\Ell$.
On the other hand, we have $p(bba)=[1,2]$, $b(bba)=[0,0]$ and it is the firs occurrence of $a$,
hence we add a transition $v_{(5)} \rightarrow v_{(6)}$ labelled with $a$.

\medskip

$\mathbf{v_{(6)}}:$
Notice that $v_{(6)}$ represents the set of strings $S_{v_{(6)}}=\{bab,\,bba\}$.
We have $p(S_{v_{(6)}}\cdot a)=[2,2]$ and $b(S_{v_{(6)}}\cdot a)=[-1,0]$,
hence neither $baba$ nor $bbaa$ is consistent with $\Ell$.
On the other hand, we have $p(S_{v_{(6)}}\cdot b)=[1,3]$, $b(S_{v_{(6)}}\cdot b)=[0,0]$ and $\Ell_b[2]=0=\Ell[3]$,
hence we create a new state $v_{(7)}$ and a transition $v_{(6)} \rightarrow v_{(7)}$ labelled with $b$.

\medskip

$\mathbf{v_{(7)}}:$
Notice that $v_{(7)}$ represents the set of strings $S_{v_{(7)}}=\{babb,\,bbab\}$.
We have $p(S_{v_{(7)}}\cdot a)=[2,3]$ and $b(S_{v_{(7)}}\cdot a)=[-1,-1]$,
hence neither $babba$ nor $bbaba$ is consistent with $\Ell$.
On the other hand, we have $p(S_{v_{(7)}}\cdot b)=[1,4]$, $b(S_{v_{(7)}}\cdot b)=[0,1]$ and $\Ell_b[3]=2=\Ell[4]$,
hence we create a new state $v_{(8)}$ and a transition $v_{(7)} \rightarrow v_{(8)}$ labelled with $b$.

\medskip

$\mathbf{v_{(8)}}:$
Note that for each $s\in S_{v_{(8)}}$ we have $|s|_b=|\Ell|_b$, hence $s\cdot b$ is not consistent with~$\Ell$.
On the other hand, we have $p(S_{v_{(8)}}\cdot a)=[2,4]$, $b(S_{v_{(b)}}\cdot b)=[0,1]$ and $\Ell_a[1]=0=\Ell[3]$,
hence we create a new state $v_{(9)}$ and a transition $v_{(8)} \rightarrow v_{(9)}$ labelled with $a$.

\medskip

$\mathbf{v_{(9)}}:$
Similarly as in the case of $v_{(8)}$,
for each $s\in S_{v_{(9)}}$ we have $|s|_b=|\Ell|_b$, hence $s\cdot b$ is not consistent with~$\Ell$.
On the other hand, we have $p(S_{v_{(9)}}\cdot a)=[3,4]$, $b(S_{v_{(b)}}\cdot b)=[0,0]$ and $\Ell_a[2]=3=\Ell[6]$,
hence we create a new state $v_{(10)}$ and a transition $v_{(9)} \rightarrow v_{(10)}$ labelled with $a$.

\medskip

Finally, after computing $\aut_\Ell$ and backtracking all the paths from $v_{(0)}$ to $v_{(10)}$
we obtain a set $W_\Ell=\{babbbaa,\,bbabbaa\}$ (compare this to the result in Example~\ref{ex:algorithm-example}).

\end{example}

\begin{example}
\label{ex:dp-tern}

Let us consider an $LCP$ array $\Ell[1..6)=[1,\,0,\,1,\,0,\,2]$. 
%Using Algorithm~\ref{alg:cssila} 
Using the CSSILA algorithm,
we will try to reconstruct a 
set of strings $W_\Ell\subseteq\{a,b,c\}^6$, such that for each $w\in W_\Ell$ we have $\LCP_{\ibwt(w)}=\Ell$.
Looking at the structure of $\Ell$ we conclude that if there exist a solution,
it must satisfy $|w|_a=|w|_b=|w|_c=2$ for any $w\in W_\Ell$.

First we transform $\Ell$ into
$\Ell[0..7) = [-1,\,1,\,0,\,1,\,0,\,2,\,-2]$
and compute character sequences
$\Ell_a[0..3) = [-1,\,0,\,-2]$,
$\Ell_b[0..3) = [-1,\,0,\,-2]$
and
$\Ell_c[0..3) = [-1,\,1,\,-2]$.
%We have $|\Sigma|=3$, hence
%the matrix $M$ is three dimensional.
The structure of the complete automaton $\aut_\Ell$ is depicted on Figure~\ref{fig:bin}.

%We start the reconstruction with creating the initial node $v_{(0)}$ of
%the graph $G_\Ell$ in $M[0,0,0]$ -- the only cell contained in $\hp_0$.
%It is represented by the bit vector $b_{v_{(0)}}=[0,0,0]$ and corresponds the empty string~$s_{v_{(0)}}=\varepsilon$.
%Now we are ready to iterate over all hyperplanes of $M$.

We start with the automaton $\aut_\Ell$ consisting of a single initial node $v_{(0)}$ 
represented by a pair $(p_0,b_0)=([0,0,0],[0,0,0])$ and contained in the set $\hp_0$.
%
%It is represented by the bit vector $b_{v_{(0)}}=[0,0]$ and corresponds the empty string~$s_{v_{(0)}}=\varepsilon$.
Next, we are iterate over all sets $\hp_k$ for $k=0,\ldots,n-1$
and check for a possible extensions of each state $v\in \hp_k$.
In all cases below we do not consider the obvious inconsistency of $s\cdot c$ for $|s|_c=|\Ell|_c$.

\medskip

$\mathbf{v_{(0)}}$:
By \eqref{eq:ext-1} and \eqref{eq:ext-2},
$p(a)=[1,0,0]$, $b(a)=[1,0,0]$, $p(b)=[0,1,0]$, $b(b)=[0,1,0]$, $p(c)=[0,0,1]$,
$b(c)=[0,0,0]$ and those are the first occurrences of each character.
Since none of $b(a)$, $b(b)$ and $b(c)$ contain $-1$, all singleton strings are consistent with $\Ell$.
Hence we create states $v_{(1)}$, $v_{(2)}$ and $v_{(3)}$ and the transitions $v_{(0)}\rightarrow v_{(1)}$, 
$v_{(0)}\rightarrow v_{(2)}$ and $v_{(0)}\rightarrow v_{(3)}$ labelled with $a$, $b$ and $c$ respectively.

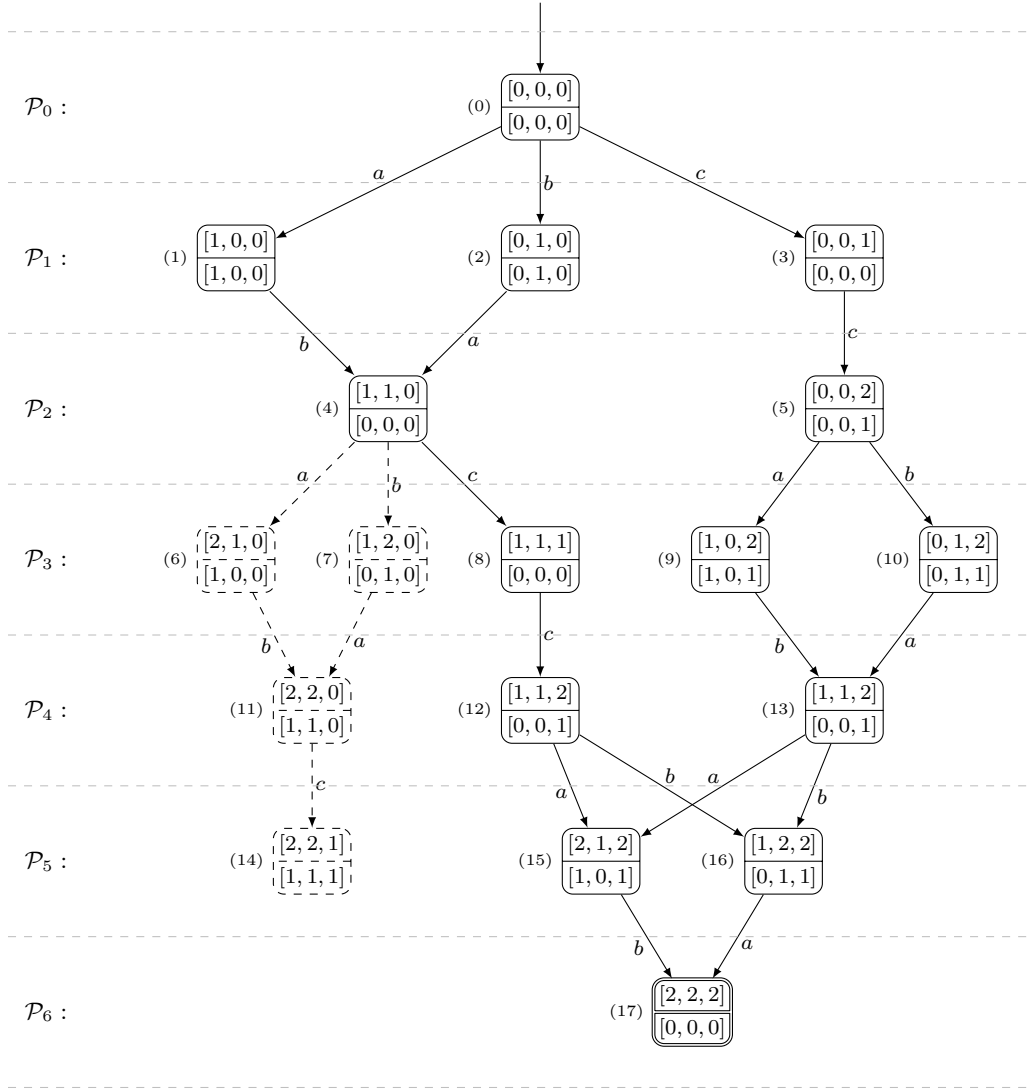
\begin{figure}[ht]
\begin{center}

\begin{tikzpicture}
    \tikzstyle{vector}=[rectangle split, rounded corners, rectangle split parts=2, draw, inner sep=2pt];

    \node[vector,label=left:{\tiny (0)}] (v0) at (0,0) {\nodepart{one} \scriptsize $[0,0,0]$ \nodepart{two} \scriptsize $[0,0,0]$};
    
    \node[vector,label=left:{\tiny (1)}] (v1) at (-4,-2) {\nodepart{one} \scriptsize $[1,0,0]$ \nodepart{two} \scriptsize $[1,0,0]$};
    \node[vector,label=left:{\tiny (2)}] (v2) at (0,-2) {\nodepart{one} \scriptsize $[0,1,0]$ \nodepart{two} \scriptsize $[0,1,0]$};
    \node[vector,label=left:{\tiny (3)}] (v3) at (4,-2) {\nodepart{one} \scriptsize $[0,0,1]$ \nodepart{two} \scriptsize $[0,0,0]$};

    \node[vector,label=left:{\tiny (4)}] (v4) at (-2,-4) {\nodepart{one} \scriptsize $[1,1,0]$ \nodepart{two} \scriptsize $[0,0,0]$};
    \node[vector,label=left:{\tiny (5)}] (v5) at (4,-4) {\nodepart{one} \scriptsize $[0,0,2]$ \nodepart{two} \scriptsize $[0,0,1]$};
    
	\node[vector,dashed,label=left:{\tiny (6)}] (v8) at (-4,-6) {\nodepart{one} \scriptsize $[2,1,0]$ \nodepart{two} \scriptsize $[1,0,0]$};    
	\node[vector,dashed,label=left:{\tiny (7)}] (v9) at (-2,-6) {\nodepart{one} \scriptsize $[1,2,0]$ \nodepart{two} \scriptsize $[0,1,0]$};
	\node[vector,label=left:{\tiny (8)}] (v10) at (0,-6) {\nodepart{one} \scriptsize $[1,1,1]$ \nodepart{two} \scriptsize $[0,0,0]$};  
	
	\node[vector,label=left:{\tiny (9)}] (v11) at (2.5,-6) {\nodepart{one} \scriptsize $[1,0,2]$ \nodepart{two} \scriptsize $[1,0,1]$};  
	\node[vector,label=left:{\tiny (10)}] (v12) at (5.5,-6) {\nodepart{one} \scriptsize $[0,1,2]$ \nodepart{two} \scriptsize $[0,1,1]$};

	\node[vector,dashed,label=left:{\tiny (11)}] (v13) at (-3,-8) {\nodepart{one} \scriptsize $[2,2,0]$ \nodepart{two} \scriptsize $[1,1,0]$};
	\node[vector,label=left:{\tiny (12)}] (v14) at (0,-8) {\nodepart{one} \scriptsize $[1,1,2]$ \nodepart{two} \scriptsize $[0,0,1]$};
	\node[vector,label=left:{\tiny (13)}] (v15) at (4,-8) {\nodepart{one} \scriptsize $[1,1,2]$ \nodepart{two} \scriptsize $[0,0,1]$};
	
	\node[vector,dashed,label=left:{\tiny (14)}] (v16) at (-3,-10) {\nodepart{one} \scriptsize $[2,2,1]$ \nodepart{two} \scriptsize $[1,1,1]$};
	
	\node[vector,label=left:{\tiny (15)}] (v17) at (0.8,-10) {\nodepart{one} \scriptsize $[2,1,2]$ \nodepart{two} \scriptsize $[1,0,1]$};
	
	\node[vector,label=left:{\tiny (16)}] (v18) at (3.2,-10) {\nodepart{one} \scriptsize $[1,2,2]$ \nodepart{two} \scriptsize $[0,1,1]$};
	
	\node[vector,double=white,label=left:{\tiny (17)}] (v19) at (2,-12) {\nodepart{one} \scriptsize $[2,2,2]$ \nodepart{two} \scriptsize $[0,0,0]$};

	\node (vi) at (0,1.5) {};
	\draw[-latex] (vi) to (v0);
	
	\draw[-latex] (v0) to [] node [auto,swap,inner sep=1pt] {\scriptsize $a$} (v1);
	\draw[-latex] (v0) to [] node [auto,inner sep=1pt] {\scriptsize $b$} (v2);
	\draw[-latex] (v0) to [] node [auto,inner sep=1pt] {\scriptsize $c$} (v3);
	
	\draw[-latex] (v1) to [] node [auto,swap,inner sep=1pt] {\scriptsize $b$} (v4);
	\draw[-latex] (v2) to [] node [auto,inner sep=1pt] {\scriptsize $a$} (v4);
	\draw[-latex] (v3) to [] node [auto,inner sep=1pt] {\scriptsize $c$} (v5);
	
	\draw[dashed,-latex] (v4) to [] node [auto,swap,inner sep=1pt] {\scriptsize $a$} (v8);
	\draw[dashed,-latex] (v4) to [] node [auto,inner sep=1pt] {\scriptsize $b$} (v9);
	\draw[-latex] (v4) to [] node [auto,inner sep=1pt] {\scriptsize $c$} (v10);
	
	\draw[-latex] (v5) to [] node [auto,swap,inner sep=1pt] {\scriptsize $a$} (v11);
	\draw[-latex] (v5) to [] node [auto,inner sep=1pt] {\scriptsize $b$} (v12);
	
	\draw[dashed,-latex] (v8) to [] node [auto,swap,inner sep=1pt] {\scriptsize $b$} (v13);
	\draw[dashed,-latex] (v9) to [] node [auto,inner sep=1pt] {\scriptsize $a$} (v13);
	
	\draw[-latex] (v10) to [] node [auto,inner sep=1pt] {\scriptsize $c$} (v14);
	
	\draw[-latex] (v11) to [] node [auto,swap,inner sep=1pt] {\scriptsize $b$} (v15);
	\draw[-latex] (v12) to [] node [auto,inner sep=1pt] {\scriptsize $a$} (v15);
	
	\draw[dashed,-latex] (v13) to [] node [auto,inner sep=1pt] {\scriptsize $c$} (v16);
	\draw[-latex] (v14) to [] node [auto,swap,inner sep=1pt] {\scriptsize $a$} (v17);
	\draw[-latex] (v15) to [] node [auto,swap,inner sep=1pt] {\scriptsize $a$} (v17);
	\draw[-latex] (v14) to [] node [auto,inner sep=1pt] {\scriptsize $b$} (v18);
	\draw[-latex] (v15) to [] node [auto,inner sep=1pt] {\scriptsize $b$} (v18);

	\draw[-latex] (v17) to [] node [auto,swap,inner sep=1pt] {\scriptsize $b$} (v19);
	\draw[-latex] (v18) to [] node [auto,inner sep=1pt] {\scriptsize $a$} (v19);	
	
	\draw[lightgray,dashed] (-7,1) to (6.5,1);
	\draw[lightgray,dashed] (-7,-1) to (6.5,-1);
	\draw[lightgray,dashed] (-7,-3) to (6.5,-3);
	\draw[lightgray,dashed] (-7,-5) to (6.5,-5);
	\draw[lightgray,dashed] (-7,-7) to (6.5,-7);
	\draw[lightgray,dashed] (-7,-9) to (6.5,-9);
	\draw[lightgray,dashed] (-7,-11) to (6.5,-11);
	\draw[lightgray,dashed] (-7,-13) to (6.5,-13);
	
	\draw (-6.5,0) node {$\hp_0:$};
	\draw (-6.5,-2) node {$\hp_1:$};
	\draw (-6.5,-4) node {$\hp_2:$};
	\draw (-6.5,-6) node {$\hp_3:$};
	\draw (-6.5,-8) node {$\hp_4:$};
	\draw (-6.5,-10) node {$\hp_5:$};
	\draw (-6.5,-12) node {$\hp_6:$};
\end{tikzpicture}

\end{center}
\caption{The finite deterministic automaton $\aut_\Ell$ constructed for $\LCP$ array $\Ell=[1,0,1,0,2]$. 
The upper part of each state~$v$ contains the Parikh vector~$p_v$, while the lower part the bit vector~$b_v$.
$\hp_k$ denotes the sets of nodes representing all strings of length~$k$ consistent with~$\Ell$.
The temporarily created states with no valid extension, which are not included in $\aut_\Ell$, were marked with dashed lines.
We have eight possible paths leading from the initial to the final state corresponding to strings
$abccab$, $abccba$, $baccab$, $baccba$, $ccabab$, $ccabba$, $ccbaab$ and $ccbaba$.}
\label{fig:tern}
\end{figure}

\medskip

$\mathbf{v_{(1)}}$:
We have
$p(aa)=[2,0,0]$, $b(aa)=[1,0,0]$ but $\Ell_a[1]=0\neq 1=\Ell[1]$,
and $p(ac)=[1,0,1]$ and $b(ac)=[0,0,-1]$,
hence there is no consistent extension of $v_{(1)}$ with $a$ and $c$.
On the other hand,
$p(ab)=[1,1,0]$, $b(ab)=[0,0,0]$ and it is the first occurrence of $b$.
Hence we create state $v_{(4)}$ and the transition $v_{(1)}\rightarrow v_{(4)}$, 
labelled with $b$.

\medskip

$\mathbf{v_{(2)}}$:
We have $p(ba)=[1,1,0]$, $b(ba)=[0,0,0]$ and it is the first occurrence of $a$.
Hence we create the transition $v_{(2)}\rightarrow v_{(4)}$, 
labelled with $a$.
On the other hand, we have
$p(bb)=[0,2,0]$, $b(bb)=[0,1,0]$, but $\Ell_b[1]=0\neq 1=\Ell[1]$,
and $p(bc)=[0,2,0]$, $b(bc)=[0,0,-1]$,
hence there is no consistent extension of $v_{(2)}$ with $b$ and $c$.

\medskip

$\mathbf{v_{(3)}}$:
We have $p(ca)=[1,0,1]$, $b(bb)=[0,0,-1]$, and $p(cb)=[0,1,1]$, $b(cb)=[0,0,-1]$,
hence there is no consistent extension of $v_{(3)}$ with $a$ and $b$.
On the other hand, we have 
$p(cc)=[0,0,2]$, $b(bb)=[0,0,1]$ and $\Ell_c[1]=1=\Ell[1]$,
hence we create the state $v_{(5)}$ and the transition $v_{(3)}\rightarrow v_{(5)}$, 
labelled with $c$.

\medskip

Summing up for $\hp_1$, $ab$, $ba$ and $cc$ are consistent with $\Ell$, while
$aa$, $ac$, $bb$, $bc$, $ca$ and $cb$  are not.

\medskip

$\mathbf{v_{(4)}}$:
We have $p(S_{v_{(4)}}\cdot a)=[1,2,0]$, $b(S_{v_{(4)}}\cdot a)=[1,0,0]$ and $\Ell_a[1]=0=\Ell[2]$,
hence we create the state $v_{(6)}$ and the transition $v_{(4)}\rightarrow v_{(6)}$, 
labelled with $a$.
We have $p(S_{v_{(4)}}\cdot b)=[1,2,0]$, $b(S_{v_{(4)}}\cdot b)=[0,1,0]$ and $\Ell_b[1]=0=\Ell[2]$,
hence we create the state $v_{(7)}$ and the transition $v_{(4)}\rightarrow v_{(7)}$, 
labelled with $b$.
We have $p(S_{v_{(4)}}\cdot c)=[1,1,1]$, $b(S_{v_{(4)}}\cdot c)=[0,0,0]$ and 
it is the first occurrence of $c$,
hence we create the state $v_{(8)}$ and the transition $v_{(4)}\rightarrow v_{(8)}$, 
labelled with $c$.

\medskip

$\mathbf{v_{(5)}}$:
We have $p(S_{v_{(5)}}\cdot a)=[1,0,2]$, $b(S_{v_{(5)}}\cdot a)=[1,0,1]$ and 
it is the firs occurrence of $a$,
hence we create the state $v_{(9)}$ and the transition $v_{(5)}\rightarrow v_{(9)}$, 
labelled with $a$.
We have $p(S_{v_{(5)}}\cdot b)=[0,1,2]$, $b(S_{v_{(5)}}\cdot b)=[0,1,1]$ and 
it is the firs occurrence of $b$,
hence we create the state $v_{(10)}$ and the transition $v_{(5)}\rightarrow v_{(10)}$, 
labelled with $b$.

\medskip

Summing up for $\hp_2$, $aba$, $abb$, $abc$, $baa$, $bab$, $bac$, $cca$ and $ccb$ are consistent with~$\Ell$.

\medskip

$\mathbf{v_{(6)}}$:
We have $p(S_{v_{(6)}}\cdot b)=[2,2,0]$, $b(S_{v_{(6)}}\cdot b)=[1,1,0]$ and 
$\Ell_b[1]=0=\Ell[2]$,
hence we create the state $v_{(11)}$ and the transition $v_{(6)}\rightarrow v_{(11)}$, 
labelled with $b$.
On the other hand, we have
$p(S_{v_{(6)}}\cdot c)=[2,1,1]$ and $b(S_{v_{(6)}}\cdot c)=[1,0,-1]$,
hence there is no consistent extension of $v_{(6)}$ with $c$.

\medskip

$\mathbf{v_{(7)}}$:
We have $p(S_{v_{(7)}}\cdot a)=[2,2,0]$, $b(S_{v_{(7)}}\cdot a)=[1,1,0]$ and 
$\Ell_a[1]=0=\Ell[2]$,
hence we create the transition $v_{(7)}\rightarrow v_{(11)}$, 
labelled with $a$.
On the other hand, we have
$p(S_{v_{(6)}}\cdot c)=[1,2,1]$ and $b(S_{v_{(7)}}\cdot c)=[0,0,-1]$,
hence there is no consistent extension of $v_{(7)}$ with $c$.

\medskip

$\mathbf{v_{(8)}}$:
We have $p(S_{v_{(8)}}\cdot c)=[1,1,2]$, $b(S_{v_{(6)}}\cdot c)=[0,0,1]$ and 
$\Ell_c[1]=1=\Ell[3]$,
hence we create the state $v_{(12)}$ and the transition $v_{(8)}\rightarrow v_{(12)}$, 
labelled with $c$.
On the other hand we have $p(S_{v_{(8)}}\cdot a)=[2,1,1]$ and $b(S_{v_{(6)}}\cdot a)=[0,0,-1]$,
and $p(S_{v_{(8)}}\cdot b)=[1,2,1]$, $b(S_{v_{(6)}}\cdot b)=[0,0,-1]$,
hence there is no consistent extension of $v_{(8)}$ with $a$ and $b$.

\medskip

$\mathbf{v_{(9)}}$:
We have $p(S_{v_{(9)}}\cdot b)=[1,1,2]$, $b(S_{v_{(9)}}\cdot b)=[0,0,1]$ and 
and it is the firs occurrence of $b$,
hence we create the state $v_{(13)}$ and the transition $v_{(9)}\rightarrow v_{(13)}$, 
labelled with $b$.
On the other hand, we have $p(S_{v_{(9)}}\cdot a)=[2,0,2]$, $b(S_{v_{(9)}}\cdot a)=[1,0,1]$,
but $\Ell_a[1]=0\neq 1=\Ell[3]$,
hence there is no consistent extension of $v_{(9)}$ with $a$.

\medskip

$\mathbf{v_{(10)}}$:
We have $p(S_{v_{(10)}}\cdot a)=[1,1,2]$, $b(S_{v_{(10)}}\cdot a)=[0,0,1]$ and 
and it is the firs occurrence of $a$,
hence we create the transition $v_{(10)}\rightarrow v_{(13)}$, 
labelled with $a$.
On the other hand, we have $p(S_{v_{(10)}}\cdot b)=[0,2,2]$, $b(S_{v_{(9)}}\cdot a)=[0,1,1]$,
but $\Ell_b[1]=0\neq 1=\Ell[3]$,
hence there is no consistent extension of $v_{(10)}$ with $b$.

\medskip

Summing up for $\hp_3$, $abab$, $abba$, $abcc$, $baab$, $baba$, $bacc$, $ccab$ and $ccba$
are consistent with $\Ell$, while
$abac$, $abbc$, $abca$, $abcb$, $baac$, $babc$, $baca$, $bacb$, $ccaa$ and $ccbb$
are not.

\medskip

$\mathbf{v_{(11)}}$:
We have $p(S_{v_{(11)}}\cdot c)=[2,2,1]$, $b(S_{v_{(11)}}\cdot c)=[1,1,1]$ and 
it is the firs occurrence of $c$,
hence we create the state $v_{(14)}$ and the transition $v_{(11)}\rightarrow v_{(12)}$, 
labelled with $c$.

\medskip

$\mathbf{v_{(12)}}$:
We have $p(S_{v_{(12)}}\cdot a)=[2,1,2]$, $b(S_{v_{(12)}}\cdot a)=[1,0,1]$ and 
$\Ell_a[1]=1=\Ell[4]$,
hence we create the state $v_{(15)}$ and the transition $v_{(12)}\rightarrow v_{(15)}$, 
labelled with $a$.
Similarly, we have $p(S_{v_{(12)}}\cdot b)=[1,2,2]$, $b(S_{v_{(12)}}\cdot b)=[0,1,1]$ and 
$\Ell_b[1]=1=\Ell[4]$,
hence we create the state $v_{(16)}$ and the transition $v_{(12)}\rightarrow v_{(16)}$, 
labelled with $b$.

\medskip

$\mathbf{v_{(13)}}$:
We have $p(S_{v_{(13)}}\cdot a)=[2,1,2]$, $b(S_{v_{(13)}}\cdot a)=[1,0,1]$ and 
$\Ell_a[1]=1=\Ell[4]$,
hence we  the transition $v_{(13)}\rightarrow v_{(15)}$, 
labelled with $a$.
Similarly, we have $p(S_{v_{(13)}}\cdot b)=[1,2,2]$, $b(S_{v_{(13)}}\cdot b)=[0,1,1]$ and 
$\Ell_b[1]=1=\Ell[4]$,
hence we create the transition $v_{(12)}\rightarrow v_{(16)}$, 
labelled with $b$.

\medskip

Summing up for $\hp_4$,
$ababc$, $abbac$, $abcca$, $abccb$, $baabc$, $babac$, $bacca$, $baccb$,
$ccaba$, $ccabb$, $ccbaa$ and $ccbab$
are consistent with $\Ell$.

$\mathbf{v_{(14)}}$:
We have
$p(S_{v_{(14)}}\cdot c)=[2,2,2]$, $b(S_{v_{(11)}}\cdot c)=[0,0,0]$, but $\Ell_c[1]=1\neq 2=\Ell[5]$,
hence there is no consistent extension of $v_{(14)}$ with $c$.

Since $v_{(14)}$ has no extension and is not a part of the solution, it should be removed from $\aut_\Ell$.
This implies also removing $v_{(11)}$, which has $v_{(14)}$ as its only extension,
and further removing $v_{(6)}$ and $v_{(7)}$ both having $v_{(11)}$ as their only extension.

\medskip

$\mathbf{v_{(15)}}$:
We have $p(S_{v_{(15)}}\cdot b)=[2,2,2]$, $b(S_{v_{(15)}}\cdot b)=[0,0,0]$ and 
$\Ell_b[1]=1=\Ell[4]$,
hence we create the state $v_{(17)}$ and the transition $v_{(15)}\rightarrow v_{(17)}$, 
labelled with $b$.

\medskip

$\mathbf{v_{(16)}}$:
We have $p(S_{v_{(16)}}\cdot a)=[2,2,2]$, $b(S_{v_{(16)}}\cdot a)=[0,0,0]$ and 
$\Ell_a[1]=1=\Ell[4]$,
hence we create the transition $v_{(16)}\rightarrow v_{(17)}$, 
labelled with $a$.

\medskip

Summing up,
$abccab$, $abccba$, $baccab$, $baccba$, $ccabab$, $ccabba$,
$ccbaab$ and $ccbaba$ are consistent with $\Ell$, while
$ababcc$, $abbacc$, $baabcc$ and $babacc$
are not.

\medskip

Finally, after computing $\aut_\Ell$ we can recover the set
$$W_\Ell=\{abccab,\, abccba,\, baccab,\, baccba,\, ccabab,\, ccabba,\,
ccbaab,\, ccbaba\}$$
by backtracking all paths leading from the initial node $v_{(0)}$ to
the final node $v_{(17)}$.

Note that to list the strings which are inconsistent with $\Ell$ for each hyperplane $\hp_k$
we consider only those having prefixes of length $k-1$, which are consistent with $\Ell$.
If we skip this requirement, we can produce more examples of strings inconsistent with $\Ell$.

\end{example}

%\begin{example}
%\label{ex:dp-quad}
%
%Example for four letters. Graph containing only node having extensions.
%\end{example}

%%------------------------------------------------------------

\clearpage

\section{BCSILA to CCEC: An Example}

\begin{figure}[ht]
\begin{center}
 \begin{tikzpicture}
 \tikzstyle{every node}=[circle,draw,minimum size=.5cm,inner sep=1pt];
    \node (v12) at (0.3,-0.2) {12};
    \node (v7) at (1,0.4) {7};
    \node (v6) at (1,-0.75) {6}; 
    \node (v13) at (2.5,0.4) {13};
    \node (v11) at (2.5,-0.75) {11};
    \node (v10) at (4,0.4) {10};
    \node (v5) at (4,-0.75) {5};
    
    \node (v1) at (5.5,-0.75) {1};
    \node (v3) at (5.5,0.4) {3};
    \node (v0) at (7,-0.75) {0};
    \node (v8) at (7,0.4) {8};
    
    \node (v4) at (3.3,-1.75) {4};
    \node (v9) at (4.7,-1.75) {9};
    \node (v2) at (4,-2.75) {2};

    \tikzstyle{every node}=[auto];
	\draw (v12) [->, >=latex] to [out=70,in=200] node [auto, inner sep=1pt] {\small $b_5$} (v7);
	\draw (v7) [->, >=latex] to [out=20,in=160] node [auto, inner sep=1pt] {\small $a_8$} (v13); 
	\draw (v13) [->, >=latex] to [out=-70,in=70] node [auto, inner sep=1pt] {\small $b_6$} (v11);
	\draw (v11) [->, >=latex] to [out=200,in=-20] node [auto, inner sep=1pt] {\small $b_4$} (v6);
	\draw (v6) [->, >=latex] to [out=170,in=-70] node [auto, inner sep=1pt] {\small $a_7$} (v12);

	\draw (v10) [->, >=latex] to [out=-70,in=70] node [auto, inner sep=1pt] {\small $b_3$} (v5);
	\draw (v5) [->, >=latex] to [out=110,in=-110] node [auto, inner sep=1pt] {\small $a_6$} (v10);
		
	\draw (v1) [->, >=latex] to [out=110,in=-110] node [auto, inner sep=1pt] {\small $a_2$} (v3);
	\draw (v3) [->, >=latex] to [out=20,in=160] node [auto, inner sep=1pt] {\small $a_4$} (v8);
	\draw (v8) [->, >=latex] to [out=-70,in=70] node [auto, inner sep=1pt] {\small $b_1$} (v0);
	\draw (v0) [->, >=latex] to [out=200,in=-20] node [auto, inner sep=1pt] {\small $a_1$} (v1);
	
	\draw (v4) [->, >=latex] to [out=20,in=160] node [auto, inner sep=1pt] {\small $a_5$} (v9);
	\draw (v9) [->, >=latex] to [out=-100,in=30] node [auto, inner sep=1pt] {\small $b_2$} (v2);
	\draw (v2) [->, >=latex] to [out=150,in=-70] node [auto, inner sep=1pt] {\small $a_3$} (v4);
    
    \draw (v2) [->, >=latex, dashed] to [out=180,in=-90] node [auto, inner sep=1pt] {\small $a_3$} (v6);
    \draw (v0) [->, >=latex, dashed] to [out=-90,in=0] node [auto, inner sep=1pt] {\small $a_1$} (v2);
    \draw (v11) [->, >=latex, dashed] to [out=-80,in=150] node [auto, swap, inner sep=1pt] {\small $b_4$} (v4);
    \draw (v10) [->, >=latex, dashed] to [out=20,in=160] node [auto, inner sep=1pt] {\small $b_3$} (v3);
    \draw (v13) [->, >=latex, dashed] to [out=20,in=160] node [auto, inner sep=1pt] {\small $b_6$} (v10);
    \draw (v5) [->, >=latex, dashed] to [out=200,in=-20] node [auto, inner sep=1pt] {\small $a_6$} (v11);
    \draw (v9) [->, >=latex, dashed] to [out=30,in=-100] node [auto, swap, inner sep=1pt] {\small $b_2$} (v1);
    \draw (v1) [->, >=latex, dashed] to [out=200,in=-20] node [auto, inner sep=1pt] {\small $a_2$} (v5);

    \node (l1) at (4,-3.5) {(a)};    
\end{tikzpicture}
\begin{tikzpicture}
    \tikzstyle{every node}=[circle split,draw,inner sep=1.5pt];
    \node (v1) at (0,0) {1 \nodepart{lower} 2};
    \node (v2) at (1.5,0) {3 \nodepart{lower} 5};
    \node (v3) at (3,0) {11 \nodepart{lower} 10};
    \node (v4) at (4.5,0) {6 \nodepart{lower} 4};

    \tikzstyle{every node}=[auto];
    \draw[->,>=latex] (v1.north east) -- node {\small $a_2$} (v2.north west);
    \draw[->,>=latex] (v2.north east) .. controls (2.5,1) and (-1,1) .. 
    node[above] {\small $a_4b_1a_1$} (v1.north west); 
    \draw[->,>=latex] (v2.south east) -- node[above] {\small $a_6$} (v3.south west);
    \draw[->,>=latex] (v3.south east) .. controls (4,-1) and (0.5,-1) .. 
    node[above] {\small $b_3$} (v2.south west);
    \draw[->,>=latex] (v3.north east) -- node {\small $b_4$} (v4.north west);
    \draw[->,>=latex] (v4.north east) .. controls (5.5,1) and (2,1) .. 
    node[above] {\small $a_7b_5a_8b_6$} (v3.north west); 
    \draw[->,>=latex] (v1.south east) .. controls (1,-1.4) and (3.5,-1.4) .. 
    node[below] {\small $a_3$} (v4.south west); 
    \draw[->,>=latex] (v4.south east) .. controls (5.6,-2.2) and (-1,-2.2) .. 
    node[below] {\small $a_5b_2$} (v1.south west);
    
    \node (l1) at (2.25,-2.75) {(b)};
  \end{tikzpicture}

\end{center}

\vspace*{-0.3cm}

  \caption{The graphs $G_V$ (a) and $\widetilde{G}_V$ (b) for
    $V=b[ab][aabb]baa[ab]aa$, which is the BWT with swaps produced from
    the LCP array $\Ell=[2,5,1,4,3,4,2,0,3,2,5,3,1]$.  The solid
    edges in $G_V$ are the edges of $G_v$ for $v=babaabbbaaabaa$. The
    graph $\widetilde{G}_V$ is the CCEC instance derived from the
    BCSILA instance $\Ell$. In the initial state, all vertices
    are in the straight state, so that the
    cycles in $\widetilde{G}_V$ 
    correspond to the cycles in $G_v$.  The only non-singleton
    partition in $\widetilde{G}_v$ is $\{3/5,6/4\}$ corresponding to
    the only swap interval of length more than two in $V$. }

  \label{fig:BCSILA-to-CCEC}
\end{figure}
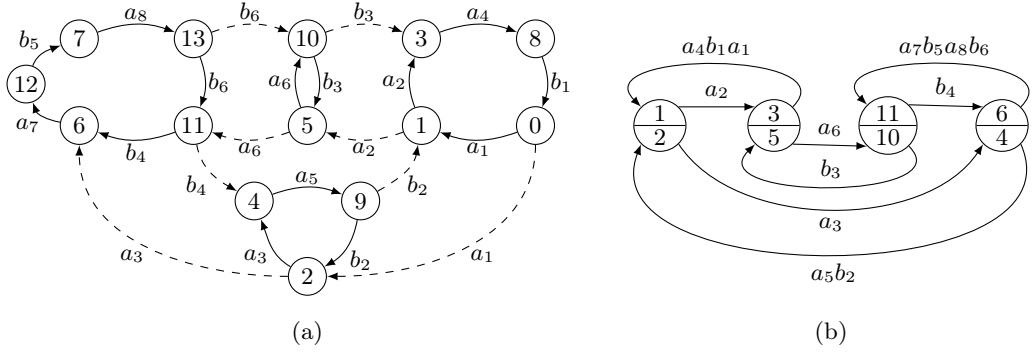

\section{Identification of Swap Cores}
\label{sec:swap-cores}

As descibed in Section~\ref{sec:NP-completeness-BCSILA}, the BCSILA
instance derived from a CCEC instance has the strings $ba^kb$, $k\in
[1..m]$, as desired swap cores. We will next systematically inspect
all other strings to identify all other (undesirable) swap cores.
Recall that an interval $[i..j)$ in $V$ is a swap interval if and only if 
the following conditions hold:
\begin{enumerate}
\item 
$[i..j)$ is an $x$-interval for a string $x$ such that either
    $occ(axa)=occ(bxb)=occ(x)/2$ or $occ(axb)=occ(bxa)=occ(x)/2$,
    where $occ(y)$ is the number of occurrences of $y$ in $W$, and
\item 
$\LCP_W[i+1..k)=\LCP_W[k+1..j)$, where $k=(i+j)/2$.
\end{enumerate}
Notice that if $occ(x)=j-i=2$, the second condition is trivially true.

Let us start with unary strings.  
First, $b$, $bb$ and $a^k$ for $k<m+2n$ are not swap cores
because they are preceded and succeeded by $a$ more often than by
$b$. We can also eliminate all other unary strings since they occur at
most once. We also note that any string beginning (ending) with $bb$
cannot be a swap core because it is always preceded (succeeded) by
$a$.
Let us then consider strings $x$ of the following forms:
\begin{itemize}

\item $x=ba^k$. If $k<m+2n-1$, $occ(xa)>occ(xb)$, and if
  $k\ge m+2n$, $occ(x) \le 1$. In either case, $x$ is not a swap
  core. On the other hand, $x=ba^{m+2n-1}$ is always a swap core with
  two occurrences.  

\item $x=a^kb$. This case is symmetric to the one above except we
  cannot be certain whether $x=a^{m+2n-1}b$ is a swap core or not since
  the characters following the two occurrences of $x$ are not fully
  determined. However, we count $x$ as a potential swap core.

\item $x=a^kba^k$ and $x=a^kbba^k$. If $k>m$, we have $occ(x)=0$, and
  if $k<m$, we have $occ(ax)>occ(bx)$. If $k=m$, then $x$ is a swap
  core and $occ(x)=2$.

% Also, if $k=m$, we have
%   $occ(ax)>occ(bx)$ but now this relies on the fact that the partition
%   numbered $m$ is non-singleton. In all cases, $x$ is not a swap core.

\item $x=a^kba^h$ and $x=a^kbba^h$ for $k<h$.  If $k>m$, we have
  $occ(x)=0$ and if $h\le m$, we have $occ(ax)>occ(bx)$.  If $k\le m <
  h$, $x$ is obviously not a swap core if $occ(x) < 2$ but also not if
  $occ(x) > 2$ because then we must have $occ(xa)>occ(xb)$.  On the
  other hand, if $k\le m < h$ and $occ(x) = 2$, then $x$ might be a
  swap core.

\item $x=a^kba^h$ and $x=a^kbba^h$ for $k>h$. This is symmetric to the
  case above.

\item $x=ba^kba^h$, $x=ba^kbba^h$, $x=a^hba^kb$ and $x=a^hbba^kb$. If
  $k>m$, $occ(x)\le 1$. If $k\le m$, every occurrence of $x$ is either
  preceded (the first two cases) or succeeded (the latter two cases)
  by the same character.  Thus $x$ is never a swap core.

\item $x=a^kba^iba^h$ and $x=a^kbba^ibba^h$ for
  $i\in[1..m]$. Obviously, $x$ is not a swap core if $occ(x) < 2$ but also
  not if $occ(x) > 2$ because then $occ(xa)>occ(xb)$. If $occ(x) = 2$
  then $x$ may or may not be a swap core.

\end{itemize}
Any string not mentioned above either does not occur at all or
contains a substring of the form $ba^kb$ for $k>m$ and occurs
once. 

Notice that each of the potential extra swap cores has exactly two
occurences.

\end{document}